\newcommand{\bfalpha}{{\boldsymbol{\alpha}}}
\newcommand{\calA}{\mathcal{A}}
\newcommand{\calH}{\mathcal{H}}
\newcommand{\calS}{\mathcal{S}}
\newcommand{\dd}{\mathrm{d}}
\newcommand{\ee}{\mathrm{e}}
\newcommand{\UGL}{\Omega}
\newcommand{\barsigma}{\bar\sigma}
\newcommand{\tildephi}{\tilde\varphi}
\newcommand{\WO}{\overline\Omega}
\newcommand{\baromega}{\overline{\omega}}
\newcommand{\defeq}{:=}
\newcommand{\Heff}{H_{\mathrm{eff}}}
\newtheorem{prop}{Proposition}
\newtheorem{thm}[prop]{Theorem}
\newtheorem{cor}[prop]{Corollary}
\newtheorem{lem}[prop]{Lemma}
\newtheorem{dfn}[prop]{Definition}
\newenvironment{proof}{\noindent{\em
Proof.\/}}{\hfill$\Box$\par\vspace{.2cm}}
\renewcommand{\|}{
\setlength{\unitlength}{3pt}
\psset{unit=3pt}
\psset{runit=2pt}
\psset{linewidth=0.2}
\begin{pspicture}(0,.5)(2.5,4)
\psline(1,0)(1,3)
\end{pspicture}}
\newcommand{\Y}{
\setlength{\unitlength}{3pt}
\psset{unit=3pt}
\psset{runit=2pt}
\psset{linewidth=0.2}
\begin{pspicture}(-1,.5)(3,4)
\psline(1,0)(1,2)
\psline(1,2)(0,3)
\psline(1,2)(2,3)
\end{pspicture}}
\newcommand\deuxun{
\setlength{\unitlength}{3pt}
\psset{unit=3pt}
\psset{runit=2pt}
\psset{linewidth=0.2}
\begin{pspicture}(0,.5)(5,5)
\psline(3,0)(3,2)
\psline(3,2)(1,4)
\psline(3,2)(4,3)
\psline(2,3)(3,4)
\end{pspicture}}
\newcommand\deuxdeux{
\setlength{\unitlength}{3pt}
\psset{unit=3pt}
\psset{runit=2pt}
\psset{linewidth=0.2}
\begin{pspicture}(1,.5)(6,5)
\psline(3,0)(3,2)
\psline(3,2)(5,4)
\psline(3,2)(2,3)
\psline(4,3)(3,4)
\end{pspicture}}
\newcommand\troisun{
\setlength{\unitlength}{3pt}
\psset{unit=3pt}
\psset{runit=2pt}
\psset{linewidth=0.2}
\begin{pspicture}(-1,.5)(5,6)
\psline(3,0)(3,2)
\psline(3,2)(0,5)
\psline(3,2)(4,3)
\psline(2,3)(3,4)
\psline(1,4)(2,5)
\end{pspicture}}
\newcommand\troisdeux{
\setlength{\unitlength}{3pt}
\psset{unit=3pt}
\psset{runit=2pt}
\psset{linewidth=0.2}
\begin{pspicture}(0,.5)(5,6)
\psline(3,0)(3,2)
\psline(3,2)(1,4)
\psline(3,2)(4,3)
\psline(2,3)(4,5)
\psline(3,4)(2,5)
\end{pspicture}}
\newcommand\troistrois{
\setlength{\unitlength}{3pt}
\psset{unit=3pt}
\psset{runit=2pt}
\psset{linewidth=0.2}
\begin{pspicture}(-0.5,.5)(6.5,6)
\psline(3,0)(3,2)
\psline(3,2)(0.5,4.5)
\psline(1.5,3.5)(2.5,4.5)
\psline(3,2)(5.5,4.5)
\psline(4.5,3.5)(3.5,4.5)
\end{pspicture}}
\newcommand\troisquatre{
\setlength{\unitlength}{3pt}
\psset{unit=3pt}
\psset{runit=2pt}
\psset{linewidth=0.2}
\begin{pspicture}(1,.5)(6,6)
\psline(3,0)(3,2)
\psline(3,2)(5,4)
\psline(3,2)(2,3)
\psline(4,3)(2,5)
\psline(3,4)(4,5)
\end{pspicture}}
\newcommand\troiscinq{
\setlength{\unitlength}{3pt}
\psset{unit=3pt}
\psset{runit=2pt}
\psset{linewidth=0.2}
\begin{pspicture}(1,.5)(7,6)
\psline(3,0)(3,2)
\psline(3,2)(6,5)
\psline(3,2)(2,3)
\psline(4,3)(3,4)
\psline(5,4)(4,5)
\end{pspicture}}
\begin{document}

\title{Tree expansion in time-dependent perturbation theory}
\author{Christian Brouder and {\^A}ngela Mestre}
\affiliation{Institut de Min\'eralogie et de Physique des Milieux Condens\'es,
CNRS UMR 7590, Universit\'es Paris 6 et 7, IPGP, 140 rue de Lourmel,
75015 Paris, France.}
\author{Fr\'ed\'eric Patras}
\affiliation{Laboratoire J.-A. Dieudonn\'e, CNRS UMR 6621,
Universit\'e de Nice,
Parc Valrose, 06108 Nice Cedex 02, France.
}
\date{\today}
% The correct dates will be entered by the editor
%
%
\begin{abstract}
The computational complexity of time-dependent perturbation theory is 
well-known to be largely combinatorial whatever the chosen expansion 
method and family of parameters (combinatorial sequences, Goldstone and 
other Feynman-type diagrams...). We show that a very efficient perturbative 
expansion, both for theoretical and numerical purposes, can be obtained 
through an original parametrization by trees and generalized iterated 
integrals. We emphasize above all the simplicity and naturality of the new 
approach that links perturbation theory with classical and recent results in 
enumerative and algebraic combinatorics. These tools are applied to the
adiabatic approximation and the effective Hamiltonian. We prove 
perturbatively and non-perturbatively
the convergence of Morita's generalization of the Gell-Mann and Low
wavefunction. We show that summing all the terms associated to the same
tree leads to an utter simplification where the sum is simpler than
any of its terms. Finally, we recover the time-independent equation for 
the wave operator and we give an explicit non-recursive expression for the term
corresponding to an arbitrary tree. 
\end{abstract}
\maketitle

\section{Introduction}
Effective Hamiltonians provide a way to determine the
low-energy eigenvalues of a (possibly infinite dimensional)
Hamiltonian by diagonalizing a matrix defined in a 
subspace of small dimension, called the
\emph{model space} and hereafter denoted by $M$.
Because of this appealing feature, effective Hamiltonians are used in 
nuclear, atomic, molecular, chemical and solid-state 
physics~\cite{KuoOsnes}. 

These theories are
plagued with a tremendous combinatorial complexity because
of the presence of folded diagrams (to avoid singularities of the 
adiabatic limit), partial resummations, 
subtle ``linkedness'' properties and the 
exponential growth of the number of graphs with the
order of perturbation.
This complexity has two consequences: on the one hand, 
few results are proved in the mathematical sense of the word,
on the other hand, it is difficult to see what is the underlying 
structure of the perturbative expansion that could lead to 
useful resummations and non-perturbative approximations.

To avoid these pitfalls, we take a bird's-eye view of the problem 
and consider a general time-dependent Hamiltonian
$H(t)$. This way, we disentangle the problem from the various particular 
forms that can be given to the Hamiltonian and which have lead in the past 
to various perturbative expansions. 
To be precise, take the example of fermions in molecular systems. 
The Coulomb interaction between the electrons (say $V$) can be
viewed as a perturbation of a ``free Hamiltonian'' modeling the interaction
with the nuclei (in the Born-Oppenheimer approximation). One can take
advantage of the particular form of $V$ (which is a linear combination of
products of two creation and two annihilation operators in the second 
quantization picture) to represent the perturbative expansions using a 
given family of Goldstone diagrams (see e.g. ref.~\onlinecite{Morita} 
for such a family). 
However, the general results on perturbative
expansions (such as the convergence of the time-dependent wave operator) 
do not depend on such a particular choice. 

Thus, we consider an Hamiltonian
$H(t)$ and we build its evolution operator $U(t,t_0)$, which is 
the solution of the Schr\"odinger equation
(in units $\hbar=1$)
\begin{eqnarray}
\imath\frac{\partial U(t,t_0)}{\partial t} &=&  H(t) U(t,t_0),
\label{dUdt}
\end{eqnarray}
with the boundary condition $U(t_0,t_0)=1$.
In perturbation theory, 
$H(t) \defeq \ee^{-\epsilon |t|} \ee^{\imath H_0 t} V \ee^{-\imath H_0t}$
is the adiabatically switched interaction Hamiltonian in the 
interaction picture (here $H_0$ and $V$ stand respectively for 
the ``free'' and interaction terms of the initial Hamiltonian)
and singularities show up in the adiabatic limit
($t_0\to -\infty$ and $\epsilon\to0$).
Morita discovered~\cite{Morita} that, in this setting, 
the \emph{time-dependent wave operator} 
\begin{eqnarray*}
\UGL(t,t_0) &\defeq & U(t,t_0) P \big(P U(t,t_0) P\big)^{-1},
\end{eqnarray*}
where $P$ is the projection onto the model space $M$,
has no singularity in the adiabatic limit. Moreover,
the wave operator determines the effective Hamiltonian because
\begin{eqnarray}
\Heff &\defeq & \lim_{\epsilon\to0} P H \UGL(0,-\infty).
\label{Heff}
\end{eqnarray}
However, as we have already alluded to, the effective computation of these
operators raises several combinatorial and analytical problems that have been 
addressed in a long series of articles 
(several of which will be referred to in the present article).

In the first sections of the paper, we consider 
a general time-dependent Hamiltonian $H(t)$ 
(not necessarily in the interaction picture).
In this broader setting, Jolicard~\cite{Jolicard-89} found that
the time-dependent wave operator 
provides also a powerful description of
the evolution of quantum
systems (see ref.~\onlinecite{Jolicard-03} for applications).
Then, we derive three (rigorously proven) series expansions of the
wave operator. The first one is classical and can be physically
interpreted as the replacement of causality
(i.e. the Heaviside step function $\theta(t-t')$)
by a ``propagator'' 
$\theta_P(t-t')\defeq\theta(t-t') Q - \theta(t'-t) P$,
where $Q=1-P$. This ``propagator'' is causal out of the
model space ($\theta_P(t-t')=0$ for $t<t'$ on the image of $Q$) and anticausal 
on it, like the Feynman propagator of quantum field theory~\cite{Pauli}.
However, this sum of causal and anticausal orderings
is cumbersome to use in practice. A second series
expansion is obtained by writing the wave operator
as a sum of integrals over all possible time orderings
of the Hamiltonians $H(t_i)$ (see Sect.~\ref{permutation-sect}). This expansion,
parametrized by all the permutations (or equivalent families), is
used in many-body theory and gives rise to a large
number of complicated terms. 
The third expansion is obtained by noticing that some
time orderings can be added to give simpler expressions.
This series is naturally indexed by trees and is the
main new tool developed in the present paper.
Among others, we derive a very simple recurrence relation
for the terms of the series.
We also show that the very structure of the corresponding
generalized iterated integrals showing up in the expansion 
is interesting on its own. These integrals carry naturally a rich algebraic 
structure that is connected to several recent results in the field of combinatorial
Hopf algebras and noncommutative symmetric functions. The corresponding algebraic results 
that point out in the direction of the existence of a specific Lie theory for effective Hamiltonians (generalizing the usual Lie theory) are gathered 
in an Appendix.

In the last sections of the paper, we restrict $H(t)$ to
the interaction picture and we consider the 
adiabatic limit.
We first prove that the adiabatic limit exists 
non perturbatively. We show that the effective Hamiltonian
defined by eq.~(\ref{Heff}) has the expected properties.
Then, we expand the series and we
give a rigorous (but lengthy) proof that the term
corresponding to each time ordering has an adiabatic limit.
Then, we consider the series indexed by trees and
we give a short and easy proof of the existence of that limit.
Finally, we provide a direct rule to calculate the term corresponding
to a given tree and establish the connection between
the time-dependent approach and the time-independent
equations discovered by Lindgren~\cite{Lindgren74}
and Kvasni{\v{c}}ka~\cite{Kvasnicka-74}.

The existence of this series indexed by trees 
can be useful in many ways: (i) It
describes a sort of superstructure that is common
to all many-body theories without knowing the
exact form of the interaction Hamiltonian;
(ii) It considerably simplifies the manipulation of the
general term of the series by providing a powerful
recurrence relation;
(iii) It provides simple algorithms to calculate
the terms of the series;
(iv) The number of trees of order $n$, $\frac{1}{n+1}
{2n\choose n}\approx \frac{4^n}{n^{\frac{3}{2}}\sqrt\pi}$ being 
subexponential, it improves the convergence of the 
series~\cite{Gurau,Maassen};
(v) It can deal with problems where the Hamiltonian $H_0$ 
is not quadratic.
Indeed, many-body theories most often require the Hamiltonian $H_0$
to be free, i.e. to be a quadratic function of the 
fields\cite{quadratic}.
As noticed by Bulaevskii~\cite{Bulaevskii}, this is not
the good point of view for some applications. For example, in the
microscopic theory of the interaction of radiation with matter,
it is natural to take for $H_0$ the Hamiltonian describing
electrons and nuclei in Coulomb interaction~\cite{DelSole-84},
the perturbation being the interaction with the transverse 
electric field. In that case, quadratic free Hamiltonians many-body theories break down
whereas our approach is still valid.
Actually, it is precisely for that reason that we originally developed
the tree series approach; (vi) Last, but not least, the tree-theoretical
approach connects many-body theories with a large field of knowledge that
originates in the ``birth of modern bijective combinatorics'' in the 
seventies with
in particular the seminal works of Foata, Sch\"utzenberger and Viennot
\cite{Foata,Viennot-81}. See e.g. ref.~\onlinecite{Leroux} for a survey 
of the modern combinatorial theory of tree-like structures.

From the physical point of view, the tree expansion
is particularly interesting in the adiabatic limit.
Indeed, the denominator of each of its terms
is a product of $E_i^Q-E_j^P$ factors, where
$E_j^P$ is the energy of a state in the model 
space $M$ and 
$E_j^Q$ the energy of a state not belonging to $M$.
In the usual many-body expansions,
the denominators are products of
$\sum_i E_i^Q-\sum_j E_j^P$ factors, where
the sums contain various numbers of elements (corresponding therefore 
to multiple transitions between low-energy and excitated levels).
In that sense, the tree expansion
is the simplest possible because each
term is a product of single transitions between
two states.

We now list the main new results of this paper:
(i) A recursion formula that generates the simplified
terms of 
the time-dependent perturbation series (theorem~\ref{thmOmegaT});
(ii) when the interaction is adiabatically switched on,
a non-perturbative proof of the convergence of the wave operator
and a characterization of the states of the model space that
are transformed into eigenstates of $H$ by the wave operator
(theorem~\ref{Nonperthm});
(iii) a proof of the existence of the adiabatic limit
for the terms of the series expansion of the wave operator
(theorem~\ref{convpermut});
(iv) a recursive formula (lemma~\ref{lemmaneuf})
and an explicit form (theorem~\ref{thmRS})
for the general term of the time-independent
perturbation series.

\section{Time-dependent Hamiltonian and combinatorics}
\label{tdH}
We consider a time-dependent Hamiltonian $H(t)$,
which is a self-adjoint operator on a Hilbert space
$\calH $, and its evolution operator $U(t,t_0)$ defined
in eq.~(\ref{dUdt}).
Since we are interested in the combinatorial
aspects of the problem, we consider the simple case where
$H(t)$ is a strongly continuous map from
$\mathbb{R}$ into the bounded self-adjoint operators
on $\calH$~\cite{relativelybounded}.
In that case, the Picard-Dyson series
\begin{eqnarray*}
1 + \sum_{n=1}^\infty (-\imath)^n 
   \int_{t_0}^{t} \dd t_1 
   \int_{t_0}^{t_1} \dd t_2
    \dots 
   \int_{t_0}^{t_{n-1}} \dd t_n
   H(t_1)\dots H(t_n),
\end{eqnarray*}
converges in the uniform operator topology to $U(t,t_0)$
and $U(t,t_0)$ is a jointly strongly continuous
two-parameter family of unitaries on $\calH$
(see section X.12 of ref.~\onlinecite{ReedSimonII}).

Following Morita~\cite{Morita},
Jolicard~\cite{Jolicard-89}
established a connection between the evolution 
operator and the effective Hamiltonian approach by defining
\begin{eqnarray*}
\UGL(t,t_0) &\defeq & U(t,t_0) P \big(P U(t,t_0) P\big)^{-1},
\end{eqnarray*}
where $P$ is a projection operator onto $M$ and
$ \big(P U(t,t_0) P\big)^{-1}$ is the inverse of
$P U(t,t_0) P$ as a map from 
$M=P\calH$ to itself. 
This map is invertible if and only if there is
no state $|\phi\rangle$ in $M$ such that
$\langle \phi| U(t,t_0) P=0$.
This condition is similar to the one
of time-independent perturbation theory~\cite{Bloch}.
We assume from now on that the condition is satisfied and 
we define three expansions for $\Omega (t,t_0)$.

\subsection{First expression for $\UGL$}
We start by proving an elegant expression for $\UGL$,
that was stated by Michels and Suttorp~\cite{Michels2}
and Dmitriev and Solnyshkina~\cite{Dmitriev3}.
\begin{thm}
\begin{eqnarray}
\UGL(t,t_0) &=& P + Q
 \sum_{n=1}^\infty (-\imath)^n 
   \int_{t_0}^{t} \dd t_1 
   \int_{t_0}^{t} \dd t_2
    \dots 
   \int_{t_0}^{t} \dd t_n 
%\nonumber\\&&\times
   H(t_1)\theta_P(t_1-t_2) H(t_2)\dots \theta_P(t_{n-1}-t_n) H(t_n)P,
\label{MS}
\end{eqnarray}
where $Q=1-P$ and $\theta_P(t)=\theta(t)Q - \theta(-t) P$,
with $\theta$ the Heaviside step function.
\end{thm}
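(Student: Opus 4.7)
The plan is to rewrite the right-hand side $R(t,t_0)$ of~(\ref{MS}) as a product involving $U(t,t_0)$ and $(PU(t,t_0)P)^{-1}$, and to identify it with $\UGL(t,t_0)=UP(PUP)^{-1}$. The starting point is the algebraic identity
\begin{equation*}
\theta_P(t)\;=\;\theta(t)Q-\theta(-t)P\;=\;\theta(t)-P,
\end{equation*}
which follows at once from $\theta(t)+\theta(-t)=1$ and $P+Q=1$.

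Substituting $\theta_P(t_i-t_{i+1})=\theta(t_i-t_{i+1})-P$ in the $n$-th integrand and fully expanding the product of $n-1$ binomials produces a sum indexed by subsets $S\subseteq\{1,\dots,n-1\}$ recording the positions at which $-P$ was chosen. Such a subset $S=\{i_1<\cdots<i_k\}$ cuts $(t_1,\dots,t_n)$ into $k+1$ consecutive \emph{chunks} of sizes $(m_0,\dots,m_k)$, a composition of $n$ into positive parts; because the $-P$'s do not depend on the times, the integrand factorizes over chunks. Inside a chunk of size $m$, the remaining Heaviside factors restrict the integration over $[t_0,t]^m$ to the chronologically ordered simplex and produce the standard Dyson coefficient
\begin{equation*}
\Theta_m(t,t_0)\defeq\int_{t_0\leq s_m\leq\cdots\leq s_1\leq t}H(s_1)\cdots H(s_m)\,\dd s_1\cdots \dd s_m,
\end{equation*}
whose generating series is $\tilde U\defeq U-1=\sum_{m\geq 1}(-\imath)^m\Theta_m$.

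Collecting the sign $(-1)^k$ from the $k$ insertions of $-P$ together with $(-\imath)^n=\prod_j(-\imath)^{m_j}$, and summing over $n\geq 1$, $k\geq 0$, and all compositions of $n$, the triple sum factorizes into a geometric series in $\tilde U$:
\begin{equation*}
R(t,t_0)-P\;=\;Q\tilde U\sum_{k\geq 0}(-P\tilde U)^kP\;=\;Q\tilde U(1+P\tilde U)^{-1}P.
\end{equation*}
A short computation using $P\tilde U P=PUP-P$ shows that $(1+P\tilde U)\,P(PUP)^{-1}=P$, hence $(1+P\tilde U)^{-1}P=P(PUP)^{-1}$. Combined with $Q\tilde U P=QUP$ (since $QP=0$), this gives
\begin{equation*}
R(t,t_0)\;=\;P+QUP(PUP)^{-1}\;=\;P+Q\UGL(t,t_0)\;=\;\UGL(t,t_0),
\end{equation*}
the last step using $P\UGL(t,t_0)=P$, which is immediate from $\UGL=UP(PUP)^{-1}$.

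The main technical point is to justify the rearrangement of the double series and the use of the Neumann-type expansion for $(PUP)^{-1}$. This is taken care of by the uniform operator-norm convergence of the Picard--Dyson series for $U(t,t_0)$ recalled just before the theorem, combined with the standing invertibility assumption on $PUP$: one verifies the identity order by order in $H$, the overall convergence then following a posteriori from the identification $R=\UGL$ and the existence of $(PUP)^{-1}$.
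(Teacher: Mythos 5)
Your proof is correct and is essentially the paper's own argument run in reverse: the paper substitutes $\theta=P+\theta_P$ into the Dyson series for $U$ and derives the recursion $U_n=C_n+\sum_{i=1}^{n-1}C_iPU_{n-i}$, which is precisely the chunk factorization you obtain by substituting $\theta_P=\theta-P$ into the right-hand side, and your $Q\tilde U(1+P\tilde U)^{-1}P$ is the paper's $QKP$ with $K=\tilde U(1+P\tilde U)^{-1}$. The only substantive difference is that the paper's closing algebra ($UP=P+KPUP$, then $\omega\defeq P+QKP$ satisfies $UP=\omega PUP$) needs only the standing invertibility of $PUP$ on the model space, whereas your geometric resummation invokes a Neumann expansion of $(PUP)^{-1}$ whose convergence requires the spectral radius of $PUP-P$ on $M$ to be below one; your order-by-order fallback patches this at the same formal level as the paper itself, which likewise never addresses the convergence of $K=\sum_n C_n$.
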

\begin{proof}
We first rewrite the Picard-Dyson series as
$U(t,t_0)=1+\sum_n U_n(t,t_0)$ with
$U_1(t,t_0)\defeq -\imath \int_{t_0}^t \dd t_1 H(t_1)$ and, for $n>1$,
\begin{eqnarray*}
U_n(t,t_0) &\defeq &
 (-\imath)^n 
   \int_{t_0}^{t} \dd t_1 
    \dots 
   \int_{t_0}^{t} \dd t_n
   H(t_1)\theta(t_1-t_2) \dots \theta(t_{n-1}-t_n) H(t_n).
\end{eqnarray*}
Then, by using $\theta(t)+\theta(-t)=1$, we notice that
\begin{eqnarray*}
\theta(t) &=& P + \theta(t)-\theta(t)P-\theta(-t)P
  = P + \theta_P(t).
\end{eqnarray*}
Now, we replace $\theta(t)$ by the sum of
operators $P + \theta_P(t)$
in the expression for $U_n(t,t_0)$. This gives us
$2^{n-1}$ terms with various numbers of $P$ and
$\theta_P$. 
Denote by $C_n(t,t_0)$ the term with no $P$
(with the particular case $C_1(t,t_0)=U_1(t,t_0)$).
Take then any other term. There is an index $i$ such that
the first $P$ from the left occurs after $H(t_i)$. Therefore,
the integrand of this term is
\begin{eqnarray*}
H(t_1)\theta_P(t_1-t_2) \dots \theta_P(t_{i-1}-t_i) H(t_i)
P H(t_{i+1})\dots
\end{eqnarray*}
Observe that the integral over $t_1,\dots,t_i$ is
independent from the integral over $t_{i+1},\dots,t_n$.
The first integral gives $C_i(t,t_0)$, the second
integral is a term of the Picard-Dyson series for $U_{n-i}(t,t_0)$.
Thus, the sum of the $2^{n-1}$ terms yields
\begin{eqnarray*}
U_n(t,t_0) &=& C_n(t,t_0) + \sum_{i=1}^{n-1}
  C_i(t,t_0) P U_{n-i}(t,t_0).
\end{eqnarray*}
If we denote by $K(t,t_0)$ the sum of all the $C_n(t,t_0)$
with $n>0$, we obtain
$U = 1 + K + KP(U-1)$, so that
\begin{eqnarray}
UP &=& P + KPUP.
\label{KPUP}
\end{eqnarray}
The operator $K$ is called the \emph{reduced evolution
operator} by Lindgren and collaborators~\cite{Lindgren3}.
If we define $\omega \defeq P + QKP$, eq.~(\ref{KPUP}) becomes
\begin{eqnarray*}
UP &=& P + (\omega-P+PKP)PUP
= P + \omega PUP -PUP +PKPUP.
\end{eqnarray*}
This equation can be simplified by using eq.~(\ref{KPUP}) again
\begin{eqnarray*}
UP &=& P + \omega PUP -PP=\omega PUP.
\end{eqnarray*}
Thus, $\omega=\UGL$ and eq.~(\ref{MS}) is satisfied.
\end{proof}

Despite its elegance, eq.~(\ref{MS}) is not
immediately usable. To illustrate this point,
consider the third-order term
\begin{eqnarray*}
\UGL_3 &=& \imath Q
   \int_{t_0}^{t} \dd t_1 
   \int_{t_0}^{t} \dd t_2
   \int_{t_0}^{t} \dd t_3 
   H(t_1)\theta_P(t_1-t_2) H(t_2)\theta_P(t_2-t_3) H(t_3)P.
\end{eqnarray*}
If we expand $\theta_P(t)=\theta(t)Q - \theta(-t) P$,
we obtain four terms
\begin{eqnarray*}
\imath QH(t_1)QH(t_2)QH(t_3)P &&
   \mathrm{for}\,\,t_1\ge t_2\,\,\mathrm{and}\,\,t_2\ge t_3,\\
-\imath QH(t_1)QH(t_2)PH(t_3)P &&
   \mathrm{for}\,\,t_1\ge t_2\,\,\mathrm{and}\,\,t_2\le t_3,\\
-\imath QH(t_1)PH(t_2)QH(t_3)P &&
   \mathrm{for}\,\,t_1\le t_2\,\,\mathrm{and}\,\,t_2\ge t_3,\\
\imath QH(t_1)PH(t_2)PH(t_3)P &&
   \mathrm{for}\,\,t_1\le t_2\,\,\mathrm{and}\,\,t_2\le t_3.
\end{eqnarray*}
The first and last terms have integration
range $t_1\ge t_2 \ge t_3$ and 
$t_3\ge t_2 \ge t_1$, respectively and give rise to iterated integrals. 
The integration range of the second term 
is $t_1\ge t_2$ and $t_2\le t_3$.
Such an integration range is not convenient
because the relative position of $t_1$ and
$t_3$ is not specified. The integration range
has to be split into the two subranges
$t_1\ge t_3 \ge t_2$ and 
$t_3\ge t_1 \ge t_2$.
Each subrange defines now an iterated integral.
For example $t_1\ge t_3 \ge t_2$ gives
\begin{eqnarray*}
-\imath   \int_{t_0}^{t} \dd t_1 
   \int_{t_0}^{t_1} \dd t_3
   \int_{t_0}^{t_3} \dd t_2 
QH(t_1)QH(t_2)PH(t_3)P.
\end{eqnarray*}
Similarly, the integration range of the third term
($t_1\le t_2$ and $t_2\ge t_3$) is the union of
$t_2\ge t_1 \ge t_3$ and 
$t_2\ge t_3 \ge t_1$.
We see that $\Omega_3$ is sum of six iterated integrals
corresponding to the six possible orderings of
$t_1$, $t_2$ and $t_3$.

\subsection{$\UGL$ in terms of permutations}
\label{permutation-sect}
We consider again the previous example,
and we change variables to have a fixed 
integration range $s_1\ge s_2 \ge s_3$.
If we sum over all time orderings, we obtain
\begin{eqnarray}
\UGL_3 &=& \imath
\int_{t_0}^{t} \dd s_1 
   \int_{t_0}^{s_1} \dd s_2
   \int_{t_0}^{s_2} \dd s_3  
  \Big(
  QH(s_1) Q H(s_2) Q H(s_3)P
  -
  QH(s_1) Q H(s_3) P H(s_2)P
  -
  QH(s_2) Q H(s_3) P H(s_1)P
\nonumber\\&&
  -
  QH(s_2) P H(s_1) Q H(s_3)P
  -
  QH(s_3) P H(s_1) Q H(s_2)P
  +
  QH(s_3) P H(s_2) P H(s_1)P
  \Big).
\label{Omega3}
\end{eqnarray}

In ref.~\onlinecite{BP-09}, we showed that this result can be generalized
to all orders and that $\Omega_n$ is a sum of $n!$
iterated integrals corresponding to all the orderings of
$t_1,\dots,t_n$. More precisely, we obtained
the series expansion for $\UGL$
\begin{eqnarray} 
\UGL(t,t_0) &=&
 P + 
  \sum_{n=1}^\infty 
  \sum_{\sigma\in \calS_n}
   \int_{t_0}^{t} \dd t_1 
   \int_{t_0}^{t_1} \dd t_2
    \dots 
   \int_{t_0}^{t_{n-1}} \dd t_n
   X(t_{\sigma(1)})\dots X(t_{\sigma(n)}),
\label{Omegasigma}
\end{eqnarray} 
where $\calS_n$ is the group of permutations of $n$ elements.
The operators $X$ are defined, for $n=1$, by
$X(t)\defeq -\imath Q H(t) P$
and, for $n>1$ and any $\sigma\in \calS_n$, by
$X(t_{\sigma(1)}) \defeq -\imath Q H(t_{\sigma(1)})$ and
\begin{eqnarray*}
  X(t_{\sigma(p)}) &\defeq & -\imath Q H(t_{\sigma(p)})
  \,\,\mathrm{if}\,\, 1<p<n\,\,\mathrm{and}\,\,
  \sigma(p)>\sigma(p-1),\\
  X(t_{\sigma(p)}) &\defeq & \imath P H(t_{\sigma(p)})
  \,\,\mathrm{if}\,\, 1<p<n\,\,\mathrm{and}\,\,
  \sigma(p)<\sigma(p-1),\\
  X(t_{\sigma(n)}) &\defeq & -\imath Q H(t_{\sigma(n)}) P
  \,\,\mathrm{if}\,\, \sigma(n)>\sigma(n-1),\\
  X(t_{\sigma(n)}) &\defeq & \imath P H(t_{\sigma(n)}) P
  \,\,\mathrm{if}\,\, \sigma(n)<\sigma(n-1).
\end{eqnarray*}

Each term of eq.~(\ref{Omegasigma}) is now written as
an iterated integral.
However, the expansion (\ref{Omegasigma})
is still not optimal because some of its terms
can be summed to get a simpler expression.
As an example, consider the fourth and fifth terms
of eq.~(\ref{Omega3}), where we replace $s_i$ by $t_i$.
\begin{eqnarray*}
Z &\defeq & -\imath
\int_{t_0}^{t} \dd t_1 
   \int_{t_0}^{t_1} \dd t_2
   \int_{t_0}^{t_2} \dd t_3 
  \Big(
  QH(t_2) P H(t_1) Q H(t_3)P
  +
  QH(t_3) P H(t_1) Q H(t_2)P
  \Big).
\end{eqnarray*}
The first and second terms of the right hand
side of this equation are denoted by $Z_1$
and $Z_2$, respectively.
We transform $Z_2$
by exchanging variables $t_2$ and $t_3$.
\begin{eqnarray*}
Z_2 &=& -\imath
\int_{t_0}^{t} \dd t_1 
   \int_{t_0}^{t_1} \dd t_3
   \int_{t_0}^{t_3} \dd t_2 
  QH(t_2) P H(t_1) Q H(t_3)P
=
-\imath \int_{t_0}^{t} \dd t_1 
   \int_{t_0}^{t_1} \dd t_2
   \int_{t_2}^{t_1} \dd t_3 
  QH(t_2) P H(t_1) Q H(t_3)P,
\end{eqnarray*}
where we also exchanged the order of the integrations
over $t_3$ and $t_2$.
This can be added to $Z_1$
and we obtain the simpler expression
\begin{eqnarray*}
Z &=& -\imath
\int_{t_0}^{t} \dd t_1 
   \int_{t_0}^{t_1} \dd t_2
   \int_{t_0}^{t_1} \dd t_3 
  QH(t_2) P H(t_1) Q H(t_3)P.
\end{eqnarray*}
Such a simplification is not possible for the
other terms of $\UGL_3$.
In the next section, we determine how this simplification
can be extended to the general term of $\UGL$. 

Before closing this section, we need to specify more precisely the 
relation between
the permutations $\sigma$ and the sequence of $P$ and $Q$ in the expansion 
of eq.~(\ref{MS}). When we expand all the $\theta_P(t_i-t_{i+1})$
in eq.~(\ref{MS}), we obtain an integrand of the
form 
$(-\imath)^n QH(t_1) R_1 H(t_2)\dots R_{n-1} H(t_n) P$, multiplied by a product of 
Heaviside functions, where $R_i$ takes the value $-P$ or $Q$.
We aim to determine the relation between the sequence
$R_1\dots R_{n-1}$ and the permutations $\sigma$ 
in eq.~(\ref{Omegasigma}).
From the definition of $X(t_{\sigma(i)})$, it appears that
$R_i = -P$ if $\sigma(i)>\sigma(i+1)$ and 
$R_i = Q$ if $\sigma(i)<\sigma(i+1)$. The set of indices $i$ such that
$\sigma(i)>\sigma(i+1)$ is called the \emph{descent set} of $\sigma$,
denoted by $D_\sigma$. It is also called the \emph{shape} of the permutation~\cite{Viennot}.
For instance, the descent set of the permutations $(213)$ and $(312)$ is
$\{1\}$, corresponding to $(R_1, R_2) = (-P, Q)$.

\subsection{Permutations and trees}
\label{permutation-tree}
In many-body physics, the expansion in Goldstone diagrams corresponds
(among other things) to the expansion of $\Omega$ into all
time orderings $\Omega_\sigma$.
In that context, several authors noticed that 
some diagrams corresponding to different orderings can
be added to give a simple 
sum~\cite{KuoOsnes,Michels2,Frantz,Bethe-63,Brandow,Olszewski-04},
as we saw in the previous section.
These are special cases of the simplification that we shall
present which, as far as we know, was never stated in full generality.
The first difficulty is to find the proper combinatorial
object to represent the sets of permutations 
that lead to simplified sums.
We shall find it after a short algebraic detour meant to recall the 
notion of tree and its relation to permutations~\cite{Foata,Viennot}.
The trees we consider have received various names in the literature:
binary trees in ref.~\onlinecite{Viennot}, but also (to quote 
only a few occurrences) plane rooted complete binary tree~\cite{Simion-00},
extended binary trees~\cite{Panayotopoulos}
or planar binary trees~\cite{LodayRonco}. 
Since these objects are rarely used in physics, we
first link them with the trees commonly found in graph theory.
This lengthy definition will then be replaced by a much easier one.

\begin{figure}
\begin{center}
\includegraphics{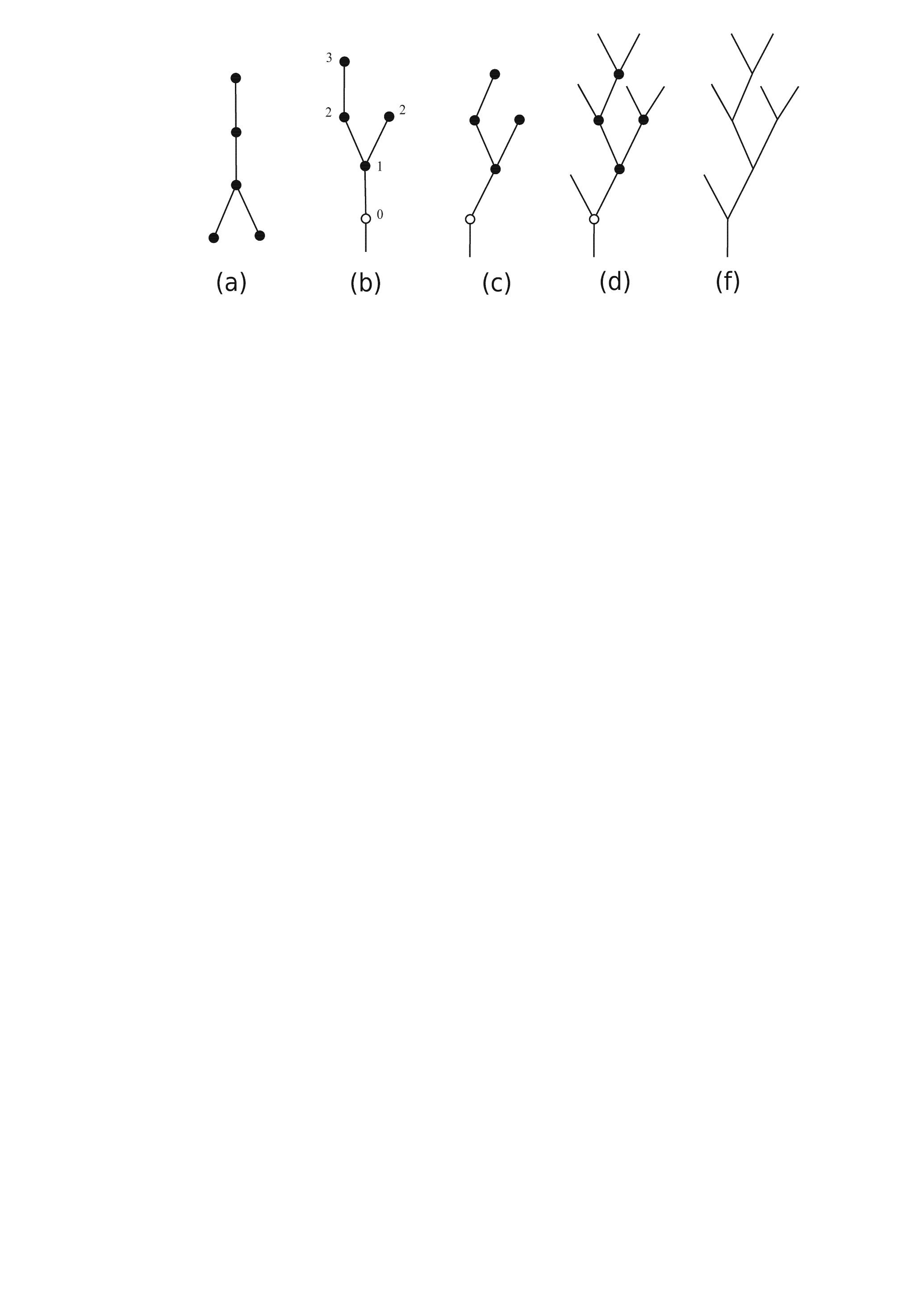}\end{center}
\caption{Construction of plane rooted complete binary tree. (a) a common tree;
(b) a rooted tree with its vertex levels; (c) a plane rooted binary
tree; (d) a plane rooted complete binary tree;
(f) a simplified plane rooted complete binary tree.}
\label{fig:PBT}
\end{figure}

A \emph{common tree} is a connected graph without loops. In other words, 
a common tree is a set of vertices linked by edges such that there
is a unique way to go from any vertex to any another one by
travelling along edges. An example of a common tree is given in fig.~\ref{fig:PBT}(a).
A \emph{rooted tree} is a common tree where one vertex was selected. This particular vertex
is called the \emph{root}. The \emph{level} of a vertex in a rooted tree is the number
of edges that separates this vertex from the root. The root has level 0.
It is natural to draw a rooted tree by putting the root at the bottom and
drawing a vertex above another one if its level is larger. 
The rooted tree of fig.~\ref{fig:PBT}(b) was obtained by selecting as the root the
lowest left vertex of fig.~\ref{fig:PBT}(a). The root is indicated by a white dot
and a dangling line.
In a rooted tree, the \emph{children} of a vertex $v$ are the vertices 
$v'$ that are linked to $v$ by an edge and such that the level of 
$v'$ is larger than the level of $v$.
A \emph{plane rooted binary tree} is a rooted tree where each vertex has zero, 
one or two children, and each edge is oriented either to the left or to the 
right. If a vertex has two children, then one of the edges is oriented to the 
left and the other one to the right.
Fig.~\ref{fig:PBT}(c) shows one of the plane rooted trees that can be obtained from the rooted tree of
fig.~\ref{fig:PBT}(b). The adjective ``plane'' means that an edge going to the
right cannot be identified with, or deformed into an edge going to the left.
A \emph{plane rooted complete binary tree}
is a plane rooted binary tree where each vertex is ``completed'' by drawing
\emph{leaves} as follows:
if a vertex has no child, draw a leaf (i.e. a line) to the left and one to the right,
if a vertex has one child, then draw a leaf to the right if the child is to the left
and draw it to the left if the child is to the right.
Fig.~\ref{fig:PBT}(d) shows the plane rooted complete binary tree that is obtained from
the plane rooted binary tree of fig.~\ref{fig:PBT}(c).
In practice, the vertices are no longer necessary and they are not drawn, as in
fig.~\ref{fig:PBT}(f).
If $Y_n$ denotes the set of plane rooted complete binary trees with $n$
vertices, we see that
$Y_1=\{\Y\}$,
$Y_2=\{\deuxun,\deuxdeux\}$,
$Y_3=\{\troisun,\troisdeux,\troistrois,\troisquatre,\troiscinq\}$.
They are much more numerous than the common trees
(there is only one common tree with one, two or three vertices).
For notational convenience, 
plane rooted complete binary trees will be simply called ``trees''
in the rest of the paper.

Fortunately, there exists a much simpler definition of the trees, that
use a sort of building rule.
We first denote the \emph{empty tree}, i.e. the tree with no vertex,
by$\|$, which is a dangling line without root (a dangling line with a root
and no other vertex belongs to the tree $\Y$).
Then, for any integer $n$, $Y_n$,
is defined recursively by $Y_0 \defeq\{\|\}$ and, for $n>0$,
$Y_n \defeq \{ T_1 \vee T_2 \,:\, T_1 \in Y_k, T_2 \in Y_{n-k-1},\, 
k=0,\dots,n-1\}$,
where $T_1 \vee T_2$ is the grafting of the two trees
$T_1$ and $T_2$, by which the dangling lines of $T_1$ and $T_2$
are brought together and a new root (with its own dangling line) is grown at their juncture.
For example, $\|\vee\|=\Y$,
$\|\vee\Y=\deuxdeux$, $\Y \vee \Y= \troistrois$.
Note that each tree of $Y_n$ has $n$ vertices (including a root)
and $n+1$ leaves. The order $|T|$ of a tree $T$ is the number of its vertices.

If $C_n$ denotes the number of elements of $Y_n$, the recursive
definition of $Y_n$ implies that $C_0=1$ and
$C_n=\sum_{k=0}^{n-1} C_k C_{n-k-1}$, so that
$C_n=\frac{1}{n+1}\binom{2n}{n}$ are the famous Catalan numbers.
For $n$=0 to 10, 
$C_n$=1, 1, 2, 5, 14, 42, 132, 429, 1430, 4862, 16796.
The Catalan numbers enumerate a large number of (sometimes quite different) 
combinatorial objects~\cite{StanleyII}. 
%There are also many ways to
%encode trees by sequences of integers~\cite{Makinen}.
The main practical interest of trees with respect to other
combinatorial interpretations is that their recursive
definition is very easy to implement. 

We noticed in the previous section that, at order three, the terms $\Omega_\sigma$
corresponding to two specific permutations $\sigma$ can be added
to give a simple result. 
At the general order, we shall see that the sum of $\Omega_\sigma$
is simple if it is carried out
over permutations $\sigma$ associated with the same tree.
But for this, we need to associate a tree to each permutation.
The relevant map from permutations to trees
belongs to ``the ABC's of classical 
enumeration''~\cite{Rawlings} and is historically one of the founding 
tools of modern bijective and algebraic 
combinatorics~\cite{Foata,Viennot,StanleyI}.
We describe it in the next section.

%We will advocate in 
%the Appendix (to which we refer for details and references) that 
%combinatorial Hopf algebraic structures
%allow to deepen this understanding and form a ``Lie theory'' suited 
%for the study of effective Hamiltonians, very much as the classical 
%theory of free Lie algebras is the 
%right framework to understand the usual expansions of 
%linear differential equations.

\subsubsection{From permutations to trees}
\label{perm-tree}
We first map any $n$-tuple 
$I=(i_1,\dots,i_n)$ of distinct integers
to a tree $T\in Y_n$. The mapping $\phi$ is
defined recursively as follows:
if $I=(i)$ contains a single integer, then
$\phi(I)\defeq\Y$; otherwise, pick up the smallest
element $i_k$ of $I$, then
\begin{eqnarray*}
\phi(I) &\defeq & \|\vee\phi\big((i_2,\dots,i_n)\big), 
\,\,\mathrm{if}\,\,k=1,\\
\phi(I)&\defeq &\phi\big((i_1,\dots,i_{n-1})\big)\vee \|,
\,\,\mathrm{if}\,\,k=n,\\
\phi(I)&\defeq &\phi\big((i_1,\dots,i_{k-1})\big)\vee 
   \phi\big((i_{k+1},\dots,i_n)\big),
\,\,\mathrm{otherwise}.
\end{eqnarray*}
In the following, we frequently abuse notation by writing
$\phi(i_1,\dots,i_n)$ or even $\phi(i_1 \dots i_n)$ instead of
$\phi\big((i_1,\dots,i_n)\big)$.
For a permutation $\sigma\in \calS_n$,
the corresponding tree is
$\phi(\sigma)=\phi\big(\sigma(1),\dots,\sigma(n)\big)$.
For example
\begin{eqnarray*}
\phi(1) &=& \Y,\\
\phi(12) &=& \deuxdeux,\quad
\phi(21) = \deuxun ,\\
\phi(123) &=& \troiscinq, \quad
\phi(132) = \troisquatre,\quad
\phi(213) = \troistrois,\\
\phi(231) &=& \troisdeux,\quad
\phi(312) = \troistrois,\quad \phi(321) = \troisun.
\end{eqnarray*}
Note that the two permutations 
$(213)$ and $(312)$ correspond to
the same tree, and they are also
the two permutations that add up
to a simple sum in the calculation
of $Z$ at the end of section~\ref{permutation-sect}.
This is not a coincidence.

To simplify the proofs, we embrace the three cases of the definition
of $\phi$ into a single one as follows. We first define the
\emph{concatenation product} of two 
tuples $I=(i_1,\dots,i_n)$ and $J=(j_1,\dots,j_m)$
as $I\cdot J=(i_1,\dots,i_n,j_1,\dots,j_m)$.
We extend this definition to the case of the
zero-tuple $I_0=\emptyset$ by $I_0\cdot I=I\cdot I_0=I$.
Then, for any $n$-tuple $I$ of distinct integers, 
we define $\phi(I)$ by $\phi(I)=\|$ if $n=0$ and
$\phi(I)=\phi(I_1)\vee\phi(I_2)$ otherwise,
where $I_1$ and $I_2$ are determined by
$I=I_1 \cdot (\min I) \cdot I_2$.
Note that $I_1$ or $I_2$ can be the zero-tuple.

We first prove an easy lemma.
\begin{lem}
If the elements of the two $n$-tuples $I=(i_1,\dots,i_n)$ and 
$J=(j_1,\dots,j_n)$ of distinct integers have the same ordering
(i.e.  if $i_k < i_l$ if and only if $j_k<j_l$ for all
$k$ and $l$ in $\{1,\dots,n\}$), then 
$\phi(I)=\phi(J)$.
\end{lem}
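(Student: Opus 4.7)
The plan is to proceed by induction on the length $n$ of the tuples, using the recursive definition of $\phi$ via the single-case formulation $\phi(I)=\phi(I_1)\vee\phi(I_2)$, where $I=I_1\cdot(\min I)\cdot I_2$.

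For the base case $n=0$, both $\phi(I)$ and $\phi(J)$ equal $\|$ by definition (and $n=1$ gives $\Y$ in both cases), so there is nothing to check. For the inductive step, I assume the statement for all lengths strictly less than $n$ and take $I$ and $J$ of length $n$ with the same ordering. The first key observation is that the \emph{position} of the minimum is preserved: if $i_k=\min I$, then the condition $i_k<i_\ell$ for all $\ell\neq k$ translates, via the ordering hypothesis, to $j_k<j_\ell$ for all $\ell\neq k$, so $j_k=\min J$ as well. Hence the decompositions line up:
\begin{eqnarray*}
I &=& I_1\cdot(i_k)\cdot I_2, \qquad I_1=(i_1,\dots,i_{k-1}),\ I_2=(i_{k+1},\dots,i_n),\\
J &=& J_1\cdot(j_k)\cdot J_2, \qquad J_1=(j_1,\dots,j_{k-1}),\ J_2=(j_{k+1},\dots,j_n).
\end{eqnarray*}

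The second key observation is that the ordering property is inherited by sub-tuples obtained from the same set of positions: for indices $p,q\in\{1,\dots,k-1\}$ one has $i_p<i_q$ iff $j_p<j_q$ directly from the hypothesis, and likewise for indices in $\{k+1,\dots,n\}$. Thus $I_1$ and $J_1$ have the same ordering, and $I_2$ and $J_2$ have the same ordering. Since both sub-tuples have length strictly less than $n$, the induction hypothesis gives $\phi(I_1)=\phi(J_1)$ and $\phi(I_2)=\phi(J_2)$, and the recursive definition of $\phi$ then yields
\begin{eqnarray*}
\phi(I) &=& \phi(I_1)\vee\phi(I_2) \;=\; \phi(J_1)\vee\phi(J_2) \;=\; \phi(J),
\end{eqnarray*}
completing the induction.

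There is no real obstacle here; the only point that deserves care is the \emph{alignment of the minima}, which is what makes the recursive decomposition compatible between $I$ and $J$. Once that is noted, the fact that the ordering property restricts to subsequences indexed by the same positions is immediate, and the induction runs automatically.
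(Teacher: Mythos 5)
Your proof is correct and follows essentially the same route as the paper's: induction on $n$ using the single-case decomposition $I=I_1\cdot(\min I)\cdot I_2$, observing that the minima of $I$ and $J$ occupy the same position and that the ordering hypothesis restricts to the sub-tuples, then applying the induction hypothesis and grafting. Your explicit justification that $j_k=\min J$ is a detail the paper leaves tacit, but the argument is the same.
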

\begin{proof}
The proof is by induction.
If $n=0$, then $I=J=\emptyset$ and
$\phi(I)=\phi(J)=\|$. Assume that the property is true
for $k$-tuples of distinct integers up to $k=n-1$
and take two $n$-tuples $I$ and $J$ having the same ordering.
Then, the minimum element of both is at the same position $k$
(i.e. $\min I = i_k$ and $\min J = j_k$) and 
$I=I_1\cdot (i_k) \cdot I_2$,
$J=J_1\cdot (j_k) \cdot J_2$, where
$I_1$ and $J_1$ ($I_2$ and $J_2$, respectively)
are two $(k-1)$-tuples
($(n-k)$-tuples, respectively)
of distinct integers have the same ordering.
By the recursion hypothesis, we have
$\phi(I_1)=\phi(J_1)$ and $\phi(I_2)=\phi(J_2)$ and 
the definition of $\phi$ gives us
$\phi(I)=\phi(I_1)\vee\phi(I_2)=\phi(J_1)\vee\phi(J_2)=\phi(J)$.
\end{proof}

As a useful particular case, we consider the situation
where $J$ describes the ordering of the elements of $I$: 
we order the elements of $I=(i_1,\dots,i_n)$ increasingly
as $i_{k_1} < \dots < i_{k_n}$. Then $j_l$ is the position
of $i_l$ in this ordering. More formally, 
$J\defeq (\tau^{-1}(1),\dots,\tau^{-1}(n))$, 
where $\tau$ is the  permutation $(k_1,\dots,k_n)$.
The $n$-tuple $J$ is called the \emph{standardization} of
$I$ and it is denoted by $st(I)$.
If we take the example of 
$I=(5,8,2)$, the position of 5, 8 and 2 in the ordering 
$2 < 5 < 8$ is 2, 3
and 1, respectively. Thus, $st(5, 8, 2) = (2, 3, 1)$.
By construction, $I$ and $st(I)$ have the same ordering
and $\phi(I)=\phi(st(I))$.
We extend the standardization to the case
of $I=\emptyset$ by $st(\emptyset)=\emptyset$.

\subsubsection{From trees to permutations}
Conversely, we shall need to know the permutations 
corresponding to a given tree:
$S_T\defeq \{\sigma\in \calS_{|T|}\,:\, \phi(\sigma)=T\}$
(we extend this definition to the case of
$T=\|$ by defining the zero-element permutation group
$\calS_0 \defeq \{\emptyset\}$).
The solution of this problem is given by
\begin{lem}
If $T=T_1\vee T_2$, where $|T_1|=n$ and $|T_2|=m$
($n$ or $m$ can be zero), all the permutations of $S_T$ have the
form $I=I_1\cdot (1)\cdot I_2$,
where $I_1$ is a subset of $n$ elements of $\{2,\dots,n+m+1\}$
ordered according to a permutation $\alpha$ of
$S_{T_1}$ (i.e. $st(I_1)=\alpha$)
and $I_2$ is the complement of $I_1$ in 
$\{2,\dots,n+m+1\}$, 
ordered according to a permutation $\beta$ of
$S_{T_2}$ (i.e. $st(I_2)=\beta$).
\end{lem}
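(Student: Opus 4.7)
The plan is to deduce the lemma directly from the recursive definition of $\phi$ and from the preceding lemma (that $\phi$ depends only on the ordering of a tuple, so $\phi(I)=\phi(st(I))$).

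First, I would take $\sigma \in S_T$, viewed as the tuple $(\sigma(1),\dots,\sigma(n+m+1))$. Since the minimum element of this tuple is necessarily $1$, there is a unique decomposition $\sigma = I_1 \cdot (1) \cdot I_2$, where $I_1$ and $I_2$ are tuples of distinct integers drawn from $\{2,\dots,n+m+1\}$ whose union is all of that set. By the unified form of the definition of $\phi$, this gives
\begin{eqnarray*}
T_1 \vee T_2 \;=\; T \;=\; \phi(\sigma) \;=\; \phi(I_1) \vee \phi(I_2).
\end{eqnarray*}
The grafting operation $\vee$ is injective on pairs of trees (the root produced by the grafting separates $T_1$ from $T_2$ unambiguously in the resulting plane rooted complete binary tree), so $\phi(I_1)=T_1$ and $\phi(I_2)=T_2$. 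In particular, $|I_1|=n$ and $|I_2|=m$, so $I_1$ is an $n$-element subset of $\{2,\dots,n+m+1\}$ and $I_2$ is its complement. Applying the preceding lemma to $I_1$ and $I_2$ individually, we get $\phi(st(I_1))=\phi(I_1)=T_1$ and $\phi(st(I_2))=\phi(I_2)=T_2$, i.e.\ $\alpha \defeq st(I_1) \in S_{T_1}$ and $\beta \defeq st(I_2) \in S_{T_2}$. This shows that every $\sigma \in S_T$ has the advertised form.

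Conversely, starting from any $n$-element subset $I_1 \subset \{2,\dots,n+m+1\}$ with $st(I_1)=\alpha \in S_{T_1}$ and its complement $I_2$ with $st(I_2)=\beta \in S_{T_2}$, I would set $\sigma \defeq I_1 \cdot (1) \cdot I_2$. Then $\sigma$ is a permutation of $\{1,\dots,n+m+1\}$, the minimum $1$ occurs at position $n+1$, and by the recursive definition $\phi(\sigma) = \phi(I_1) \vee \phi(I_2)$. The preceding lemma again gives $\phi(I_1)=\phi(\alpha)=T_1$ and $\phi(I_2)=\phi(\beta)=T_2$, so $\phi(\sigma)=T$. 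The edge cases $n=0$ or $m=0$ are handled by the convention $S_0=\{\emptyset\}$ and $\phi(\emptyset)=\|$, consistently with $\|\vee\phi(I_2)$ or $\phi(I_1)\vee\|$ in the definition of $\phi$.

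The only delicate point is the injectivity of the grafting $\vee$: without it, the decomposition $T=T_1\vee T_2$ would not force $|I_1|=n$ and $|I_2|=m$, and the counting claim in the lemma would fail. This injectivity is immediate from the construction of plane rooted complete binary trees (the root of $T_1\vee T_2$ carries two well-defined subtrees, its left and right children), and everything else is bookkeeping on top of the previous lemma.
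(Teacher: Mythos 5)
Your proof is correct and follows essentially the same route as the paper's own (sketched) argument: decompose $\sigma\in S_T$ at the position of the minimum $1$, use the unified recursive definition of $\phi$ to get $\phi(\sigma)=\phi(I_1)\vee\phi(I_2)$, and invoke the standardization lemma $\phi(I)=\phi(st(I))$ to land in $S_{T_1}$ and $S_{T_2}$. If anything you are more careful than the paper, which simply asserts that the minimum sits at position $n+1$: your explicit appeal to the injectivity of the grafting $\vee$ (together with $|\phi(I)|=|I|$) is exactly what justifies that assertion, and you also spell out the converse inclusion that the paper leaves as a final check.
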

\begin{proof}
The proof is given in refs.~\onlinecite{Panayotopoulos}
and \onlinecite{LodayRonco},
but we can sketch it here for completeness.
The simplest examples are
$S_T=\{\emptyset\}$ for $T=\|$ and
$S_T=\{(1)\}$ for $T=\Y$.
Now take $T=T_1\vee T_2$ as in the lemma.
By the definition of $\phi$, the minimum of the tuple
$I=(\sigma(1),\dots,\sigma(n+m+1))$
is $\sigma(n+1)=1$ and
$I=I_1\cdot (1)\cdot I_2$, where
$\phi(I_1)=T_1$ and $\phi(I_2)=T_2$.
We saw in the previous section that
$\phi(I_1)=\phi(st(I_1))$. By definition
$st(I_1)$ is a permutation of $\calS_n$. Therefore,
$st(I_1)$ belongs to $S_{T_1}$ and, similarly,
$st(I_2)$ belongs to $S_{T_2}$.
It is now enough to check that each element of 
$S_T$ is obtained exactly once by running the construction
over all orderings and all permutations of $T_1$ and $T_2$.
\end{proof}
This lemma allows us to recursively determine
the number of elements of $S_T$, denoted by $|S_T|$, by
$|S_T|=1$ for $T=\|$ and $T=\Y$ and, for $T=T_1\vee T_2$,
\begin{eqnarray}
|S_T| &=& \binom{|T|-1}{|T_1|} |S_{T_1}|\,|S_{T_2}|.
\label{nbST}
\end{eqnarray}
See ref.~\onlinecite{Panayotopoulos} for an alternative approach.

Example:
Consider the tree $T=T_1\vee T_2$, with $T_1=\Y$
and $T_2=\troistrois$, so that $n=1$ and $m=3$.
$S_{T_1}$ contains the single permutation $\alpha=(1)$ and,
according to the examples given in the previous section,
the two permutations of $S_{T_2}$ are
$\beta_1=(213)$ and $\beta_2=(312)$.
We choose the permutations $\alpha$ and $\beta_1$,
we pick up $n=1$ element (for example $3$) in the set $J=\{2,3,4,5\}$,
so that $I_1=(3)$ and we order the remaining
elements $\{2,4,5\}$ according to $\beta_1$, so
that $I_2=(4,2,5)$.
This gives us $\sigma=(31425)$.
If we pick up the other elements of $J$ to build $I_1$
we obtain $(21435),(41325)$ and $(51324)$.
We add the elements obtained by choosing $\beta_2$
and we obtain eight permutations:
$$S_T=
\{(21435),(21534),(31425),(31524),(41325),(41523),(51324),(51423)\}.$$
We can check that eq.~(\ref{nbST}) holds and that $|S_T|=8$.

\subsection{Recursion formula}
The permutations corresponding to a tree can be used
to make a partial summation of the terms of the Picard-Dyson expansion.
\begin{dfn}
For any tree $T$, we define $\UGL_T(t,t_0)$ by
$\UGL_T(t,t_0)=P$ if $T=\|$ and
\begin{eqnarray*}
\UGL_T(t,t_0) &\defeq &  \sum_{\sigma \in S_T}
   \int_{t_0}^{t} \dd t_1 
   \int_{t_0}^{t_1} \dd t_2
    \dots 
   \int_{t_0}^{t_{n-1}} \dd t_n
   X(t_{\sigma(1)})\dots X(t_{\sigma(n)}),
\end{eqnarray*}
otherwise, where $n=|T|$.
\end{dfn}
With this notation we have obviously
$\UGL(t,t_0)=\sum_T \UGL_T(t,t_0)=\sum_\sigma\UGL_\sigma(t,t_0)$, with the 
notation $\UGL_\sigma(t,t_0) \defeq \int_{t_0}^{t} \dd t_1 
   \int_{t_0}^{t_1} \dd t_2
    \dots 
   \int_{t_0}^{t_{n-1}} \dd t_n
   X(t_{\sigma(1)})\dots X(t_{\sigma(n)})$
and where $\sigma$ runs over all permutations (of all orders).
The term of order 0 of this series is $\Omega_|=P$
and the term of order one is
\begin{eqnarray*}
\UGL_T(t,t_0) &=& 
  -\imath\int_{t_0}^t \dd s Q H(s) P,
\end{eqnarray*}
for $T=\Y$.
The other terms enjoy a remarkably simple
recurrence relation:
\begin{thm}
\label{thmOmegaT}
If $|T|>1$, then $\Omega_T(t,t_0)$
can be expressed recursively by
\begin{eqnarray*}
\UGL_T(t,t_0) &=& 
  -\imath\int_{t_0}^t \dd s Q H(s) \UGL_{T_2}(s,t_0),
  \,\,\,
  \mathrm{if}\,\, T=\|\vee T_2,\nonumber\\
\UGL_T(t,t_0) &=& 
  \imath\int_{t_0}^t \dd s \UGL_{T_1}(s,t_0)H(s)P,
  \,\,\,
  \mathrm{if}\,\, T=T_1\vee\|,\nonumber\\
\UGL_T(t,t_0) &=& 
  \imath\int_{t_0}^t \dd s \UGL_{T_1}(s,t_0)H(s)\UGL_{T_2}(s,t_0),
  \,\,\,
  \mathrm{if}\,\, T=T_1\vee T_2,
\end{eqnarray*}
where $T_1\not=\|$ and $T_2\not=\|$.
\label{OmegaT}
\end{thm}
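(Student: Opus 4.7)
The plan is to use the parametrization of $S_T$ provided by the preceding lemma. For $T=T_1\vee T_2$ with $n=|T_1|$ and $m=|T_2|$, every $\sigma\in S_T$ has the form $\sigma=I_1\cdot(1)\cdot I_2$, so $\sigma(n+1)=1$. Consequently the outermost integration variable $t_1$ couples only to the $(n+1)$-th factor of the integrand $X(t_{\sigma(1)})\cdots X(t_{\sigma(n+m+1)})$, and I would split this integrand into a left block at positions $1,\ldots,n$, a middle factor $X(t_1)$, and a right block at positions $n+2,\ldots,n+m+1$. Setting $\alpha=st(I_1)\in S_{T_1}$ and $\beta=st(I_2)\in S_{T_2}$, I would then identify the left block with the integrand of $\UGL_{T_1}$ at $\alpha$ and the right block with that of $\UGL_{T_2}$ at $\beta$, under the obvious reindexing of time variables.

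The algebraic backbone of this identification is that standardization preserves ascents and descents, so the prefactors $\pm\imath$ and the $P/Q$ decorations of all interior $X$-factors match automatically on both sides. The only boundary subtlety arises at position $n$ of $\sigma$: being interior to $\sigma$ it carries no trailing $P$, whereas the corresponding terminal position of $\UGL_{T_1}$ does. When $n,m\geq1$ this missing $P$ is furnished precisely by the leading $P$ of $X(t_1)=\imath P H(t_1)$ (the descent $\sigma(n)>1=\sigma(n+1)$ activates the interior descent rule), so the product (left block)$\cdot X(t_1)$ equals (integrand of $\UGL_{T_1}$ at $\alpha$)$\cdot\imath H(t_1)$; and the first factor of the right block is $-\imath Q H(t_{\sigma(n+2)})$ (the ascent $1<\sigma(n+2)$) so it matches $\UGL_{T_2}$ at $\beta$ without any adjustment. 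The three displayed formulas correspond to the three possibilities for the middle factor: $\imath P H(t_1)$ when $n,m\geq1$, $-\imath Q H(t_1)$ when $T_1=\|$ (the position-1 rule), and $\imath P H(t_1)P$ when $T_2=\|$ (the terminal descent rule, which also supplies the trailing $P$ appearing in the formula).

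Next, I would reassemble the integration regions. Writing $s:=t_1$, the remaining variables $t_2>\cdots>t_{n+m+1}$ split into $u$-variables at the slots indexed by $I_1$ and $v$-variables at those indexed by $I_2$, each group inheriting a strict ordering $u_1>\cdots>u_n$ and $v_1>\cdots>v_m$ in $(t_0,s)$. By the parametrization lemma, summing over $\sigma\in S_T$ with $\alpha$ and $\beta$ held fixed amounts to summing over the $\binom{n+m}{n}$ choices of $I_1\subset\{2,\ldots,n+m+1\}$, that is, over all interleavings of the $u$'s and $v$'s; this reassembles the single nested integral over $t_2>\cdots>t_{n+m+1}$ into the product of two independent nested integrals $\int_{t_0}^s\dd u_1\cdots\int_{t_0}^{u_{n-1}}\dd u_n$ and $\int_{t_0}^s\dd v_1\cdots\int_{t_0}^{v_{m-1}}\dd v_m$, which is exactly the integration prescription built into $\UGL_{T_1}(s,t_0)$ and $\UGL_{T_2}(s,t_0)$. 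Collecting everything yields the three formulas.

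The main obstacle is the combinatorial bookkeeping at the boundary positions $n$ and $n+2$ of $\sigma$: one must verify that the prefactors $\pm\imath$, the $P/Q$ decorations, the standardization-induced reindexing of times, and the edge cases $T_1=\|$ or $T_2=\|$ all compose consistently in the expected form.
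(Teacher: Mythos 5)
Your proof is correct and is essentially the paper's own argument run in the reverse direction: the paper starts from the right-hand side and invokes its Lemma~\ref{fundam} (the noncommutative Chen-type interleaving identity, itself proved by exactly the region-splitting and change-of-variables computation you carry out inline) together with the parametrization $S_T=\{I_1\cdot(1)\cdot I_2\,:\,st(I_1)\in S_{T_1},\ st(I_2)\in S_{T_2}\}$ to land on $\sum_{\sigma\in S_T}\Omega_\sigma(t,t_0)=\Omega_T(t,t_0)$. Your explicit bookkeeping of the $X$-factors (the trailing $P$ of $\Omega_{T_1}$ being supplied by the leading $P$ of the middle factor, the $\pm\imath$ and $P/Q$ decorations matching because standardization preserves descents) is the same point the paper compresses into the observation that all permutations in $S_T$ share the same descent set, so that Lemma~\ref{fundam} applies with fixed operator sequences.
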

Note that a similar recursive expression was conjectured by
Olszewski for the nondegenerate Rayleigh-Schr\"odinger 
expansion~\cite{Olszewski-04}.

\begin{proof}
Let us prove the theorem recursively.
Consider an arbitrary $T=T_1\vee T_2$, $|T|>1$, and
assume the formulas to hold for all the trees $T'$ with $|T'|<|T|$.
Consider for example the case where $T_1\not=\|$ and $T_2\not=\|$ 
(the other cases are even simpler). We define:
$$A_T \defeq \imath\int_{t_0}^t \dd s \UGL_{T_1}(s,t_0)H(s)\UGL_{T_2}(s,t_0)
=\sum\limits_{\alpha\in S_{T_1},\beta\in S_{T_2}}\imath\int_{t_0}^t 
\dd s\UGL_{\alpha}(s,t_0)H(s)\UGL_{\beta}(s,t_0)
.$$
This first important point is that, for a given tree
$T$, all the permutations $\sigma\in S_T$ have the
same descent set. This is a well-known 
fact~\cite{Viennot,LodayRonco}
that can be deduced from the characterization of $S_T$
at the end of section \ref{permutation-tree}.
As a consequence, the sequence of operators
$P$, $Q$ and $H$ is the same for all
$\alpha$ and $\beta$ in $A_T$, and only the
order of the arguments $t_i$ varies.
Therefore, the conditions of
lemma~\ref{fundam} (see appendix \ref{crucial})
are satisfied and we get:
$$A_T=\sum\limits_\gamma\UGL_\gamma(t,t_0),$$
where $\gamma$ runs over the permutations such that $\gamma(|T_1|+1)=1$, 
$st(\gamma(1),...,\gamma(|T_1|))\in S_{T_1}$, 
$st(\gamma(|T_1|+2),...,\gamma(|T_1|+|T_2|+1))\in S_{T_2}$.
The set of permutations $\gamma$ satisfying these
equations is precisely $S_T$,
so that, finally: $A_T=\UGL_T(t,t_0)$. This concludes the proof of the theorem.
\end{proof}

\subsection{Remarks}
\subsubsection{Nonlinear integral equation}

If we denote $\chi(t,t_0)=\UGL(t,t_0)-P$, then the recurrence relations
add up to
\begin{eqnarray}
\imath \chi(t,t_0) &=&
  \int_{t_0}^t \dd s Q H(s) P
  +
  \int_{t_0}^t \dd s Q H(s) \chi(s,t_0)
  -
  \int_{t_0}^t \dd s \chi(s,t_0)H(s)P
  -
  \int_{t_0}^t \dd s \chi(s,t_0)H(s)\chi(s,t_0).
\label{ichi}
\end{eqnarray}
The derivative of this equation with respect to $t$
was obtained in a different way by Jolicard~\cite{Jolicard-89}.

\subsubsection{Permutations, trees and descents}
We saw that, for a given tree $T$, all permutations
of $S_T$ have the same descent set. 
We can now give more details~\cite{LodayRonco,LodayRonco2}.
The relation between the trees and the sequences
of operators $P$ and $Q$ in eq.~(\ref{MS}) is very simple.
Consider the sequence of leaves from left to right.
Each leaf pointing to the right corresponds to
a $P$, each leaf pointing to the left correspond to a $Q$.
For example, the tree $\troisdeux$
corresponds to the sequence $QQPP$. 
From the combinatorial point of view, this description emphasizes the 
existence of a relationship 
between trees and descent sets (or, equivalently, hypercubes), see e.g.
refs.~\onlinecite{Viennot,Stanley,Chapoton-00-1,Chapoton-00-2,LodayRonco2} and 
our Appendix.

\section{Adiabatic switching}
Morita's formula is most often applied to
an interaction Hamiltonian 
$H^\epsilon(t)\defeq\ee^{-\epsilon |t|}\ee^{\imath H_0 t}V\ee^{-\imath H_0t}$.
We write $E_0,...,E_n,...$ and $\Phi_0,...,\Phi_n,...$ for the eigenvalues 
of $H_0$ and for an orthogonal basis of corresponding eigenstates.
We assume that the spectrum is discrete and that the eigenvalues are ordered 
by (weakly) increasing order. The ground state may be degenerate 
($E_0=E_1=...=E_k$ for a given $k$).
The model space $M$ (see the Introduction) is the vector space generated
by the lowest $N$ eigenstates of $H_0$ (with $N\ge k$).
We assume that the energies of the eigenstates of $M$
are separated by a finite gap from 
the energies of the eigenstates that do not belong to $M$.
The projector $P$ is the projector onto the model space $M$.
Following the notation of the Introduction,
the energies of the eigenstates that belong (resp. do not belong) to $M$
are denoted by $E^P_i$ (resp. $E^Q_i$).

In this section, we prove the convergence of
each term of the perturbation expansion of the wave operator
when $\epsilon\to0$. For notational convenience,
we assume that $t\le 0$.
We first give a nonperturbative proof of this convergence.
Then, we expand in series and we consider the different 
cases of the previous sections.

\subsection{Nonperturbative proof}
The nonperturbative proof is important in this context because
its range of validity is wider than the series expansion
(no convergence criterion for the series is required). 
Moreover, the proof that the wave operator indeed leads to
an effective Hamiltonian is much easier to give in the
nonperturbative setting.

The first condition required in the nonperturbative setting is that the
perturbation $V$ must be relatively
bounded with respect to $H_0$ with a bound strictly
smaller than 1. This condition is satisfied for the Hamiltonian
describing nuclei and electrons interacting through
a Coulomb potential~\cite{Kato}.
Before stating the second condition,
we define the time independent Hamiltonian
$h(\lambda)=H_0+\lambda V$, its eigenvalues 
$E_j(\lambda)$ and its eigenprojectors 
$P_j(\lambda)$, such that 
$h(\lambda) P_j(\lambda)=E_j(\lambda) P_j(\lambda)$.
The second condition is that the eigenvalues $E_j(\lambda)$ coming from 
the eigenstates of the model space (i.e. such that
$P_j(0) P=P_j(0)$) are separated by a finite gap from the
rest of the spectrum.
According to Kato~\cite{Kato}, the eigenvalues and eigenprojectors 
can be chosen analytic in $\lambda$.
Then, a recent version of the adiabatic theorem~\cite{BPS,BPS-PRL} shows
that there exists a unitary operator $A$, independent
of $\epsilon$, such that
\begin{eqnarray*}
\lim_{\epsilon\to0} || U_\epsilon(0,-\infty)P_j(0) -
      \ee^{\imath\theta_j/\epsilon} A P_j(0)|| &=& 0,
\end{eqnarray*}
where $U_\epsilon(t,t_0)$ is the evolution operator
for the Hamiltonian
$H^\epsilon(t)=\ee^{-\epsilon |t|} \ee^{\imath H_0 t} V \ee^{-\imath H_0t}$
and
\begin{eqnarray*}
\theta_j &=& \int_{0}^1 \frac{E_j(0)-E_j(\lambda)}{\lambda} \dd \lambda.
\end{eqnarray*}
In other words, the singularity of 
$ U_\epsilon(0,-\infty)P_j(0) $ is entirely
described by the factor 
      $\ee^{\imath\theta_j/\epsilon}$.
The operator $A$ satisfies the intertwining property
$AP_j(0)=P_j(1) A$. 
To connect this result with the case that
we are considering in this paper, we 
have to choose a model space $M$
that satisfies the following condition:
there is a set $I$ of indices $j$
such that $P=\sum_{j\in I} P_j(0)$,
where $P$ is the projector onto $M$.

This enables us to give a more precise condition for
the invertibility of $P U_\epsilon(0,-\infty) P$.
The adiabatic theorem shows that, for small enough $\epsilon$,
$P U_\epsilon(0,-\infty) P$ is invertible if and only if
$PAP$ is invertible.
If we rewrite $PAP=\sum_j P A P_j(0)=\sum_j P P_j(1) A$,
the unitarity of $A$ implies that $PAP$ is invertible
iff the kernel of $\sum_j P P_j(1)$ is trivial. 
We recover the well-known invertibility condition~\cite{Bloch}
that no state of the model space should be orthogonal 
to the vector space spanned by all the eigenstates of $H$ with 
energy $E_j(1)$, where $j$ runs over $I$.
Note that, when the condition of invertibility is not
satisfied, it can be recovered by adding the perturbation
step by step~\cite{BPS}.

Then, we have
\begin{thm}
\label{Nonperthm}
With the given conditions, the wave operator
\begin{eqnarray*}
\WO &\defeq & 
\lim_{\epsilon\to0} 
U_\epsilon(0,-\infty) P (P U_\epsilon(0,-\infty) P)^{-1}
\end{eqnarray*}
is well defined.
Moreover, there are states 
$|\tildephi_j\rangle$ in the model space
such that $\WO|\tildephi_j\rangle$ 
is an eigenstate of $H$ with eigenvalue
$E_j(1)$
and the effective Hamiltonian
$\Heff\defeq PH\WO$ satisfies
$\Heff |\tildephi_j\rangle =
 E_j(1) |\tildephi_j\rangle$.
\end{thm}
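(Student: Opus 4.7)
I would reduce everything to the strong adiabatic theorem quoted just above, treating $U_\epsilon(0,-\infty)P(PU_\epsilon(0,-\infty)P)^{-1}$ as a bookkeeping device in which the singular phases $\ee^{\imath\theta_j/\epsilon}$ cancel out. First, choose an orthonormal basis $\{|\phi_\alpha\rangle\}$ of $M$ adapted to the $P_j(0)$: each $|\phi_\alpha\rangle$ lies in the range of some $P_{j(\alpha)}(0)$ with $j(\alpha)\in I$; such a basis exists by the assumption $P=\sum_{j\in I}P_j(0)$. Applying the adiabatic theorem to each $|\phi_\alpha\rangle$ and summing yields, in operator norm,
\begin{eqnarray*}
U_\epsilon(0,-\infty)P &=& A\,D_\epsilon+o(1),
\end{eqnarray*}
where $D_\epsilon\defeq\sum_\alpha\ee^{\imath\theta_{j(\alpha)}/\epsilon}|\phi_\alpha\rangle\langle\phi_\alpha|$ is unitary on $M$; projecting from the left then gives $PU_\epsilon(0,-\infty)P=(PAP)\,D_\epsilon+o(1)$.

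Next I would invert. The invertibility of $PAP$ on $M$ has already been established in the discussion preceding the statement; combined with the unitarity of $D_\epsilon$, a Neumann-series argument shows that $(PAP)D_\epsilon+o(1)$ is invertible for small $\epsilon$, with $(PU_\epsilon(0,-\infty)P)^{-1}=D_\epsilon^{-1}(PAP)^{-1}+o(1)$. Using $\|D_\epsilon^{\pm 1}\|=1$ to preserve the error bounds when multiplying,
\begin{eqnarray*}
\WO &=& \lim_{\epsilon\to0}U_\epsilon(0,-\infty)P\bigl(PU_\epsilon(0,-\infty)P\bigr)^{-1}=A\,(PAP)^{-1},
\end{eqnarray*}
so the phases cancel exactly and the wave operator is well defined.

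Finally I would set $|\tildephi_\alpha\rangle\defeq PA|\phi_\alpha\rangle\in M$. The intertwining property $AP_{j(\alpha)}(0)=P_{j(\alpha)}(1)A$ together with $|\phi_\alpha\rangle\in P_{j(\alpha)}(0)\calH$ gives $A|\phi_\alpha\rangle\in P_{j(\alpha)}(1)\calH$, i.e.\ an eigenvector of $H$ for the eigenvalue $E_{j(\alpha)}(1)$. Since $|\phi_\alpha\rangle\in M$ we have $PA|\phi_\alpha\rangle=(PAP)|\phi_\alpha\rangle$, whence $\WO|\tildephi_\alpha\rangle=A(PAP)^{-1}(PAP)|\phi_\alpha\rangle=A|\phi_\alpha\rangle$, and applying $PH$ yields $\Heff|\tildephi_\alpha\rangle=PHA|\phi_\alpha\rangle=E_{j(\alpha)}(1)PA|\phi_\alpha\rangle=E_{j(\alpha)}(1)|\tildephi_\alpha\rangle$. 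Linear independence of the $|\tildephi_\alpha\rangle$ in $M$ follows from independence of the $A|\phi_\alpha\rangle$ in $\calH$ (and from injectivity of $\WO$ on $M$, equivalent to invertibility of $PAP$).

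The main technical delicacy lies in making the cancellation of the phases rigorous: one must ensure that the $o(1)$ operator-norm errors really survive both the inversion and the right-multiplication by $D_\epsilon^{\pm1}$. This follows from the boundedness of $(PAP)^{-1}$ together with $\|D_\epsilon^{\pm1}\|=1$, but the ordering of the argument matters—one must first establish that $(PU_\epsilon(0,-\infty)P)^{-1}$ is uniformly bounded in $\epsilon$ before taking the product limit, in order to avoid a circular use of the invertibility of $PU_\epsilon(0,-\infty)P$ itself.
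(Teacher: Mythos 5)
Your proof is correct and follows essentially the same route as the paper's: both reduce the statement to the quoted adiabatic theorem, cancel the singular phases $\ee^{\imath\theta_j/\epsilon}$ against the block structure $P=\sum_{j\in I}P_j(0)$ to get $\WO = A(PAP)^{-1}$ (which the paper writes componentwise as $\sum_{jk}AP_j(0)B_{jk}$ with $B=(PAP)^{-1}$), and then construct the eigenstates via $|\tildephi_j\rangle = PA|\varphi_j\rangle$ together with the intertwining relation $AP_j(0)=P_j(1)A$, exactly as in the paper. Your packaging of the phases into the unitary $D_\epsilon$ and the explicit Neumann-series control of the $o(1)$ errors under inversion merely make rigorous what the paper compresses into its ``$\simeq$'' manipulations, so this is a presentational refinement rather than a different argument.
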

\begin{proof}
We first define $A_{jk}=P_j(0) A P_k(0)$, for $j$ and
$k$ in $I$. 
Then an inverse $B$ of $P A P$ 
is defined by 
$\sum_{k\in I} A_{jk} B_{kl}=\delta_{jl} P_j(0)$
where $B_{jk}=P_j(0) B P_k(0)$.
Then, $(P U_\epsilon(0,-\infty) P)^{-1}
\simeq \sum_{jk} \ee^{-\imath\theta_j/\epsilon} B_{jk}$
and 
\begin{eqnarray*}
\Omega_\epsilon(0,-\infty) & \defeq &
U_\epsilon(0,-\infty) P (P U_\epsilon(0,-\infty) P)^{-1}
\simeq \sum_{jk} A P_j(0) B_{jk}.
\end{eqnarray*}
Since the right hand side does not depend on 
$\epsilon$, then
$\Omega_\epsilon(0,-\infty)$ has no singularity at 
$\epsilon=0$ and
\begin{eqnarray*}
\WO &= & 
\lim_{\epsilon\to0} \Omega_\epsilon(0,-\infty) = 
\sum_{jk} A P_j(0) B_{jk}.
\end{eqnarray*}
This proves the existence of the wave operator.
To prove the existence of the states
$|\tildephi_j\rangle$ of the theorem, define
$|\tildephi_j\rangle=PA |\varphi_j\rangle$,
where 
$|\varphi_j\rangle$ is an eigenstate of
$P_j(0)$:
$P_j(0)|\varphi_j\rangle=|\varphi_j\rangle$.
Indeed, we have
\begin{eqnarray*}
\WO|\tildephi_j\rangle &=&
\WO P A P_j(0)|\varphi_j\rangle =
\sum_{km} A P_k(0) B_{km} A_{mj}|\varphi_j\rangle
= A P_j(0)|\varphi_j\rangle = P_j(1) A |\varphi_j\rangle,
\end{eqnarray*}
where we used the intertwining property
in the last equation.
We can now check that $\WO|\tildephi_j\rangle$ is an eigenstate of $H$
with eigenvalue $E_j(1)$. 
\begin{eqnarray}
H \WO|\tildephi_j\rangle &=&
h(1) P_j(1) A |\phi_j\rangle
= E_j(1) P_j(1) A |\phi_j\rangle
= E_j(1)  \WO|\tildephi_j\rangle.
\label{HOmega}
\end{eqnarray}
Finally, by multiplying 
eq.~(\ref{HOmega}) by $P$ from the left, we obtain
\begin{eqnarray*}
\Heff |\tildephi_j\rangle &=&
 E_j(1) P \WO | \tildephi_j\rangle =
 E_j(1) |\tildephi_j\rangle,
\end{eqnarray*}
because $P\WO=P$ and $P|\tildephi_j\rangle=|\tildephi_j\rangle$. 
\end{proof}
Thus, the eigenvalues of $\Heff$
are eigenvalues of the full Hamiltonian $H$.
This is exactly what is expected from an effective Hamiltonian.
In practice, the operator $A$ is not known and the states
$|\tildephi_j\rangle$ are obtained by diagonalizing $\Heff$.

\subsection{Series expansion}
We consider again the series expansion in terms of permutations.
A straightforward calculation~\cite{Goldstone,Gross} of the
Picard-Dyson series gives us
\begin{eqnarray*}
U_\epsilon(0,-\infty) |\Phi_0\rangle &=& |\Phi_0\rangle + 
   \sum_{n=1}^\infty \sum_{i_1\dots i_n}
   \frac{|\Phi_{i_1}\rangle\langle\Phi_{i_1}| V |\Phi_{i_{2}}\rangle
   \dots
   \langle\Phi_{i_{n-1}}| V |\Phi_{i_{n}} \rangle
   \langle\Phi_{i_n}| V |\Phi_{0} \rangle}
   {(E_0-E_{i_1}+n\imath\epsilon)(E_0-E_{i_2}+(n-1)\imath\epsilon)
   \dots
   (E_0-E_{i_n}+\imath\epsilon) } ,
\end{eqnarray*}
where we used the
completeness relation $1=\sum_i |\Phi_{i}\rangle \langle\Phi_{i}|$.
This expression clearly shows that the 
terms of the expansion (and the
evolution operator) are divergent 
as $\epsilon\to0$ when any $E_{i_k}$ is equal to $E_0$.

For $\sigma\in \calS_n$, we set 
$\Omega_\sigma(t)\defeq \Omega_\sigma(t,-\infty )$. We then  have
\begin{eqnarray*}
\Omega_\sigma(t) &=& (-\imath)^n
   \int_{-\infty}^{t} \dd t_1 
   \int_{-\infty}^{t_1} \dd t_2
    \dots 
   \int_{-\infty}^{t_{n-1}} \dd t_n
  Q \ee^{(\epsilon +\imath H_0)t_{\sigma(1)}} V 
  \ee^{-\imath H_0 t_{\sigma(1)}} R^1_\sigma 
\\&&
   \ee^{(\epsilon +\imath H_0)t_{\sigma(2)}} V 
  \ee^{-\imath H_0 t_{\sigma(2)}} R^2_\sigma 
  \dots
   R^{n-1}_\sigma 
  \ee^{(\epsilon +\imath H_0)t_{\sigma(n)}} V 
  \ee^{-\imath H_0 t_{\sigma(n)}} P,
\end{eqnarray*}
where $R^k_\sigma\defeq Q$ if $\sigma(k+1)>\sigma(k)$ and 
$R^k_\sigma\defeq -P$ if $\sigma(k+1)<\sigma(k)$.
We replace $R^k_\sigma$ by $\pm\sum_{\alpha_{k+1}} 
|\alpha_{k+1}\rangle\langle\alpha_{k+1}|$
where, if $R^k_\sigma=Q$, then $\pm=+$ and
the sum is over the image of $Q$, and
if $R^k_\sigma=-P$, then $\pm=-$ and
the sum is over the image of $P$.
Thus
\begin{eqnarray}
\Omega_\sigma(t) &=& (-\imath)^n (-1)^d
   \int_{-\infty}^{t} \dd t_1 
   \int_{-\infty}^{t_1} \dd t_2
    \dots 
   \int_{-\infty}^{t_{n-1}} \dd t_n
  \ee^{(\epsilon +\imath F_1-\imath F_2)t_{\sigma(1)}} 
  \ee^{(\epsilon +\imath F_2-\imath F_3)t_{\sigma(2)}} 
  \dots
  \ee^{(\epsilon +\imath F_n-\imath F_{n+1})t_{\sigma(n)}} 
\nonumber\\ &&
\sum_{\alpha_1 \dots \alpha_{n+1}}
|\alpha_1\rangle\langle\alpha_1 |V|\alpha_2\rangle 
  \dots
\langle\alpha_n |V|\alpha_{n+1}\rangle \langle \alpha_{n+1}|,
 \label{omegasigma}
\end{eqnarray}
where $d$ is the number of elements of the descent set of $\sigma$, 
$F_i$ is the energy of $\alpha_i$ and where the sum over $\alpha_1$
is over the image of $Q$, the sum over $\alpha_{n+1}$ is over the
image of $P$ and the sum over $\alpha_k$ for $1<k<n+1$
is over the image of $Q$ if $\sigma(k)>\sigma(k-1)$ and over the image of $P$ 
otherwise.
Consider now the time integral
\begin{eqnarray*}
f_\sigma(t) &\defeq & 
   \int_{-\infty}^{t} \dd t_1 
   \int_{-\infty}^{t_1} \dd t_2
    \dots 
   \int_{-\infty}^{t_{n-1}} \dd t_n
  \ee^{(\epsilon +\imath F_1-\imath F_2)t_{\sigma(1)}} 
  \ee^{(\epsilon +\imath F_2-\imath F_3)t_{\sigma(2)}} 
  \dots
  \ee^{(\epsilon +\imath F_n-\imath F_{n+1})t_{\sigma(n)}} 
\\&=&
   \int_{-\infty}^{t} \dd s_{\tau(1)}
   \int_{-\infty}^{s_{\tau(1)}} \dd s_{\tau(2)}
    \dots 
   \int_{-\infty}^{s_{\tau(n-1)}} \dd s_{\tau(n)}
  \ee^{(\epsilon +\imath F_1-\imath F_2)s_1} 
  \ee^{(\epsilon +\imath F_2-\imath F_3)s_2} 
  \dots
  \ee^{(\epsilon +\imath F_n-\imath F_{n+1})s_n}, 
\end{eqnarray*}
where $\tau=\sigma^{-1}$.
The integral over $s_{\tau(n)}$ is
\begin{eqnarray*}
   \int_{-\infty}^{s_{\tau(n-1)}} \dd s_{\tau(n)}
  \ee^{(\epsilon +\imath F_{\tau(n)}-\imath F_{\tau(n)+1})s_{\tau(n)}} 
&=&
  \frac{\ee^{(\epsilon +\imath F_{\tau(n)}-\imath F_{\tau(n)+1})s_{\tau(n-1)}}}
  {(\epsilon +\imath F_{\tau(n)}-\imath F_{\tau(n)+1})}.
\end{eqnarray*}
The integrand of the integral over $s_{\tau(n-1)}$
becomes
\begin{eqnarray*}
\frac{\ee^{(2\epsilon +\imath (F_{\tau(n)}+F_{\tau(n-1)}-
   F_{\tau(n)+1}-F_{\tau(n-1)+1})s_{\tau(n-1)}}}
  {(\epsilon +\imath F_{\tau(n)}-\imath F_{\tau(n)+1})}.
\end{eqnarray*}
A straightforward recursive argument shows that
\begin{eqnarray*}
f_\sigma(t) &=& 
\frac{\ee^{X_\sigma(n) t}}
  {X_\sigma(1)\dots X_\sigma(n)},
\end{eqnarray*}
where
\begin{eqnarray}
X_\sigma(k) &\defeq &
  k \epsilon + \imath (F_{\sigma^{-1}(n)} + \dots + F_{\sigma^{-1}(n-k+1)}
  - F_{\sigma^{-1}(n)+1} - \dots - F_{\sigma^{-1}(n-k+1)+1}).
\label{defXsigma}
\end{eqnarray}
Therefore,
\begin{eqnarray}
\Omega_\sigma(t) &=& 
\sum_{\alpha_1 \dots \alpha_{n+1}}\frac{(-\imath)^n (-1)^d\ee^{X_\sigma(n) t}}
  {X_\sigma(1)\dots X_\sigma(n)}
|\alpha_1\rangle\langle\alpha_1 |V|\alpha_2\rangle 
  \dots
\langle\alpha_n |V|\alpha_{n+1}\rangle \langle \alpha_{n+1}|.
\label{Omegasigmat}
\end{eqnarray}

\subsection{Examples}
A few examples of $\Omega_\sigma(0)$ are
\begin{eqnarray*}
\Omega_{(1)}(0) &=& 
  (-\imath) \sum_{\Phi_i\Phi_j}
\frac{|\Phi_i\rangle\langle\Phi_i|V|\Phi_j\rangle
  \langle \Phi_j|}{\epsilon + \imath (E^Q_i-E^P_j)},\\
\Omega_{(12)}(0) &=& 
  (-\imath)^2 \sum_{\Phi_i\Phi_j\Phi_k}
\frac{|\Phi_i\rangle\langle\Phi_i|V|\Phi_j\rangle
         \langle\Phi_j|V|\Phi_k\rangle
  \langle \Phi_k|}
  {\big(\epsilon + \imath (E^Q_j-E^P_k)\big)
  \big(2\epsilon + \imath (E^Q_i-E^P_k)\big)},\\
\Omega_{(21)}(0) &=& 
  -(-\imath)^2 \sum_{\Phi_i\Phi_j\Phi_k}
\frac{|\Phi_i\rangle\langle\Phi_i|V|\Phi_j\rangle
         \langle\Phi_j|V|\Phi_k\rangle
  \langle \Phi_k|}
  {\big(\epsilon + \imath (E^Q_i-E^P_j)\big)
  \big(2\epsilon + \imath (E^Q_i-E^P_k)\big)}.
\end{eqnarray*}
Finally we consider two examples that will prove
useful:
\begin{eqnarray*}
\Omega_{(213)}(0) &=& 
  -(-\imath)^3 \sum_{\Phi_i\Phi_j\Phi_k\Phi_l}
\frac{|\Phi_i\rangle\langle\Phi_i|V|\Phi_j\rangle
         \langle\Phi_j|V|\Phi_k\rangle
         \langle\Phi_k|V|\Phi_l\rangle
  \langle \Phi_l|}
  {\big(\epsilon + \imath (E^Q_k-E^P_l)\big)
  \big(2\epsilon + \imath (E^Q_k+E^Q_i-E^P_l-E^P_j)\big)
  \big(3\epsilon + \imath (E^Q_i-E^P_l)\big)},\\
\Omega_{(312)}(0) &=& 
  -(-\imath)^3 \sum_{\Phi_i\Phi_j\Phi_k\Phi_l}
\frac{|\Phi_i\rangle\langle\Phi_i|V|\Phi_j\rangle
         \langle\Phi_j|V|\Phi_k\rangle
         \langle\Phi_k|V|\Phi_l\rangle
  \langle \Phi_l|}
  {\big(\epsilon + \imath (E^Q_i-E^P_j)\big)
  \big(2\epsilon + \imath (E^Q_k+E^Q_i-E^P_l-E^P_j)\big)
  \big(3\epsilon + \imath (E^Q_i-E^P_l)\big)}.
\end{eqnarray*}
By adding these two terms,
we obtain a denominator involving only the difference
of two energies.
\begin{eqnarray*}
\Omega_{(213)}(0) +
\Omega_{(312)}(0) &=& 
  -(-\imath)^3 \sum_{\Phi_i\Phi_j\Phi_k\Phi_l}
\frac{|\Phi_i\rangle\langle\Phi_i|V|\Phi_j\rangle
         \langle\Phi_j|V|\Phi_k\rangle
         \langle\Phi_k|V|\Phi_l\rangle
  \langle \Phi_l|}
  {\big(\epsilon + \imath (E^Q_i-E^P_j)\big)
  \big(\epsilon + \imath (E^Q_k-E^P_l)\big)
  \big(3\epsilon + \imath (E^Q_i-E^P_l)\big)}.
\end{eqnarray*}

Note that the sum is simpler than either
$\Omega_{(213)}(0)$ or $\Omega_{(312)}(0)$. This is
a general statement and the simplification becomes spectacular
at higher orders. For the example of $n=7$, there is a single tree
$T$ which is the sum of 80 permutations, and the denominator
of $\Omega_T$ is simpler than the denominator of $\Omega_\sigma$
for any of the 80 permutations $\sigma$ of $S_T$. This will
be proved in section~\ref{sect-trees}.
Note also that, if we assume that the states in the image of $P$
(i.e. the model space)
are separated from the states in the image of $Q$ by a finite
gap $\delta$, so that
$E^Q_i - E^P_j \ge \delta$, then the 
denominators of all the examples are non-zero
when $\epsilon\to 0$. In other words,
the limit $\lim_{\epsilon\to0} \Omega_\sigma(0)$
exists for all the examples. In the
next section, we show that this result is
true for all permutations $\sigma$.

\subsection{Convergence of $\Omega_\sigma(t)$}
Definition~(\ref{defXsigma}) is convenient for
a computer implementation but it does not make it clear
that $X_\sigma(k)$ is nonzero if $\epsilon=0$.
For that purpose, we need an alternative expression
for $X_\sigma(k)$, which is essentially a corrected version
of the graphical rule given by 
Michels and Suttorp~\cite{Michels2}.
We first extend any permutation $\sigma\in \calS_n$
to the sequence of $n+2$ integers
$\barsigma=(\barsigma(1),\dots,\barsigma(n+2))=(0,\sigma(1),\dots,\sigma(n),0)$.
Then, for $k\in\{1,\dots,n\}$, we define the two sets
\begin{eqnarray*}
S_\sigma^<(k) &\defeq & \{i\,|\, 1\le i \le n+1
  \,\,\mathrm{and}\,\, \barsigma(i) < k \le \barsigma(i+1)\},\\
S_\sigma^>(k) &\defeq & \{i\,|\, 1\le i \le n+1
  \,\,\mathrm{and}\,\, \barsigma(i) \ge  k > \barsigma(i+1)\}.
\end{eqnarray*}
For example, if $\sigma=(41325)$, then
$S_\sigma^<(1)=\{1\},
S_\sigma^<(2)=\{1,3\},
S_\sigma^<(3)=\{1,3,5\},
S_\sigma^<(4)=\{1,5\},
S_\sigma^<(5)=\{5\}$
and
$S_\sigma^>(1)=\{6\},
S_\sigma^>(2)=\{2,6\},
S_\sigma^>(3)=\{2,4,6\},
S_\sigma^>(4)=\{2,6\},
S_\sigma^>(5)=\{6\}$.
The graphical meaning of these sets is illustrated in
Figure \ref{fig:rule}. Notice that the vertical axis is oriented downwards in order to reflect the time-ordering in the integrals $S_\sigma(t)$.
                      
\begin{figure}
\begin{center}
\includegraphics{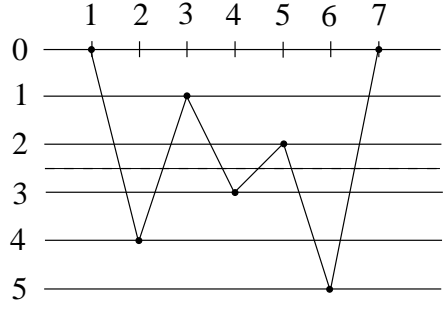}\end{center}
\caption{Construction of $S^<_\sigma(3)$ for 
  $\sigma=(41325)$. We build
$\bar\sigma=(0,4,1,3,2,5,0)$, we draw a continuous line
$L$ starting from $(1,\bar\sigma(1))=(1,0)$ to 
$(2,\bar\sigma(2))=(2,4)$,
to $(3,\bar\sigma(3))=(3,1)$,
\dots, up to
$(7,\bar\sigma(7))=(7,0)$. To determine 
$S^<_\sigma(3)$, we draw a horizontal dashed line just above
3 and we gather the segments of $L$ crossing the dashed line
from above.
In our case the segments are
$\big((1,0),(2,4)\big)$,
$\big((3,1),(4,3)\big)$ and
$\big((5,2),(6,5)\big)$.
$S^<_\sigma(3)$ is the set of abscissae of the first
point of each segment:
$S^<_\sigma(3)=\{1,3,5\}$.
Similarly,
$S^>_\sigma(3)$ is obtained from the segments that
cross the dashed line from below:
$S^>_\sigma(3)=\{2,4,6\}$.}
\label{fig:rule}
\end{figure}

\begin{lem}
(i)
$S^<_\sigma(k)$ cannot be empty and (ii)
$S^<_\sigma(k)$ and
$S^>_\sigma(k)$ have the same number of elements.
\end{lem}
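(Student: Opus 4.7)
The plan is to interpret $\barsigma$ as a lattice path and prove both statements by a crossing-count (telescoping) argument, exactly matching the graphical picture in Figure~\ref{fig:rule}.

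First, I will record the key structural fact about $\barsigma$: since $\sigma$ is a permutation of $\{1,\dots,n\}$, the word $(\barsigma(1),\dots,\barsigma(n+2))$ starts and ends at $0$ and, for each $k\in\{1,\dots,n\}$, takes the value $k$ exactly once (at some interior position). Thus $\barsigma(1)=\barsigma(n+2)=0<k$.

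For part (i), I will exhibit an explicit element of $S^<_\sigma(k)$. Since $k$ appears somewhere in $\sigma$, the set $J_k\defeq\{j\in\{1,\dots,n+1\}\,:\,\barsigma(j+1)\ge k\}$ is nonempty. Let $i\defeq\min J_k$. If $i=1$, then $\barsigma(i)=\barsigma(1)=0<k$; if $i>1$, minimality gives $\barsigma(i)<k$. In both cases $\barsigma(i)<k\le\barsigma(i+1)$, so $i\in S^<_\sigma(k)$.

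For part (ii), the natural tool is the indicator $\den_k:\{1,\dots,n+2\}\to\{0,1\}$ defined by $\den_k(j)=1$ if $\barsigma(j)\ge k$ and $\den_k(j)=0$ otherwise. A direct check from the definitions gives
\begin{eqnarray*}
i\in S^<_\sigma(k) &\iff& \den_k(i+1)-\den_k(i)=+1,\\
i\in S^>_\sigma(k) &\iff& \den_k(i+1)-\den_k(i)=-1,
\end{eqnarray*}
and otherwise $\den_k(i+1)-\den_k(i)=0$. Summing over $i=1,\dots,n+1$, the series telescopes:
\begin{eqnarray*}
|S^<_\sigma(k)|-|S^>_\sigma(k)| \;=\; \sum_{i=1}^{n+1}\bigl(\den_k(i+1)-\den_k(i)\bigr) \;=\; \den_k(n+2)-\den_k(1) \;=\; 0,
\end{eqnarray*}
since both $\barsigma(1)$ and $\barsigma(n+2)$ equal $0<k$. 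This yields $|S^<_\sigma(k)|=|S^>_\sigma(k)|$.

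There is no real obstacle here: the only subtle point is getting the endpoint boundary condition right, i.e.\ the role of the two artificial $0$'s added in the definition of $\barsigma$, which is precisely what makes the telescoping sum vanish and simultaneously guarantees that the first upward crossing in (i) occurs within the allowed range $1\le i\le n+1$.
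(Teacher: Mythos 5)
Your proof is correct and rests on the same idea as the paper's own argument, which derives the lemma from the graphical interpretation: counting crossings of a horizontal line by the polygonal path through the points $(i,\barsigma(i))$, using that the path starts and ends at level $0$ and attains the value $k$. Your indicator-function telescoping is just a rigorous discretization of the paper's topological sketch (which instead invokes the alternation of crossings from above and below), and your minimal-index construction for (i) makes precise the paper's claim that the line must first be crossed from above.
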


The lemma follows from the graphical interpretation of the construction of 
$S_\sigma^>(k)$ and $S_\sigma^<(k)$. The graph of $\sigma$ (constructed as 
in Figure 1) is a sequence of edges connecting the points
$(i,\barsigma(i))$.
Since the graph starts from $(0,0)$ and since there exists one point with 
ordinate $n$, any horizontal line with non integer ordinate $y$, $0<y<n$, 
will be crossed from above by a segment (remember the vertical axis is 
oriented downwards). A similar elementary topological argument shows 
that such a horizontal line is always crossed successively from above 
and below by segments, the series of crossings starting from above and 
ending from below, which implies $|S^<_\sigma(k)|=|S^>_\sigma(k)|$.

The key step in the proof of convergence is 
\begin{lem}
For $\sigma\in \calS_n$ and a given sequence 
$\alpha_1,\dots,\alpha_{n+1}$, compatible with $\sigma$ 
(see eq.~(\ref{omegasigma})), we have
\begin{eqnarray}
X_\sigma(k) &=& k\epsilon + \sum_{j\in S_\sigma^<(n-k+1)} \imath F_j^Q
-\sum_{j\in S_\sigma^>(n-k+1)} \imath F_j^P,
\label{Xknew}
\end{eqnarray}
where we write $F_i^Q=F_i$ (resp. $F_i^P=F_i$) when $\alpha_i$ 
belongs to the image of $Q$ (resp. of $P$).
\end{lem}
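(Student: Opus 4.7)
The plan is to prove the identity by reindexing the sum in the definition of $X_\sigma(k)$ and then applying a discrete summation by parts. Writing $m \defeq n-k+1$, the substitution $i = \sigma^{-1}(l)$ converts both sums in $X_\sigma(k) - k\epsilon$ into sums over $\{i : \sigma(i) \ge m\}$, giving
\begin{equation*}
X_\sigma(k) - k\epsilon \;=\; \imath \sum_{i:\,\sigma(i)\ge m}\bigl(F_i - F_{i+1}\bigr).
\end{equation*}

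Next I would introduce the indicator $\chi_j \defeq \mathbf{1}[\bar\sigma(j) \ge m]$ for $j\in\{1,\dots,n+2\}$; since $\bar\sigma(1)=\bar\sigma(n+2)=0$ and $m\ge 1$, we have $\chi_1=\chi_{n+2}=0$. Because $\bar\sigma(i+1)=\sigma(i)$, the previous display equals $\imath\sum_{i=1}^{n}\chi_{i+1}(F_i-F_{i+1})$, and a one-line Abel summation (using the vanishing of $\chi_1$ and $\chi_{n+2}$ to kill the boundary contributions) rewrites this as
\begin{equation*}
X_\sigma(k) - k\epsilon \;=\; \imath \sum_{j=1}^{n+1} (\chi_{j+1}-\chi_j)\,F_j.
\end{equation*}

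The coefficient $\chi_{j+1}-\chi_j$ equals $+1$ iff $\bar\sigma(j)<m\le\bar\sigma(j+1)$, that is iff $j\in S^<_\sigma(m)$; it equals $-1$ iff $j\in S^>_\sigma(m)$; and otherwise it vanishes. The sum therefore collapses to $\imath\sum_{j\in S^<_\sigma(m)}F_j - \imath\sum_{j\in S^>_\sigma(m)}F_j$. To upgrade $F_j$ to the decorated $F_j^Q$ or $F_j^P$, I then invoke the compatibility rule stated just below eq.~(\ref{omegasigma}): the boundary index $j=1$, whenever it belongs to $S^<_\sigma(m)$, carries $\alpha_1$ in the image of $Q$ by convention, while $j=n+1\in S^>_\sigma(m)$ carries $\alpha_{n+1}$ in the image of $P$; for an interior index, $j\in S^<_\sigma(m)$ forces $\bar\sigma(j+1)>\bar\sigma(j)$, i.e. $\sigma(j)>\sigma(j-1)$, so that $\alpha_j$ lies in the image of $Q$ and $F_j=F_j^Q$, while $j\in S^>_\sigma(m)$ forces the opposite descent and yields $F_j=F_j^P$.

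The only real obstacle is careful bookkeeping of the two index shifts at play: the relation $\bar\sigma(j+1)=\sigma(j)$ between the enlarged and original permutation, and the descent convention $\sigma(k)\gtrless\sigma(k-1)$ that dictates whether $\alpha_k$ sits in $\mathrm{Im}(Q)$ or $\mathrm{Im}(P)$. Once these are aligned, the identity follows from an Abel summation whose boundary terms vanish automatically.
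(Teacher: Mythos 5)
Your proof is correct, but it takes a genuinely different route from the paper's. The paper argues by induction on $k$: it verifies $k=1$ directly (identifying $S^<_\sigma(n)=\{\sigma^{-1}(n)\}$ and $S^>_\sigma(n)=\{\sigma^{-1}(n)+1\}$), then uses the recursion $X_\sigma(k+1)=X_\sigma(k)+\epsilon+\imath(F_j-F_{j+1})$ with $j=\sigma^{-1}(n-k)$ and runs a case analysis on the local shape of $\sigma$ at $j$ (four interior cases plus the boundary cases $j=1$ and $j=n$) to check that the right-hand side of eq.~(\ref{Xknew}) changes in the same way. Your argument is instead direct and closed-form: after the reindexing $i=\sigma^{-1}(l)$, the Abel summation with the indicator $\chi_j=\mathbf{1}[\barsigma(j)\ge m]$, $m=n-k+1$, identifies $S^<_\sigma(m)$ and $S^>_\sigma(m)$ as precisely the positions of the up- and down-jumps of $\chi$, with the boundary terms killed by $\chi_1=\chi_{n+2}=0$; your decoration step (upgrading $F_j$ to $F_j^Q$ or $F_j^P$) correctly invokes the compatibility conventions below eq.~(\ref{omegasigma}), and the implicit facts that $1\notin S^>_\sigma(m)$ and $n+1\notin S^<_\sigma(m)$ are handled automatically by your coefficients $\chi_{j+1}-\chi_j$, again because $\chi_1=\chi_{n+2}=0$. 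As for what each approach buys: the paper's induction is longer and more pedestrian, but it tracks explicitly how the sets $S^{\lessgtr}_\sigma$ evolve as $k$ grows, matching the graphical rule of Fig.~\ref{fig:rule}; your summation by parts is shorter, eliminates the case analysis entirely, makes the crossing interpretation of $S^<_\sigma$ and $S^>_\sigma$ transparent, and even delivers part (ii) of the preceding lemma ($|S^<_\sigma(k)|=|S^>_\sigma(k)|$, proved in the paper by a separate topological argument) for free, since $\sum_{j=1}^{n+1}(\chi_{j+1}-\chi_j)=\chi_{n+2}-\chi_1=0$ forces the number of $+1$ jumps to equal the number of $-1$ jumps.
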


\begin{proof}
We first show that it is true for $k=1$.
Indeed, 
$X_\sigma(1)=\epsilon + \imath(F_{\sigma^{-1}(n)}-F_{\sigma^{-1}(n)+1})$.
Let $j=\sigma^{-1}(n)$, we have
$\sigma(j)=n$.
Then, either 
$j=1$ and $|\alpha_1\rangle$ is in the image of $ Q$, or
$j>1$ and $\sigma\in \calS_n$ implies 
$\sigma(j)=n>\sigma(j-1)$, so that
$|\alpha_j\rangle$ is in the image of $ Q$. Thus, in all cases,
$F_{\sigma^{-1}(n)}= F_{\sigma^{-1}(n)}^Q$.
Consider now $F_{j+1}$. Either
$j=n$ and 
$|\alpha_{j+1}\rangle= |\alpha_{n+1}\rangle$ is in the image of $ P$,
or
$j<n$ and 
$\sigma(j+1)<\sigma(j)=n$, so that 
$|\alpha_{j+1}\rangle$ is in the image of $P$.
Thus, in all cases,
$F_{\sigma^{-1}(n)+1}= F_{\sigma^{-1}(n)+1}^P$.
Therefore, 
$X_\sigma(1)=\epsilon + \imath (F^Q_{\sigma^{-1}(n)}-
F^P_{\sigma^{-1}(n)+1})$.
On the other hand,
$S_\sigma^<(n)=\{\sigma^{-1}(n)\}$
and
$S_\sigma^>(n)=\{\sigma^{-1}(n)+1\}$ since $(j,n)$ is the only point of the graph with ordinate $n$. Thus,
the two members of eq.~(\ref{Xknew}) are equal for $k=1$.

Assume now that eq.~(\ref{Xknew}) holds for all the $X_\sigma(i)$ with
 $i=1,\dots,k$, $k<n$, and consider the equation (which is true by definition of the $X_\sigma(i)$s):
\begin{eqnarray}\label{rec}
X_\sigma(k+1) &=& X_\sigma(k)+\epsilon+ \imath (F_j- F_{j+1}),
\end{eqnarray}
where $j=\sigma^{-1}(n-k)$.
We first treat the case $1<j<n$.
Four possible situations can arise:
(i) $\sigma(j-1)>\sigma(j)>\sigma(j+1)$,
(ii) $\sigma(j-1)<\sigma(j)>\sigma(j+1)$,
(iii) $\sigma(j-1)>\sigma(j)<\sigma(j+1)$ and
(iv) $\sigma(j-1)<\sigma(j)<\sigma(j+1)$.
In case (i), we have 
$F_j=F^P_j$ and $F_{j+1}=F^P_{j+1}$.
On the other hand, condition (i) implies
$\barsigma(j)>\barsigma(j+1)>\barsigma(j+2)$,
so that $S_\sigma^>(n-k)=S_\sigma^>(n-k+1)$
and
$S_\sigma^<(n-k)$ is obtained from
$S_\sigma^<(n-k+1)$ by removing $\{j\}$
and adding $\{j+1\}$. 
eq.~(\ref{rec}) together with the hypothesis that eq.~(\ref{Xknew}) holds for $X_\sigma(k)$ 
imply that eq.~(\ref{Xknew}) holds for $X_\sigma(k+1)$.
Case (ii) implies
$F_j=F_j^Q$ and $F_{j+1}=F_{j+1}^P$,
case (iii) implies
$F_j=F_j^P$ and $F_{j+1}=F_{j+1}^Q$,
case (iv) implies
$F_j=F_j^Q$ and $F_{j+1}=F_{j+1}^Q$.
In all cases, these identities imply that the two expressions (\ref{rec}) and (\ref{Xknew}) for $X_\sigma(k+1)$ do agree.

It remains to treat the boundary cases.
If $j=1$, then $F_j=F^Q_1$ and we have
either (i) $\sigma(1)<\sigma(2)$ and
$F_2=F^Q_2$ or (ii) $\sigma(1)>\sigma(2)$ and
$F_2=F^P_2$. 
We know that $\barsigma(1)=0$, thus,
case (i) corresponds to
$\barsigma(1)<\barsigma(2)<\barsigma(3)$,
so that according to eq.~(\ref{Xknew}),
$$X_\sigma(k+1)-X_\sigma(k)=\imath (F_1^Q-F_2^Q),$$
in agreement with eq.~(\ref{rec}).
In case (ii) we have
$\barsigma(1)<\barsigma(2)>\barsigma(3)$,
which amounts to add $\imath F^Q_1$ and remove $\imath F^P_2$.
Again, eq.~(\ref{Xknew}) holds for $k+1$.
Finally, if $j=n$, then $F_{j+1}=F^P_{n+1}$ and we have
(i) $\sigma(n-1)<\sigma(n)$ and
$F_n=F^Q_n$ or (ii) $\sigma(n-1)>\sigma(n)$ and
$F_n=F^P_n$. 
Case (i) corresponds to
$\barsigma(n)<\barsigma(n+1)>\barsigma(n+2)$,
case (ii) corresponds to
$\barsigma(n)>\barsigma(n+1)>\barsigma(n+2)$.
In all cases, the relation given by
eq.~(\ref{Xknew}) is satisfied for
$k+1$ and the induction proof is complete.
\end{proof}

We can now ready to prove
\begin{thm}
\label{convpermut}
The limit
\begin{eqnarray}
\WO_\sigma &=& 
  \lim_{\epsilon\to0} \Omega_\sigma(0),
\end{eqnarray}
is well-defined.
\end{thm}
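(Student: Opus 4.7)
The plan is to combine the explicit formula eq.~(\ref{Omegasigmat}) for $\Omega_\sigma(0)$ with the reformulation of $X_\sigma(k)$ given by eq.~(\ref{Xknew}), together with the assumption of a finite gap $\delta$ between the energies of the states in the model space $M$ and those in its complement. Since the time integrals have already been performed, the convergence problem at $\epsilon = 0$ reduces to showing that the denominators $X_\sigma(k)$ stay bounded away from zero uniformly in $\epsilon \geq 0$, and then to controlling the sum over intermediate states.

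First I would use eq.~(\ref{Xknew}) to write, for $\epsilon \geq 0$,
\[
X_\sigma(k) = k\epsilon + \imath\Big(\sum_{j\in S_\sigma^<(n-k+1)} F_j^Q - \sum_{j\in S_\sigma^>(n-k+1)} F_j^P\Big).
\]
The two preceding lemmas give $m_k \defeq |S_\sigma^<(n-k+1)| = |S_\sigma^>(n-k+1)| \geq 1$. Pairing the $F^Q$'s with the $F^P$'s and using $F^Q - F^P \geq \delta > 0$ for every such pair, the imaginary part of $X_\sigma(k)$ has absolute value at least $m_k \delta$, so
\[
|X_\sigma(k)| \geq m_k \delta \geq \delta
\]
uniformly in $\epsilon \geq 0$ and in every sequence $(\alpha_1,\dots,\alpha_{n+1})$ compatible with $\sigma$. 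Hence each factor $1/X_\sigma(k)$ in eq.~(\ref{Omegasigmat}) is continuous in $\epsilon$ at $\epsilon = 0$ and is bounded in modulus by $1/(m_k \delta)$.

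It then remains to lift this pointwise continuity of the coefficients to convergence of the operator $\Omega_\sigma(0)$. Applied to any vector of the finite-dimensional model space $M$, the sum over $\alpha_{n+1}$ reduces to finitely many terms; the remaining sums over $\alpha_1,\dots,\alpha_n$ are controlled by the uniform bound $|1/X_\sigma(k)| \leq 1/(m_k\delta)$ together with the boundedness of $V$ and the completeness relations restricted to the images of $P$ and $Q$. These bounds furnish a majorant independent of $\epsilon$ that justifies an exchange of limit and summation, yielding
\[
\WO_\sigma = \sum_{\alpha_1 \dots \alpha_{n+1}} \frac{(-\imath)^n(-1)^d}{X_\sigma(1)\cdots X_\sigma(n)}\bigg|_{\epsilon=0}\, |\alpha_1\rangle\langle\alpha_1|V|\alpha_2\rangle \cdots \langle\alpha_{n+1}|.
\]

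The main obstacle is precisely this last interchange: because the sum over intermediate states $\alpha_i$ in the image of $Q$ can be infinite, dominated convergence crucially relies on the uniform-in-$\epsilon$ gap bound on $|X_\sigma(k)|$ derived in the first step. With that bound in hand the argument is direct; without it, neither the existence of the limit nor its independence from the order of limit and summation would be guaranteed.
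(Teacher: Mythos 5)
Your proof is correct and follows essentially the same route as the paper: you use the alternative expression (\ref{Xknew}) for $X_\sigma(k)$, the lemma that $|S_\sigma^<(n-k+1)|=|S_\sigma^>(n-k+1)|\ge 1$, and the gap hypothesis $F_j^Q-F_i^P\ge\delta$ to conclude $|X_\sigma(k)|\ge\delta$ uniformly in $\epsilon\ge 0$, which is exactly the paper's bound $|X_\sigma(k)|^2\ge k^2\epsilon^2+n_k^2\delta^2\ge\delta^2$. Your closing discussion of interchanging the $\epsilon\to 0$ limit with the (possibly infinite) sum over intermediate states is an addition the paper does not make --- it treats the theorem as term-wise convergence of the denominators and stops there --- though be aware that your claimed $\epsilon$-independent majorant would need more than boundedness of $V$ to actually dominate an infinite sum over the image of $Q$.
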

\begin{proof}
To prove the term-wise convergence
of $\UGL_\sigma(t)$ as $\epsilon\to 0$,
consider eq.~(\ref{Xknew}).
The sets  $S_\sigma^<(n-k+1)$
and $S_\sigma^>(n-k+1)$ have the
same number of elements, say $n_k$
and, by the gap hypothesis, we have
$F_j^Q - F_i^P \ge \delta$ for any $i$ and $j$. Therefore,
$|X_\sigma(k)|^2 \ge  k^2\epsilon^2 + 
  n_k^2 \delta^2 \ge \delta^2$,
since
$n_k\ge 1$. Thus, the denominator remains away
from zero by a finite amount for any $\epsilon\geq 0$
and the limit of $1/X_\sigma(k)$ for 
$\epsilon\to 0$ is well-defined.
\end{proof}

\section{Trees}
\label{sect-trees}
We showed that, for each permutation $\sigma$,
the wave operator $\Omega_\sigma(t,-\infty)$ has a well-defined 
limit as $\epsilon\to 0$. 
The detailed proof was rather lengthy and the final expression for 
$X_\sigma(k)$ suggests physically the simultaneous occurrence of 
transitions from states of the model space to states out of it.
The convergence is actually much easier to show in terms of trees, 
and the expressions showing up in the expansion are simpler, 
mathematically and physically, each factor of the denominator corresponding 
to a single difference between an energy in the model space and
an energy out of it.

In this section, if $N$ is the dimension of the model space, we write $i\in Q$ for $i>N$ and $j\in P$ 
for $1\leq j\leq N$, both for notational simplicity, and to emphasize 
the meaning of the indices, that correspond respectively to eigenstates 
in the image of $Q$ and in the image of $P$ (i.e. the model space). 

\begin{prop}
\label{OmegaTeps}
If $T=T_1\vee T_2$, then, for $t\le 0$,
\begin{eqnarray}
\Omega_T(t)\defeq\Omega_T(t,-\infty) &=& \sum_{i\in Q, j\in P}
  \ee^{(\imath E^Q_i-\imath E^P_j+ |T|\epsilon)t}
   \Omega^{ij}_T \,|\Phi_i\rangle\langle \Phi_j|,
\label{Omegaalphan}
\end{eqnarray}
where $\Omega^{ij}_T$ is obtained recursively by:

For $T=\Y$, $
\Omega^{ij}_{\Y} \defeq
-\imath \frac{\langle\Phi_i |V|\Phi_j\rangle}
   {\imath E_i^Q-\imath E_j^P+\epsilon}.$

For $T_1=\|$, $T_2\not=\|$: $
\Omega_T^{ij}  \defeq -\imath \sum\limits_{k \in Q} 
\frac{\langle\Phi_i |V|\Phi_k\rangle \Omega^{kj}_{T_2}}
  {\imath E_i^Q-\imath E_j^P+ |T|\epsilon}.$
  
For $T_1\not=\|$, $T_2=\|$:
$
\Omega_T^{ij}  \defeq \imath \sum\limits_{k \in P} 
  \frac{\Omega^{ik}_{T_1} \langle\Phi_k |V|\Phi_j\rangle}
  {\imath E_i^Q-\imath E_j^P+ |T|\epsilon}.$
  
For $T_1\not=\|$, $T_2\not=\|$:
$\Omega_T^{ij}  \defeq \imath \sum\limits_{k \in P, l\in Q} 
  \frac{\Omega^{ik}_{T_1} \langle\Phi_k |V|\Phi_l\rangle \Omega^{lj}_{T_2}}
  {\imath E_i^Q-\imath E_j^P+ |T|\epsilon}.$
  
\end{prop}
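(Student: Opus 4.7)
The natural strategy is to prove the proposition by structural induction on the tree $T$, feeding the recursion of Theorem~\ref{thmOmegaT} with the induction hypothesis and performing the resulting single time integral explicitly.

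\emph{Base case.} For $T=\Y$, use the explicit formula $\UGL_{\Y}(t)=-\imath\int_{-\infty}^{t}\dd s\,QH(s)P$. Since $s\le t\le 0$, the adiabatic factor simplifies as $\ee^{-\epsilon|s|}=\ee^{\epsilon s}$, and inserting the spectral resolutions $Q=\sum_{i\in Q}|\Phi_i\rangle\langle\Phi_i|$, $P=\sum_{j\in P}|\Phi_j\rangle\langle\Phi_j|$ gives an integrand $\ee^{(\epsilon+\imath E^Q_i-\imath E^P_j)s}\langle\Phi_i|V|\Phi_j\rangle$. The integral converges at $-\infty$ because $\epsilon>0$, and evaluating it at $s=t$ reproduces $\UGL^{ij}_{\Y}$ exactly as claimed.

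\emph{Inductive step.} Assume the formula holds for every tree of order smaller than $|T|$. For $T=T_1\vee T_2$ with $T_1,T_2\neq\|$ (the two mixed cases work identically with $\UGL_{\|}(s)=P$), Theorem~\ref{thmOmegaT} gives
\begin{eqnarray*}
\UGL_T(t) &=& \imath\int_{-\infty}^{t}\dd s\,\UGL_{T_1}(s)H(s)\UGL_{T_2}(s).
\end{eqnarray*}
Plug in the induction hypothesis for $\UGL_{T_1}$ and $\UGL_{T_2}$ and expand $H(s)=\ee^{\epsilon s}\ee^{\imath H_0 s}V\ee^{-\imath H_0 s}$. Inserting the factor $\langle\Phi_k|\cdot|\Phi_l\rangle$ sandwiched by the two wave-operator pieces, $k\in P$ and $l\in Q$ because of the outer factors of $\UGL_{T_1}$ and $\UGL_{T_2}$. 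The $s$-dependent exponents then collapse telescopically: the intermediate energies $E^P_k$ and $E^Q_l$ carried by $|\Phi_k\rangle\langle\Phi_k|$ and $|\Phi_l\rangle\langle\Phi_l|$ cancel against their partners coming from $\ee^{\imath H_0 s}$ and $\ee^{-\imath H_0 s}$, leaving $\ee^{(\imath E^Q_i-\imath E^P_j+(|T_1|+1+|T_2|)\epsilon)s}=\ee^{(\imath E^Q_i-\imath E^P_j+|T|\epsilon)s}$. A single exponential integral from $-\infty$ to $t$ (convergent because $|T|\epsilon>0$) produces the factor $\ee^{(\imath E^Q_i-\imath E^P_j+|T|\epsilon)t}/(\imath E^Q_i-\imath E^P_j+|T|\epsilon)$, and collecting the $\imath$ in front recovers exactly the recursive formula for $\UGL^{ij}_T$.

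\emph{Boundary cases.} When $T_1=\|$ (resp.\ $T_2=\|$), use the corresponding case of Theorem~\ref{thmOmegaT} and the convention $\UGL_{\|}(s)=P$. The sum over intermediate $k$ then runs only over $Q$ (resp.\ over $P$), and the telescoping of exponents goes through unchanged with $|T|=|T_2|+1$ or $|T|+1=|T_1|+1$, yielding the two intermediate recursions stated in the proposition.

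\emph{Expected obstacle.} The only subtlety is bookkeeping: one must check, case by case, that the indices inserted between $\UGL_{T_1}$, $H(s)$ and $\UGL_{T_2}$ are correctly classified in the image of $P$ or $Q$ so that the telescoping of $\ee^{\pm\imath E s}$ factors is consistent with the induction hypothesis, and that the $\epsilon$-contribution of $H(s)$ combines with $|T_1|\epsilon$ and $|T_2|\epsilon$ to give precisely $|T|\epsilon$. Once these conventions are tracked, the integration is a one-line computation. Note that no gap hypothesis is needed here, since $\epsilon>0$ is sufficient for all denominators to be nonzero; the gap only enters when one takes the limit $\epsilon\to 0$ in the next section.
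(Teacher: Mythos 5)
Your proof is correct and follows essentially the same route as the paper's: structural induction that feeds the recursion of Theorem~\ref{thmOmegaT} with the inductive formula, inserts the spectral resolutions of $P$ and $Q$ so that the exponentials telescope to $\ee^{(\imath E^Q_i-\imath E^P_j+|T|\epsilon)s}$, and evaluates the single convergent exponential integral (the paper writes out the case $T_1=\|$ and declares the others similar, while you detail the generic case and sketch the boundary ones). One typographical slip only: in your boundary cases the counting should read $|T|=|T_1|+1$, not $|T|+1=|T_1|+1$.
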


\begin{proof}
The computation of
 $\Omega^{ij}_{\Y}$ follows from eq.~(\ref{Omegasigmat}).
 Let us consider for example the case $T_1=\|$, $T_2\not=\|$.
Then, by applying theorem~\ref{OmegaT}:
\begin{eqnarray*}
\Omega_T(t) &=& -\imath \sum\limits_{i\in Q,j\in P} \int_{-\infty}^t \dd s
   Q \ee^{\epsilon s} 
  \ee^{-\imath E_i^Q s}
  \ee^{(\imath E_i^Q-\imath E_j^P+ |T_2|\epsilon)s}\Omega^{ij}_{T_2} 
  \ee^{iH_0s}V
  |\Phi_i\rangle \langle \Phi_j|.
\end{eqnarray*}
We replace $Q$ by $\sum\limits_{k\in Q} |\Phi_k\rangle\langle \Phi_k|$
and obtain, by using
$|T|=|T_2|+1$,
\begin{eqnarray*}
\Omega_T(t) &=& -\imath \sum\limits_{k,i\in Q, j\in P} \int_{-\infty}^t \dd s
  \ee^{(\imath E_k^Q-\imath E_j^P+ |T|\epsilon)s}
  \langle\Phi_k |V|\Phi_i\rangle \Omega^{ij}_{T_2} 
  |\Phi_k\rangle \langle \Phi_j|
\\&=&
 -\imath \sum\limits_{k,i\in Q, j\in P}
  \frac{\ee^{(\imath E_k^Q-\imath E_j^P+ |T|\epsilon)t}}
  {\imath E_k^Q-\imath E_j^P+ |T|\epsilon}
  \langle\Phi_k |V|\Phi_i\rangle \Omega^{ij}_{T_2} 
  |\Phi_k\rangle \langle \Phi_j|,
\end{eqnarray*}
The two other cases can be treated similarly.
\end{proof}

Since, for arbitrary $i$ and $j$, $E_i^Q- E_j^P \ge \delta$, we get:
\begin{cor}
\label{convtree}
The limit $\bar\Omega_T(t)=\lim\limits_{\epsilon\to 0}\Omega_T(t)$ is well defined.
\end{cor}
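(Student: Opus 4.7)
The plan is to prove the corollary by a straightforward induction on $|T|$, using the recursive formulas of Proposition~\ref{OmegaTeps}. The key structural observation is that, in contrast with the permutation expansion where the denominators $X_\sigma(k)$ involved sums of several energies and required the delicate graphical argument in the proof of Theorem~\ref{convpermut}, every denominator arising in the tree recursion is of the single form $\imath E_i^Q - \imath E_j^P + m\epsilon$ with $i\in Q$, $j\in P$, and $m=|T'|$ for some subtree $T'$. By the gap hypothesis $E_i^Q - E_j^P \ge \delta > 0$, the modulus of such a denominator satisfies
\begin{eqnarray*}
|\imath E_i^Q - \imath E_j^P + m\epsilon|^2 &=& m^2\epsilon^2 + (E_i^Q - E_j^P)^2 \;\ge\; \delta^2
\end{eqnarray*}
uniformly in $\epsilon\ge 0$, so $1/(\imath E_i^Q - \imath E_j^P + m\epsilon)$ is a continuous function of $\epsilon$ at $\epsilon=0$.

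For the base case $T=\Y$, the coefficient $\Omega^{ij}_{\Y}=-\imath\langle\Phi_i|V|\Phi_j\rangle/(\imath E_i^Q-\imath E_j^P+\epsilon)$ has a well-defined limit by the observation above. For the inductive step, I would assume that $\bar\Omega_{T_1}^{ik}$ and $\bar\Omega_{T_2}^{lj}$ exist for every admissible pair of indices whenever $|T_1|,|T_2|<|T|$. In each of the three recursive cases of Proposition~\ref{OmegaTeps}, the expression for $\Omega_T^{ij}$ is a finite linear combination (in the indices ranging over $P$ and $Q$ intersected with the spectrum of $H_0$) of products of matrix elements of $V$ by coefficients of the form $\Omega_{T_1}^{ik}$ and/or $\Omega_{T_2}^{lj}$, divided by a single factor bounded below by $\delta$ in modulus. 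Since matrix elements of $V$ are bounded and the induction hypothesis supplies the limits of the subtree coefficients, each $\Omega_T^{ij}$ has a limit as $\epsilon\to 0$.

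Finally, the prefactor $\ee^{(\imath E_i^Q-\imath E_j^P+|T|\epsilon)t}$ appearing in eq.~(\ref{Omegaalphan}) is, for fixed $t\le 0$, continuous in $\epsilon$ and bounded by $1$, so passing to the limit $\epsilon\to 0$ inside the sum over $i\in Q$, $j\in P$ (convergence of which follows from the same bounds, exactly as for the unperturbed Picard--Dyson series) yields the desired expression for $\bar\Omega_T(t)$. There is no real obstacle here: the whole difficulty of Theorem~\ref{convpermut} came from the fact that a single permutation mixes contributions whose partial energy denominators involve sums over many states, whereas the tree resummation groups the permutations precisely so that each denominator reduces to a single $E^Q-E^P$ gap. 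The induction simply propagates this single-gap structure through the grafting operation $T=T_1\vee T_2$.
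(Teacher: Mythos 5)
Your proof is correct and follows essentially the same route as the paper, which disposes of the corollary in a single line: since every denominator in Proposition~\ref{OmegaTeps} involves a single difference $E_i^Q-E_j^P\ge\delta$, all factors $1/(\imath E_i^Q-\imath E_j^P+m\epsilon)$ stay bounded away from zero uniformly in $\epsilon\ge 0$. Your induction on $|T|$ merely makes explicit what the paper's recursive construction already guarantees, so the two arguments coincide in substance.
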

 
\subsection{Relation with the Rayleigh-Schr\"odinger perturbation theory}
Kvasni{\v{c}}ka~\cite{Kvasnicka-74} and Lindgren~\cite{Lindgren74}
independenty obtained an equation for the time-independent 
Rayleigh-Schr\"odinger perturbation theory of possibly degenerate systems:
\begin{eqnarray}
{[}\baromega,H_0{]}P &=& V \baromega P - \baromega P V \baromega P,
\label{Lindgren}
\end{eqnarray}
where the time-independent wave operator $\baromega P$
transforms eigenstates $|\Phi_0\rangle$ of $H_0$ into
eigenstates $\baromega P |\Phi_0\rangle$ of $H_0+V$
and where $P\baromega P=P$ (see ref.~\onlinecite{LindgrenMorrison}
p.~202 for details).

Equation~(\ref{Lindgren}) is an important generalization of 
Bloch's classical results~\cite{Bloch-58} because
it is also valid for a quasi-degenerate model space
(i.e. when the eigenstates of $H_0$ in the model
space have different energies).
The relation between time-dependent and time-independent
perturbation theory is established by the following
proposition:
\begin{prop}We have
$\baromega P=\WO=\lim_{\epsilon\to0}\Omega(0)$.
\end{prop}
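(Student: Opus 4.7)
The strategy is to identify $\WO$ with $\baromega P$ by verifying the two conditions that characterize $\baromega P$ in the Kvasni\v{c}ka--Lindgren framework: (a) the projection normalization $P\baromega P = P$, and (b) the fact that $\baromega P$ sends the model space into the eigenspace of $H=H_0+V$ spanned by the adiabatically continued eigenvectors with eigenvalues $\{E_j(1)\}_{j\in I}$. Once (a) and (b) are shown for $\WO$, a short uniqueness argument (using the invertibility condition already imposed on the model space) forces $\WO = \baromega P$.

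First I would check (a). By construction $\Omega(t,t_0) = U(t,t_0) P (P U(t,t_0) P)^{-1}$ factors through $M$, so $\Omega(t,t_0) = \Omega(t,t_0) P$; and a direct computation gives $P\Omega(t,t_0) = P$ on $M$, extended trivially to all of $\calH$. Passing to the adiabatic limit (whose existence is guaranteed by Theorem~\ref{Nonperthm}) yields $P\WO = P$ and $\WO P = \WO$, hence $P\WO P = P$.

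Next I would establish (b). Theorem~\ref{Nonperthm} produces states $|\tildephi_j\rangle = PA|\varphi_j\rangle$ in $M$ with the property that $\WO|\tildephi_j\rangle$ is an eigenstate of $H$ with eigenvalue $E_j(1)$, for each $j\in I$. The key additional observation is that the family $\{|\tildephi_j\rangle\}_{j\in I}$ spans $M$: the $|\varphi_j\rangle$ form a basis of $M = \bigoplus_{j\in I} P_j(0)\calH$, and the invertibility of $PAP$ (which is the very invertibility hypothesis ensuring that $\WO$ exists) implies that $PA$ restricted to $M$ is a bijection. By linearity, $\WO$ therefore maps every vector of $M$ into $E_I \defeq \mathrm{span}\{\WO|\tildephi_j\rangle : j\in I\}$, which is precisely the sum of the eigenspaces of $H$ for the eigenvalues $E_j(1)$, $j\in I$. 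This is condition (b).

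Finally, the uniqueness step: if $\bar\omega_1, \bar\omega_2$ are two operators satisfying $P\bar\omega_k P = P$ and mapping $M$ into $E_I$, then for any $v\in M$ the vector $w = (\bar\omega_1 - \bar\omega_2)v$ lies in $E_I$ and satisfies $Pw = v - v = 0$; invertibility of $P$ restricted to $E_I$ (equivalent to the invertibility of $PAP$) forces $w=0$. Applying this to $\bar\omega_1 = \baromega$ and $\bar\omega_2 = \WO$ yields $\baromega P = \WO$, completing the proof. The main delicate point is the spanning argument for $\{|\tildephi_j\rangle\}_{j\in I}$, which must carefully use the invertibility of $PAP$; as a cross-check, one can also verify Lindgren's equation~(\ref{Lindgren}) directly for $\WO$ by combining the operator identity $H\WO = \WO\Heff$ (obtained by extending the eigenvalue identity of Theorem~\ref{Nonperthm} from the basis $\{|\tildephi_j\rangle\}$ to all of $M$ by linearity) with $[P,H_0]=0$ and $P\WO = P$.
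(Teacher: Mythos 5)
Your verification on the $\WO$ side is essentially sound: the normalization $P\WO=P$, $\WO P=\WO$ passes to the limit from eq.~(\ref{MS}), the spanning argument for $\{|\tildephi_j\rangle\}$ via the invertibility of $PAP$ works, and your parenthetical ``cross-check'' is actually a correct nonperturbative derivation — from theorem~\ref{Nonperthm} one gets $H\WO=\WO\Heff=\WO P H\WO$ on a basis of $M$, and expanding $H=H_0+V$ with $[H_0,P]=0$ and $P\WO=P$ yields exactly eq.~(\ref{Lindgren}) for $\WO$. The genuine gap is on the $\baromega P$ side: you take as a ``characterizing condition'' of $\baromega P$ that it maps $M$ into precisely the adiabatically continued eigenspace $E_I\defeq\bigoplus_{j\in I}P_j(1)\calH$. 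That is not part of the Kvasni{\v{c}}ka--Lindgren framework, where $\baromega P$ is defined by eq.~(\ref{Lindgren}) together with $P\baromega P=P$ and solved order by order in $V$; which eigenvalues of $H$ it targets is not specified a priori. Nonperturbatively, eq.~(\ref{Lindgren}) with $P\baromega P=P$ is far from uniquely solvable: for \emph{every} $H$-invariant subspace $E$ with $\dim E=\dim M$ on which $P$ restricts to a bijection onto $M$, the operator $(P|_E)^{-1}P$ satisfies $H\baromega P=\baromega P H\baromega P$, and the same algebra as in your cross-check turns this into eq.~(\ref{Lindgren}). So when other invariant subspaces exist (intruder states), your uniqueness step cannot single out $\baromega P$; asserting that it lands in $E_I$ assumes the hard half of the proposition — that the time-independent Rayleigh--Schr\"odinger construction targets the Kato-continued eigenvalues $E_j(1)$, $j\in I$ — which is precisely what the adiabatic machinery is supposed to deliver.

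This is why the paper argues differently, and perturbatively: it differentiates eq.~(\ref{ichi}) to obtain $\imath\,\dd\UGL/\dd t=H\UGL-\UGL H\UGL$ at $t=0$, then shows \emph{term by term}, using the explicit tree expansion eq.~(\ref{Omegaalphan}), that $\imath\,\dd\Omega_T(t)/\dd t|_{t=0}$ and $[\Omega_T(0),H_0]$ coincide as $\epsilon\to0$, so that $\WO$ satisfies eq.~(\ref{Lindgren}) with the boundary conditions $P\WO=P$, $\WO P=\WO$. At the level of the series in powers of $V$ the solution of eq.~(\ref{Lindgren}) with these conditions \emph{is} unique — the gap $E^Q_i-E^P_j\ge\delta$ makes the commutator with $H_0$ invertible on the $Q$--$P$ block, so each order is determined by the lower ones — and this is the sense in which $\baromega P=\WO$ is established; it also covers the case where the Rayleigh--Schr\"odinger series is merely formal or asymptotic, in which your nonperturbative identification of $\baromega P$ is not even well posed. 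To repair your argument you would need either (i) a proof that the order-by-order Lindgren solution maps into $E_I$ (convergence plus an analytic-continuation argument), or (ii) a retreat to the perturbative uniqueness just described — at which point your cross-check computation becomes the proof and the eigenspace/uniqueness apparatus is no longer needed.
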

\begin{proof}
We take the derivative of eq.~(\ref{ichi}) with respect to
time and we substitute $\chi(t,t_0)=\UGL(t,t_0)-P$.
This gives us
\begin{eqnarray*}
\imath\frac{\dd}{\dd t}\Omega(t,t_0) &=&
H(t) \Omega(t,t_0) - \Omega(t,t_0) H(t) \Omega(t,t_0).
\end{eqnarray*}
If we take $t=0$ and $t_0=-\infty$, we obtain by
continuity
\begin{eqnarray*}
\imath\frac{\dd}{\dd t}\Omega(t)|_{t=0} &=&
V \Omega(0) - \Omega(0) V \Omega(0).
\end{eqnarray*}
When we compare this equation with eq.~(\ref{Lindgren}),
we see that $\baromega P$ and $\WO$
satisfy the same equation if
\begin{eqnarray}
\imath \lim_{\epsilon\to0} \frac{\dd \Omega(t)}{\dd t}|_{t=0} &=& 
{[}\lim_{\epsilon\to0}\Omega(0),H_0{]}.
\label{dOmega}
\end{eqnarray}
To show this, we prove it for each term $\Omega_T$.
Indeed, eq.~(\ref{Omegaalphan}) gives us
\begin{eqnarray*}
\imath \frac{\dd \Omega_T(t)}{\dd t} &=& 
   \sum_{i\in Q, j\in P}
  (E_j^P-E_i^Q+\imath|T|\epsilon)
  \ee^{(\imath E_i^Q-\imath E_j^P+ |T|\epsilon)t}
   \Omega^{ij}_T \,|\Phi_i\rangle\langle \Phi_j|,
\end{eqnarray*}
and
\begin{eqnarray*}
{[}\Omega_T(t),H_0{]} &=& 
   \sum_{i\in Q, j\in P}
  (E_j^P-E_i^Q)
  \ee^{(\imath E_i^Q-\imath E_j^P+ |T|\epsilon)t}
   \Omega^{i j}_T \,|\Phi_i\rangle\langle \Phi_j|.
\end{eqnarray*}
By continuity in $\epsilon$, 
these two expressions are identical
for all $t$ when $\epsilon\to0$.
If we take $t=0$ and we sum over all trees $T$, then
we recover eq.~(\ref{dOmega}).
Therefore,
$\baromega P$ and $\WO$
satisfy the same equation. It remains to show
that they have the same boundary conditions:
$\WO P=\WO$
and $P\WO=P$.
By eq.~(\ref{MS}), these two equations are true for
$\Omega(t)$ with any value of $t$ and $\epsilon$.
\end{proof}

As a corollary, proposition \ref{OmegaTeps} 
provides a recursive construction of
the wave operator $\WO$.

\subsection{An explicit formula for $\Omega_T$}\label{sec:den}

In this section, we show how $\Omega_T(t)$ can be obtained 
non-recursively from the knowledge of $T$.
The key idea is to replace $T$ by another combinatorial object, better 
suited to that particular computation. 
We write therefore $\gamma_T$ for the smallest permutation for the 
lexicographical ordering in $S_T$ (we view a permutation as a word 
to make sense of the lexicographical ordering: to $(35421)$ corresponds 
the word $35421$, so that e.g. $(35421)<(54231)$). 
Since $S_T$ is always non empty, the map $\gamma:T\longmapsto \gamma_T$ 
is well-defined and an injection from the set of trees to the set 
of permutations. 
These permutations are called Catalan permutations~\cite{Panayotopoulos},
312-avoiding permutations (i.e. permutations for which there does not
exist $i<j<k$ such that $\sigma(j)<\sigma(k)<\sigma(i)$,
see ref.~\onlinecite{StanleyII} p.~224), 
Kempf elements~\cite{Postnikov-09}
or stack words~\cite{Feil-05}.

Various elementary manipulations can be done to understand such a map. 
We list briefly some obvious properties and introduce some notation that 
will be useful in our forthcoming developments.
If $I=(a_1,...,a_k)$ is a sequence of integers, we write $I[n]$ for the 
shifted sequence $(a_1+n,...,a_k+n)$. 

Then, let $T=T_1\vee T_2$ be a tree. The permutation $\gamma(T)$ 
(that we identify with the corresponding word or sequence) can be 
constructed recursively as 
$\gamma(\|)=\emptyset$, $\gamma(\Y)=(1)$ and
$$\gamma(T)\defeq(\gamma(T_1)[1],1,\gamma(T_2)[|T_1|+1])).$$
The left inverse of $\gamma$, say $\cal T$, is also easily described 
recursively as ${\cal T}(\emptyset)=\|$,
${\cal T}(1)\defeq \Y$ and
$${\cal T}(\sigma)\defeq {\cal T}((\sigma(1),...,\sigma(k))[-1])
\vee{\cal T}((\sigma(k+2),...,\sigma(n))[-k-1]),$$
where $\sigma\in \calS_n$, $\sigma=(\sigma(1),...,\sigma(k),1,\sigma(k+2),
...,\sigma(n))$ is in the image of $\gamma$.

Permutations in the image of $\gamma$ can be characterized recursively 
similarly: with the same notation as in the previous paragraph, 
$\sigma$ is in the image of $\gamma$ if and only if
$(\sigma(1),...,\sigma(k))[-1]$  and
$(\sigma(k+2),...,\sigma(n))[-k-1]$ are in the 
image of $\gamma$, where $k$ is the integer such that
$\sigma(k)=1$, 

We are now in a position to compute $\Omega_T(t)$. Let us write 
$\calA_T$ for all the sequences ${\bfalpha}=
(\alpha_1,...,\alpha_{n+1})$ associated to $\gamma(T)$ as in 
equation~(\ref{omegasigma}). We write, as usual, $F_i$ for the 
eigenvalue associated to $\alpha_i$. Recall that $\alpha_1\in Q$, 
$\alpha_{n+1}\in P$ whereas $\alpha_i, i\not=1,n+1$ is in $Q$ if 
$\sigma(i)>\sigma(i-1)$ and in $P$ otherwise. 

These sequences are actually common to the expansions of all the 
$\Omega_\sigma(t), \sigma\in S_T$ (this is because
 they depend only on the positions of descents in the 
permutations $\sigma\in S_T$, as discussed in the proof
of thm.~\ref{thmOmegaT}).
They appear therefore in the expansion of $\Omega_T(t)=\sum
\limits_{\sigma\in S_T}\Omega_\sigma(t)$.
They actually also correspond exactly to the sequences of eigenvectors 
that show up in the recursive expansion of $\Omega_T(t)$ 
(proposition~\ref{Omegaalphan}) (this should be clear from our previous 
remarks, 
but can be checked directly from the definition of the recursive expansion).
%Notice also that, by the very definition of the sets 
%$\calA_T$, $\{((\alpha_1,...,\alpha_k),(\alpha_{k+1},...,\alpha_{n+1}))
%|\alpha\in S_T\}=\calA_{T_1}\cdot \calA_{T_2}$
We can refine the recursion of proposition~\ref{OmegaTeps} accordingly:
\begin{lem}
\label{lemmaneuf}
We have: $\Omega_T(t)=\sum\limits_{\bfalpha}
\ee^{(\imath F_1-\imath F_{n+1}+|T|\epsilon)t}
\Omega_T^{{\bfalpha}}|\alpha_1\rangle\langle\alpha_{n+1}|$,
where $\Omega_T^{{\bfalpha}}(t)$ is defined recursively by:
$$\Omega_{\Y}^{\Phi_i,\Phi_j}=-\imath\frac{\langle\Phi_i|V|\Phi_j\rangle}
{\imath E_i^Q-\imath E_j^P+\epsilon},$$
$$\Omega_T^{{\bfalpha}}=\imath\frac{\Omega_{T_1}^{(\alpha_1,...,\alpha_k)}
\langle\alpha_k|V|\alpha_{k+1}\rangle\Omega_{T_2}^{(\alpha_{k+1},...,\alpha_n,
\alpha_{n+1})}}{\imath F_1-\imath F_{n+1}+|T|\epsilon},$$
where $T=T_1\vee T_2$ and $k=|T_1|$.
For $T_1=\|$, we have $\Omega^{(\alpha_1)}_{T_1}=-1$ and 
for $T_2=\|$, we have $\Omega^{(\alpha_{n+1})}_{T_2}=1$.
\end{lem}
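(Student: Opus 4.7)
The plan is induction on $|T|$, peeling off the root using Theorem~\ref{thmOmegaT} and tracking the eigenvector sequence explicitly. For the base case $T=\Y$, start from $\Omega_{\Y}(t)=-\imath\int_{-\infty}^t\dd s\,QH(s)P$: inserting the resolutions $Q=\sum_{i\in Q}|\Phi_i\rangle\langle\Phi_i|$ and $P=\sum_{j\in P}|\Phi_j\rangle\langle\Phi_j|$, expanding $H(s)=\ee^{\epsilon s}\ee^{\imath H_0 s}V\ee^{-\imath H_0 s}$ (valid for $s\le 0$), and evaluating the elementary exponential integral yields exactly the claimed $\Omega_{\Y}^{\Phi_i,\Phi_j}$.

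For the inductive step, write $T=T_1\vee T_2$ (with $k=|T_1|$) and use the uniform form of Theorem~\ref{thmOmegaT},
\begin{eqnarray*}
\Omega_T(t) &=& \imath\int_{-\infty}^t\dd s\,\Omega_{T_1}(s)H(s)\Omega_{T_2}(s),
\end{eqnarray*}
where the boundary conventions $\Omega^{(\alpha_1)}_\|=-1$ and $\Omega^{(\alpha_{n+1})}_\|=1$ are designed so as to reproduce the factors $-Q$ on the left and $P$ on the right in the two special subcases of that theorem. Substitute the inductive expressions for $\Omega_{T_1}(s)$ and $\Omega_{T_2}(s)$: by Proposition~\ref{OmegaTeps}, the right endpoint of $\Omega_{T_1}$ lies in $P$ and the left endpoint of $\Omega_{T_2}$ lies in $Q$, so the central matrix element of $H(s)$ becomes $\langle\alpha_{k+1}|V|\alpha_{k+2}\rangle\,\ee^{\imath(F_{k+1}-F_{k+2})s}\,\ee^{\epsilon s}$ with $\alpha_{k+1}\in P$ and $\alpha_{k+2}\in Q$. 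The three exponential prefactors --- one from each of $\Omega_{T_1}(s)$, $H(s)$, and $\Omega_{T_2}(s)$ --- telescope into $\ee^{(\imath F_1-\imath F_{n+1}+|T|\epsilon)s}$, the junction energies $F_{k+1}$ and $F_{k+2}$ cancelling pairwise and the $\epsilon$-exponents adding via $|T|=|T_1|+|T_2|+1$. The single integral $\int_{-\infty}^t\dd s\,\ee^{\omega s}=\ee^{\omega t}/\omega$, convergent for $\Re\omega=|T|\epsilon>0$, then supplies simultaneously the new time factor and the denominator of the recursion, and the remaining scalar factors assemble exactly as claimed.

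The main obstacle is the combinatorial check that this procedure enumerates $\calA_T$ exactly once: one has to verify that concatenating an inductive $T_1$-sequence with a $T_2$-sequence through a junction pair in $P\times Q$ produces precisely the sequences $\bfalpha$ compatible with $\gamma(T)$ in the sense of eq.~(\ref{omegasigma}). This follows from the descent-set characterization of $\calA_T$ emphasized in the paragraph preceding the lemma, together with the recursive formula $\gamma(T)=(\gamma(T_1)[1],1,\gamma(T_2)[|T_1|+1])$ of Section~\ref{sec:den}: the descent set of $\gamma(T)$ is the disjoint union of those of $\gamma(T_1)$ and $\gamma(T_2)$ (suitably shifted into interior positions) together with a forced descent at position $|T_1|$ and a forced ascent at position $|T_1|+1$ --- precisely the junction pattern imposed by $\alpha_{k+1}\in P$ and $\alpha_{k+2}\in Q$.
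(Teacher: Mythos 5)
Your proposal is correct and takes essentially the paper's own route: the paper establishes the recursion by exactly this induction --- Theorem~\ref{thmOmegaT} plus insertion of eigenstate resolutions, telescoping of the exponential prefactors and the elementary integral $\int_{-\infty}^t \dd s\,\ee^{\omega s}=\ee^{\omega t}/\omega$, as in the proof of Proposition~\ref{OmegaTeps} --- and it justifies the identification of the eigenvector sequences with $\calA_T$ by the same descent-set argument you spell out (the remarks preceding the lemma, via $\gamma(T)=(\gamma(T_1)[1],1,\gamma(T_2)[|T_1|+1])$). One remark: your junction element $\langle\alpha_{k+1}|V|\alpha_{k+2}\rangle$, with blocks $(\alpha_1,\dots,\alpha_{k+1})$ and $(\alpha_{k+2},\dots,\alpha_{n+1})$, differs from the lemma's printed $\langle\alpha_k|V|\alpha_{k+1}\rangle$, but yours is the consistent indexing --- it agrees with Proposition~\ref{OmegaTeps}, with the worked example for $T=\Y\vee\Y$, and with the central term $\langle\alpha_{i_0}|V|\alpha_{i_0+1}\rangle$, $i_0=|T_1|+1$, used in the paper's proof of Theorem~\ref{thmRS} --- so the printed statement contains an off-by-one slip rather than your proof containing a gap.
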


Let us now consider $\gamma_T$. For $i=1,\dots,n$, we set: 
$l(i)=\inf\{j\leq i|\forall k,j\leq k\le i,\gamma_T(k)\ge \gamma_T(i)\}$ 
and $r(i)=\sup\{j\geq i|\forall l, j\geq l\ge i,\gamma_T(l)\ge\gamma_T(i)\}$. 
In words, $l(i)$ is defined as follows: consider all the consecutive
positions $k$ on the left of position $i$, such that the value of
the permutation $\gamma_T(k)$ is larger than $\gamma_T(i)$. Then,
$l(i)$ is the leftmost of these positions $k$.
Similarly, $r(i)$ is the rightmost position $j$ such that, on all
positions $k$ between $i$ and $j$, the permutation $\gamma_T(k)$ is 
larger than $\gamma_T(i)$.

\begin{thm}
\label{thmRS}
We have:
\begin{eqnarray*}
\Omega_T^{{\bfalpha}} &=& (-\imath)^n (-1)^{d-1}
\langle\alpha_1|V|\alpha_2\rangle...
\langle\alpha_n|V|\alpha_{n+1}\rangle\prod\limits_{i=1}^n
\frac{1}{\imath F_{l(i)}-\imath F_{r(i)+1}+(r(i)-l(i)+1)\epsilon},
\end{eqnarray*}
or, equivalently,
\begin{eqnarray*}
\Omega_T(t) &=& (-\imath)^n (-1)^{d-1}
\sum\limits_{\bfalpha}
|\alpha_1\rangle\langle\alpha_1|V|\alpha_2\rangle...
\langle\alpha_n|V|\alpha_{n+1}\rangle\langle\alpha_{n+1}|
\ee^{(\imath F_1-\imath F_{n+1}+|T|\epsilon)t}
\\&&
\prod\limits_{i=1}^n
\frac{1}
{\imath F_{l(i)}-\imath F_{r(i)+1}+(r(i)-l(i)+1)\epsilon},
\end{eqnarray*}
where $d$ is the number of leaves of $T$ pointing to the right.
\end{thm}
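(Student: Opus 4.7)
The plan is induction on $n = |T|$, using the recursive decomposition of $\Omega_T^{\bfalpha}$ supplied by Lemma~\ref{lemmaneuf}. The base case $T = \Y$ (with $\gamma_T = (1)$, so $l(1) = r(1) = 1$ and $d = 1$) is immediate: the theorem's formula reduces to $-\imath \langle \alpha_1 | V | \alpha_2 \rangle / (\imath F_1 - \imath F_2 + \epsilon)$, which matches $\Omega_\Y^{(\alpha_1, \alpha_2)}$.

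For the inductive step I would write $T = T_1 \vee T_2$ with $|T_1| = k$, $|T_2| = m$, $n = k+m+1$. Since $\gamma_T$ is the lexicographically smallest permutation in $S_T$, its recursive description gives $\gamma_T = (\gamma_{T_1}[1],\, 1,\, \gamma_{T_2}[k+1])$, so that position $k+1$ is the unique location where $\gamma_T$ attains the global minimum $1$. The entire argument turns on the consequences of this fact for the functions $l$ and $r$. At $i = k+1$ one has $\gamma_T(k+1) = 1 \le \gamma_T(j)$ for every $j$, hence $l(k+1) = 1$, $r(k+1) = n$, and the corresponding factor $1/(\imath F_1 - \imath F_{n+1} + n\epsilon)$ coincides exactly with the overall denominator appearing in Lemma~\ref{lemmaneuf}. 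For $1 \le i \le k$, the window $[l(i), r(i)]$ cannot cross position $k+1$ because $\gamma_T(k+1) = 1 < \gamma_T(i)$; it therefore lies in $\{1, \dots, k\}$, and since the shift $\gamma_{T_1} \mapsto \gamma_{T_1}[1]$ preserves relative order, $l(i), r(i)$ coincide with the values computed from $\gamma_{T_1}$. A symmetric argument on $\{k+2, \dots, n\}$ (after translation by $k+1$) inherits the remaining factors from $\gamma_{T_2}$.

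In parallel, the descent pattern of $\gamma_T$ decomposes into the descents of $\gamma_{T_1}$, a forced descent at position $k$ (since $\gamma_T(k) \ge 2 > 1 = \gamma_T(k+1)$), a forced ascent at position $k+1$ (since $\gamma_T(k+2) > 1$), and the descents of $\gamma_{T_2}$. These two boundary relations force $\alpha_{k+1} \in P$ and $\alpha_{k+2} \in Q$, which are exactly the boundary types required by $\calA_{T_1}$ and $\calA_{T_2}$; thus $\bfalpha$ splits compatibly as $(\alpha_1, \dots, \alpha_{k+1})$ and $(\alpha_{k+2}, \dots, \alpha_{n+1})$, and the $n$ matrix elements and $n$ denominators of the claimed product partition accordingly.

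Plugging the induction hypothesis for $T_1$ and $T_2$ into Lemma~\ref{lemmaneuf}, the products of matrix elements and of denominators recombine into the full product asserted by the theorem, leaving only the sign to check. Using the identity $\imath \cdot (-\imath)^{n-1} = -(-\imath)^n$, the factor $\imath$ from the recursion together with the inductive prefactors $(-\imath)^k (-1)^{d_1-1}$ and $(-\imath)^m (-1)^{d_2-1}$ yield $(-\imath)^n (-1)^{d_1+d_2+1}$, so the claim reduces to $d \equiv d_1 + d_2 \pmod 2$ when both $T_1, T_2 \neq \|$, which holds because grafting two nonempty subtrees creates no new leaves. The corner cases where $T_1 = \|$ or $T_2 = \|$ are handled using the conventions $\Omega_{\|}^{(\alpha_1)} = -1$ and $\Omega_{\|}^{(\alpha_{n+1})} = 1$ from Lemma~\ref{lemmaneuf}, together with the observation that $\| \vee T_2$ adds a left-pointing leaf (so $d = d_2$) while $T_1 \vee \|$ adds a right-pointing leaf (so $d = d_1 + 1$); a short verification shows the sign balances in each case. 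This four-case leaf-parity bookkeeping is the only slightly subtle step; everything else follows directly from the recursive structure of $\gamma_T$ and the induction hypothesis.
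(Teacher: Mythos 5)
Your proof is correct and follows essentially the same route as the paper's own: an induction that matches the closed formula against the recursion of lemma~\ref{lemmaneuf}, using exactly the same key facts --- the global minimum of $\gamma_T$ at position $|T_1|+1$ yields $l=1$, $r=n$ and reproduces the overall denominator, while for all other indices the window $[l(i),r(i)]$ cannot cross the position of the minimum and therefore restricts, after the order-preserving shifts, to the data of $\gamma_{T_1}$ or $\gamma_{T_2}$. Your explicit four-case verification of the prefactor $(-\imath)^n(-1)^{d-1}$ via the identity $\imath\,(-\imath)^{n-1}=-(-\imath)^n$ and the leaf-parity relations ($d=d_1+d_2$ for two nonempty subtrees, $d=d_2$ when the left subtree is empty, $d=d_1+1$ when the right subtree is empty) is a welcome supplement, since the paper's proof checks only the numerators and denominators and leaves that sign bookkeeping implicit.
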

For example, if $T=\troistrois$, then
$\gamma_T=(213)$, $l=(113)$, $r=(133)$ and 
\begin{eqnarray*}
\Omega_T^{\Phi_i\Phi_j\Phi_k\Phi_l}
 &=& 
  -\imath \sum_{\Phi_i\Phi_j\Phi_k\Phi_l}
\frac{|\Phi_i\rangle\langle\Phi_i|V|\Phi_j\rangle
         \langle\Phi_j|V|\Phi_k\rangle
         \langle\Phi_k|V|\Phi_l\rangle
  \langle \Phi_l|}
  {\big(\epsilon + \imath (E^Q_i-E^P_j)\big)
  \big(\epsilon + \imath (E^Q_k-E^P_l)\big)
  \big(3\epsilon + \imath (E^Q_i-E^P_l)\big)}.
\end{eqnarray*}
\begin{proof}
We show that $\Omega^\alpha_T$ as defined in theorem~\ref{thmRS}
satisfies the recursion relation of lemma~\ref{lemmaneuf}.
Let us first consider that $T_1\not=\|$ and $T_2\not=\|$.
Let $i_0$ denote the index such that $\gamma_T(i_0)=1$,
with $1<i_0<n$. The term of the numerator corresponding to $i_0$
is $\langle\alpha_{i_0}|V|\alpha_{i_0+1}\rangle$, which is the
central term of the recursion relation.
We have $l(i_0)=1$ and $r(i_0)=n$. Thus, the denominator is 
$\imath F_{l(i_0)}-\imath F_{r(i_0)+1}+(r(i_0)-l(i_0)+1)\epsilon
=\imath F_1-\imath F_{n+1}+n\epsilon$, which is the denominator
of the recursion relation.
Now we check that the product of terms for $i<i_0$ in 
theorem~\ref{thmRS} is
$\Omega_{T_1}^{(\alpha_1,\dots,\alpha_{i_0-1})}$.
The matrix elements 
$\langle\alpha_1|V|\alpha_2\rangle...
\langle\alpha_{i_0-1}|V|\alpha_{i_0}\rangle$ obviously
agree, so we must check that the denominators agree.
Thus, we verify that, for $1\le i < i_0$,
$l_T(i)=l_{T_1}(i)$ and $r_T(i)=r_{T_1}(i)$, where
$l_T$ and $r_T$ denote the $l$ and $r$ vectors for tree $T$.
We know that $\gamma_T(i)=\gamma_{T_1}(i)+1$ for
$1\le i < i_0$. Thus, for $k\le i$, 
$\gamma_T(k)\ge \gamma_T(i)$ if and only if
$\gamma_{T_1}(k)\ge \gamma_{T_1}(i)$ and 
$l_T(i)=l_{T_1}(i)$.
For $r_T$, we notice that for $l=i_0$ we have
$\gamma_T(l)=1 < \gamma_T(i)$ and the relation
$\gamma_T(l) \ge \gamma_T(i)$ does not hold.
Therefore, $r_T(i)<i_0$ and 
$r_T(i)=r_{T_1}(i)$ by the same argument as
$l_T(i)=l_{T_1}(i)$.
The same reasoning holds for $T_2$ and the recursion relation
is satisfied.
The cases $T_1=\|$ or $T_2=\|$ are proved similarly.
\end{proof}
We conclude this section with a geometrical translation of
the previous theorem. 
Consider a tree $T$ with $|T|=n$ and number its leaves from 
1 for the leftmost leave to $n+1$ for the rightmost one.
For each vertex $v$ of $T$, take the subtree 
$T_v$ for which $v$ is the root. In other words,
$T_v$ is obtained by chopping the edge below $v$ and
considering the half-edge dangling from $v$ as the dangling line of 
the root of $T_v$.
For each tree $T_v$, build the pair $(l_v,r_v)$
which are the indices of the leftmost and rightmost leaves of 
$T_v$. Recall that $|\alpha_i\rangle$ belongs to the image
of $Q$ (resp. $P$) if leaf $i$ points to the left (resp. right): 
this implies in particular that $F_{l_v}=F_{l_v}^Q$ and $F_{r_v}=F_{r_v}^P$.
Then, the set of pairs $(l_v,r_v)$ where $v$ runs over the
vertices of $T$ is the same as the set of pairs
$(l(i),r(i)+1)$ of the theorem.
The formula for $\Omega_T^\alpha$ can be rewritten accordingly and this geometrical version can be proved recursively as 
theorem~\ref{thmRS}. Conversely, it can be used to determine the
tree corresponding to a given denominator.

\section{Conclusion}
We considered three expansions of the wave operator
and we proved their adiabatic convergence.
We proposed to expand the wave operator over trees,
and proved that this expansion reduced the number of terms of the 
expansion with respect to usual (tractable) ones, simplified
the denominators of the expansion into a product
of the difference of two energies and lead to powerful formulas 
and recursive computational methods. 

We then showed that this simplification is closely related to the 
algebraic structure of the linear span of permutations
and of a certain convolution subalgebra of trees.

As far as the many-body problem is concerned, we showed
that the simplification of diagrams is not due to the details
of the Hamiltonian but to the general structure of the
wave operator. When the eigenstates and Hamiltonian 
are expressed in terms of creation and annihilation operators
and quantum fields, the algebra of trees mixes with the Hopf 
algebraic structure of fields~\cite{BrouderQG,BrouderMN}. 
It would be interesting to investigate the interplay of
these algebraic structures.

The terms of the Rayleigh-Schr{\"o}dinger series 
are usually considered to be ``quite complicated''
(see ref.~\onlinecite{ReedSimonIV}, p.~8) and difficult
to work with. 
The general term of the 
Rayleigh-Schr\"odinger series for quasi-degenerate systems
is obtained as
the limit for $\epsilon\to0$ of $\Omega_T(0)$ in theorem~\ref{thmRS}.
Through our recursive and non-recursive expressions for these
terms, many proofs of their properties become almost trivial.
The tree structure suggests various resummations of this series,
that will be explored in a forthcoming publication.

\section*{Acknowledgments}
This work was partly supported by the ANR HOPFCOMBOP.
One of the authors (\^A.M.) was supported through the fellowship 
SFRH/BPD/48223/2008 provided by  the Portuguese Science and Technology
Foundation (FCT).

\appendix

\section{A crucial lemma}
\label{crucial}
In this appendix we state and prove a lemma that is crucial
to demonstrate theorem \ref{thmOmegaT}, which is one of the
main results of our paper.
We first need a noncommutative analogue of Chen's formulas for products 
of iterated integrals. This analogue, proved in
refs.~\onlinecite{Agrachev} and \onlinecite{BP-09}, provides a systematic 
link between the theory of iterated integrals, the combinatorics of descents, 
free Lie algebras and noncommutative symmetric functions 
(see ref.~\onlinecite{BP-09} and Appendix~\ref{sect-algebra}
 of the present article for further details). 

Let $L=(L_1,...,L_n)$ be an arbitrary sequence of time-dependent operators
$L_i(t)$, satisfying the same regularity conditions as $H(t)$
in section \ref{tdH}.
Let $\sigma$ be a permutation in $\calS_n$ and define:
$$\Omega^L_\sigma(t,t_0)\defeq\int_{t_0}^{t} \dd t_1 
   \int_{t_0}^{t_1} \dd t_2
    \dots 
   \int_{t_0}^{t_{n-1}} \dd t_n
   L_1(t_{\sigma(1)})\dots L_n(t_{\sigma(n)}).$$

The notation is extended linearly to combinations of permutations, 
so that for 
$\mu\defeq\sum\limits_{\sigma\in \calS_n}\mu_\sigma\sigma$, with
$\mu_\sigma\in\mathbb{C}$. Then
$\Omega^L_\mu(t,t_0)$ is defined as the linear combination
$\sum\limits_{\sigma\in \calS_n} \mu_\sigma\Omega^L_\sigma(t,t_0).$
For $K\defeq (K_1,...,K_m)$ another sequence of time-dependent operators, 
we write $L\cdot K$ for the concatenation product
$(L_1,...,L_n,K_1,...,K_m)$.

We also need to define the convolution product of two
permutations. If $\alpha\in \calS_n$ and $\beta\in\calS_m$,
then $\alpha\ast\beta$ is the sum of the $\binom{n+m}{n}$
permutations $\gamma\in \calS_{n+m}$ such that
$st(\gamma(1),...,\gamma(n))=(\alpha(1),...,\alpha(n))$ and
$st(\gamma(n+1),...,\gamma(n+m))=(\beta(1),...,\beta(m))$. 
Here, $st$ is the standardization map defined 
in section \ref{perm-tree}.
For instance,
\begin{eqnarray*}
(2,3,1)\ast (1) &=& (2,3,1,4)+(2,4,1,3)+(3,4,1,2)+(3,4,2,1),\\
(1,2)\ast (2,1) &=& (1,2,4,3)+(1,3,4,2)+(1,4,3,2)+(2,3,4,1)+(2,4,3,1)
+(3,4,2,1).
\end{eqnarray*}
In words, the product of two permutations $\alpha\in\calS_n$ and
$\beta\in\calS_m$ is the sum of all permutations $\sigma$ of $\calS_{n+m}$
such that the elements of the sequence 
$(\sigma(1),\dots,\sigma(n))$ are ordered as the elements of
$(\alpha(1),\dots,\alpha(n))$, in the sense that 
$\alpha(i)>\alpha(j)$ if and only if $\sigma(i)>\sigma(j)$
and the elements of
$(\sigma(n+1),\dots,\sigma(n+m))$ are ordered as the elements of
$(\beta(1),\dots,\beta(m))$.
   
Now, we can state the noncommutative Chen formula~\cite{Agrachev}
(see also remark 3.3, p.~4111 of ref.~\onlinecite{BP-09})
\begin{lem}
\label{NCChen}
We have:
$$\Omega^L_\alpha(t,t_0)\Omega^K_\beta(t,t_0)=
\Omega^{L\cdot K}_{\alpha\ast\beta}(t,t_0).$$
\end{lem}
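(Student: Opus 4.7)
The plan is to prove the identity by a shuffle-type decomposition of the product domain of integration. The left-hand side is a single iterated integral over $\Delta_n(t,t_0)\times\Delta_m(t,t_0)$, where $\Delta_k(t,t_0)$ denotes the simplex $\{t\ge u_1\ge\cdots\ge u_k\ge t_0\}$. Up to a measure-zero boundary, this product is partitioned into $\binom{n+m}{n}$ open chambers, one for each $n$-subset $A=\{a_1<\cdots<a_n\}$ of $\{1,\dots,n+m\}$: the chamber $C_A$ consists of those points at which, after merging and sorting all $n+m$ variables in a single decreasing sequence $u_1\ge u_2\ge\cdots\ge u_{n+m}$, the $t$-variables land at the positions $A$ (so that $t_i=u_{a_i}$) and the $s$-variables land at the complementary positions $B=\{b_1<\cdots<b_m\}$ (so that $s_j=u_{b_j}$). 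On $C_A$ this substitution is a permutation of coordinates with unit Jacobian and turns the integration domain into $\Delta_{n+m}(t,t_0)$.

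\textbf{Key steps.} On each chamber the integrand $L_1(t_{\alpha(1)})\cdots L_n(t_{\alpha(n)})K_1(s_{\beta(1)})\cdots K_m(s_{\beta(m)})$ rewrites as $(L\cdot K)_1(u_{\gamma(1)})\cdots (L\cdot K)_{n+m}(u_{\gamma(n+m)})$, where $\gamma\in\calS_{n+m}$ is defined by $\gamma(i)\defeq a_{\alpha(i)}$ for $1\le i\le n$ and $\gamma(n+j)\defeq b_{\beta(j)}$ for $1\le j\le m$; importantly, the operator ordering is untouched, since the change of variables is just a relabeling of arguments. Since $a_1<\cdots<a_n$ is increasing, standardizing $(a_{\alpha(1)},\dots,a_{\alpha(n)})$ recovers $\alpha$, and likewise $(b_{\beta(1)},\dots,b_{\beta(m)})$ standardizes to $\beta$. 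Hence $st(\gamma(1),\dots,\gamma(n))=\alpha$ and $st(\gamma(n+1),\dots,\gamma(n+m))=\beta$, so $\gamma\in\alpha\ast\beta$. Conversely, every $\gamma\in\alpha\ast\beta$ arises in exactly one way: take $A\defeq\{\gamma(1),\dots,\gamma(n)\}$, whose ordering then reconstructs $\alpha$ and $\beta$. Summing the contributions of all chambers yields
$$\Omega^L_\alpha(t,t_0)\Omega^K_\beta(t,t_0)=\sum_{\gamma\in\alpha\ast\beta}\Omega^{L\cdot K}_\gamma(t,t_0)=\Omega^{L\cdot K}_{\alpha\ast\beta}(t,t_0),$$
which is exactly the claim.

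\textbf{Main obstacle.} The actual work lies in the indexing gymnastics of the substitution step: one must keep straight simultaneously three different labelings, namely the position labels of $L_1,\dots,L_n,K_1,\dots,K_m$ in the concatenation $L\cdot K$, the permutations $\alpha,\beta$ acting on time variables within each block, and the embedding of the two simplex coordinate systems into the merged decreasing sequence $u_1\ge\cdots\ge u_{n+m}$. Once the bijection between $n$-subsets $A$ and elements of $\alpha\ast\beta$ is formulated cleanly through the standardization map, the identity becomes automatic; the remaining ingredients (shuffle decomposition of a product of simplices, unit Jacobian, preservation of operator order) are standard.
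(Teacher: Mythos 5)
Your proof is correct, and the details check out against the paper's conventions: the integration domains are the decreasing simplices $t\ge t_1\ge\cdots\ge t_n\ge t_0$; on each chamber the relabeling $t_i=u_{a_i}$, $s_j=u_{b_j}$ is a unit-Jacobian coordinate permutation carrying the chamber onto the simplex in $n+m$ variables; the operator string $L_1\cdots L_n K_1\cdots K_m$ is left untouched, so only the time labels are permuted, which is exactly the structure of $\Omega^{L\cdot K}_\gamma$; and because $a_1<\cdots<a_n$ and $b_1<\cdots<b_m$ are increasing, the paper's standardization map (rank within the increasing order) recovers $\alpha$ and $\beta$, so the $\binom{n+m}{n}$ chambers biject with the terms of $\alpha\ast\beta$ with multiplicity one.

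One point of comparison: the paper itself contains no proof of lemma~\ref{NCChen}; the result is quoted from ref.~\onlinecite{Agrachev} (see also remark~3.3 of ref.~\onlinecite{BP-09}). Your chamber decomposition is essentially the standard argument underlying those references: the noncommutative refinement of Chen's shuffle identity, in which the usual decomposition of a product of two simplices into interleaving chambers is kept, but, since the operators do not commute, one tracks the induced permutation of the time arguments (yielding the convolution product $\ast$ on permutations) rather than shuffling the letters of a word. It is also the same combinatorial mechanism the paper reuses in its proof of the companion lemma~\ref{fundam}, where lemma~\ref{NCChen} is applied after inserting $J(s)$ and shifting indices by one. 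In short, you have correctly supplied the proof that the paper delegates to the literature.
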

The following lemma can be proven similarly:
\begin{lem}\label{fundam}
We have, for $L$ and $K$ as above and $J$ a time-dependent operator:
$$\int_{t_0}^{t} \dd s\Omega^L_\alpha(s,t_0)J(s)\Omega^K_\beta(s,t_0)=
\sum_\gamma\Omega^{L\cdot(J)\cdot K}_\gamma(t,t_0),$$
where $\gamma$ runs over the permutations in $\calS_{n+m+1}$ with 
$\gamma(n+1)=1$, $st(\gamma(1),...,\gamma(n))=\alpha$, 
$st(\gamma(n+2),...,\gamma(n+m+1))=\beta$.
\end{lem}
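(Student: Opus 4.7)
The plan is to prove this by decomposing the integration domain of the left-hand side into simplices, each of which corresponds to a single iterated integral $\Omega^{L\cdot(J)\cdot K}_\gamma(t,t_0)$. The argument is parallel to that of Lemma~\ref{NCChen}, the extra feature being that the operator $J(s)$ at time $s$ forces the distinguished time to be the maximum in every resulting time-ordering.

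First I would expand both iterated integrals on the left-hand side. Setting $w_i\defeq u_{\alpha(i)}$ for $1\le i\le n$, $w_{n+1}\defeq s$ and $w_{n+1+j}\defeq v_{\beta(j)}$ for $1\le j\le m$, the left-hand side becomes
\begin{equation*}
\int_R L_1(w_1)\cdots L_n(w_n)\,J(w_{n+1})\,K_1(w_{n+2})\cdots K_m(w_{n+m+1})\,\dd w_1\cdots\dd w_{n+m+1},
\end{equation*}
where the region $R\subset [t_0,t]^{n+m+1}$ is cut out by the conditions that $w_{n+1}$ is the maximum of all the $w_i$, that the rank of $w_i$ within $\{w_1,\ldots,w_n\}$ (from largest) equals $\alpha(i)$, and that the rank of $w_{n+1+j}$ within $\{w_{n+2},\ldots,w_{n+m+1}\}$ equals $\beta(j)$. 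There is no constraint between the two blocks beyond each being $\le w_{n+1}$.

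Next I would partition $R$ according to the total orderings of the $n+m+1$ variables. For each admissible total order, define $\gamma\in\calS_{n+m+1}$ by setting $\gamma(i)$ equal to the rank of $w_i$ in the chosen total order (with $1$ denoting the maximum). The three admissibility conditions above translate, line by line, into $\gamma(n+1)=1$, $st(\gamma(1),\ldots,\gamma(n))=\alpha$, and $st(\gamma(n+2),\ldots,\gamma(n+m+1))=\beta$. On each simplex of the partition, the substitution $w_i=t_{\gamma(i)}$ with $t_1\ge\cdots\ge t_{n+m+1}$ reproduces exactly $\Omega^{L\cdot(J)\cdot K}_\gamma(t,t_0)$, and summing over admissible $\gamma$ yields the claimed identity.

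The main obstacle is the combinatorial bookkeeping that identifies admissible total orderings of the $w_i$ with the set of permutations characterized in the statement. Once one observes that the rank of $w_i$ in the full set, restricted to the first $n$ indices, is precisely the standardization of $(\gamma(1),\ldots,\gamma(n))$, and similarly for the second block, everything falls into place; no analytic subtlety enters beyond the regularity assumed in Section~\ref{tdH}.
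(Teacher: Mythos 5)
Your proof is correct, and it rests on the same underlying combinatorial fact as the paper's, but it is organized differently. The paper does not redo the simplex decomposition from scratch: it first applies the noncommutative Chen formula (lemma~\ref{NCChen}) to the product $\Omega^L_\alpha(s,t_0)\,J(s)\,\Omega^K_\beta(s,t_0)$ inside the $s$-integral, observing that the spectator factor $J(s)$ and the outer integration $\int_{t_0}^t \dd s$ do not alter the combinatorics, which produces a sum over $\sigma\in\calS_{n+m}$ with $st(\sigma(1),\dots,\sigma(n))=\alpha$ and $st(\sigma(n+1),\dots,\sigma(n+m))=\beta$; it then performs the change of variables $t_1=s$, $t_{i+1}=s_i$, under which $\gamma$ is read off from $\sigma$ via $\gamma(n+1)=1$ (since $s\ge s_i$ for all $i$), $\gamma(i)=\sigma(i)+1$ for $1\le i\le n$, and $\gamma(i+1)=\sigma(i)+1$ for $n+1\le i\le n+m$, standardization being invariant under the uniform shift. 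Your version instead inlines the proof of lemma~\ref{NCChen}: you split the full $(n+m+1)$-dimensional region directly into simplices indexed by total orderings of all the times at once, with the distinguished variable forced to be maximal, hence $\gamma(n+1)=1$. Your rank-and-standardization bookkeeping is right on both blocks (with the paper's convention that $t_1$ is the largest time, the rank-from-largest of $\gamma(i)$ within a block is exactly the $i$-th entry of the standardization of that block), so the identification of admissible orderings with the stated set of $\gamma$ is sound. What the paper's route buys is brevity and reuse of an already-stated lemma; what yours buys is a self-contained argument in which the mechanism forcing $\gamma(n+1)=1$ is completely explicit. Two points you should state rather than leave implicit: the partition of the region $R$ into simplices holds only up to the measure-zero set where two times coincide (harmless for these integrals), and conversely every $\gamma\in\calS_{n+m+1}$ satisfying the three conditions arises from exactly one nonempty simplex contained in $R$, so the sum neither misses nor double-counts any term.
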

\begin{proof}
We expand $\Omega^L_\alpha$ and $\Omega^K_\beta$
\begin{eqnarray*}
\int_{t_0}^{t} \dd s\Omega^L_\alpha(s,t_0)J(s)\Omega^K_\beta(s,t_0)
 &=&
\int_{t_0}^{t} \dd s
\int_{t_0}^{s} \dd u_1 \dots \int_{t_0}^{u_{n-1}} \dd u_n
\int_{t_0}^{s} \dd v_1 \dots \int_{t_0}^{v_{m-1}} \dd v_m
\\&&
L_1(u_{\alpha(1)})\dots L_n(u_{\alpha(n)})
J(s)
K_1(v_{\beta(1)})\dots K_m(v_{\beta(m)}).
\end{eqnarray*}
This is the same formula as for the expansion of
$\Omega^L_\alpha(s,t_0)\Omega^K_\beta(s,t_0)$, up to the term $J(s)$ 
and the integration $\int_{t_0}^{t} \dd s$ that however do not change 
the underlying combinatorics. Therefore, lemma~\ref{NCChen} holds in the form
\begin{eqnarray*}
\int_{t_0}^{t} \dd s\Omega^L_\alpha(s,t_0)J(s)\Omega^K_\beta(s,t_0)
 &=&
\sum_{\sigma}
\int_{t_0}^{t} \dd s
\int_{t_0}^{s} \dd s_1 \dots \int_{t_0}^{s_{n+m-1}} \dd s_{n+m}
\\&&
L_1(s_{\sigma(1)})\dots L_n(s_{\sigma(n)})
J(s)
K_1(s_{\sigma(n+1)})\dots K_m(s_{\sigma(n+m)}),
\end{eqnarray*}
where, by the definition of $\alpha\ast\beta$,
the sum over $\sigma$ is over all the permutations
of $\calS_{n+m}$ such that
$st(\sigma(1),\dots,\sigma(n))=\alpha$ and
$st(\sigma(n+1),\dots,\sigma(n+m))=\beta$.
Now, we change variables to
$t_1=s$, $t_{i+1}=s_i$ for $i=1,\dots,n+m$.
The permutation $\gamma$ of $t_1,\dots,t_{n+m+1}$
corresponding to the permutation $\sigma$ of
$s_1,\dots,s_{n+m}$ is characterized by
$\gamma(n+1)=1$ (because $s\ge s_i$ for all $i$),
$\gamma(i)=\sigma(i)+1$ for $1\le i \le n$ and
$\gamma(i+1)=\sigma(i)+1$ for $n+1\le i \le n+m$.
Therefore,
\begin{eqnarray*}
\int_{t_0}^{t} \dd s\Omega^L_\alpha(s,t_0)J(s)\Omega^K_\beta(s,t_0)
 &=&
\sum_{\gamma}
\int_{t_0}^{t} \dd t_1 \dots \int_{t_0}^{t_{n+m}} \dd t_{n+m+1}
\\&&
L_1(t_{\gamma(1)})\dots L_n(t_{\gamma(n)})
J(t_{\gamma(n+1)})
K_1(t_{\gamma(n+2)})\dots K_m(t_{\gamma(n+m+1)}),
\end{eqnarray*}
where $\gamma$ satisfies
$\gamma(n+1)=1$, $st(\gamma(1),...,\gamma(n))=\alpha$, 
$st(\gamma(n+2),...,\gamma(n+m+1))=\beta$.
The lemma is proved.
\end{proof}
Notice that the same process would allow to derive combinatorial formulas 
for arbitrary products of iterated integrals and for integrals with 
integrands involving iterated integrals, provided these products 
and integrands have expressions similar to the ones considered in the 
two lemmas.

\section{The algebraic structure of tree-shaped iterated integrals}
\label{sect-algebra}
Our exposition of tree-parametrized time-dependent perturbation theory has 
focussed on the derivation of convergence results and explicit formulas 
for the time-dependent wave operator. 
However, the reasons why such an approach is possible and efficient are 
grounded into various algebraic and combinatorial properties of trees, 
descents and similar objects.

These properties suggest that the theory of effective Hamiltonians is grounded 
into a new ``Lie theory'' generalizing the usual Lie theory (or, more 
precisely, generalizing the part of the classical Lie theory that is 
relevant to the study of the solutions of differential equations). 
First indications that such a theory exists were already pointed out in 
our ref.~\onlinecite{BP-09}. Indeed, we showed in this article that 
descent algebras 
of hyperoctahedral groups and generalizations thereof are relevant to the 
time-dependent perturbation theory. Applications included  an extension  
of the Magnus expansion for the time-dependent wave operator. Our results 
below provide complementary insights on the subject and further evidence 
that algebraic structures underly many-body theories.

\subsection{The algebra structure}
We know from ref.~\onlinecite{BP-09} and Sect~\ref{permutation-sect} 
that the family 
of integrals $\Omega_\sigma^L(t,t_0)$ is closed under the product. 
This closure property is reflected into the convolution product of 
permutations. This result is a natural noncommutative generalization of 
Chen's formula for the product of iterated integrals.
We show, in the present section, that the same result holds for integrals 
parametrized by trees. We will explain, in the next sections, why such a 
result -which may seem surprising from the analytical point of view- could 
be expected from the modern theory of combinatorial Hopf algebras.

For $L=(L_1,...,L_n)$ a family of time-dependent operators (with the usual 
regularity conditions), and $T$ a tree with $n$ internal vertices, we write 
$$\Omega_T^L(t,t_0)\defeq \sum_{\sigma \in S_T}
   \int_{t_0}^{t} \dd t_1 
   \int_{t_0}^{t_1} \dd t_2
    \dots 
   \int_{t_0}^{t_{n-1}} \dd t_n
   L_1(t_{\sigma(1)}) L_2(t_{\sigma(2)})\dots L_n(t_{\sigma(n)}).$$
This notation is extended to linear combinations of trees, so that e.g. 
if $Z=T+2T'$, where $T$ and $T'$ are two arbitrary trees with the same 
number of vertices, then $\Omega_Z^L(t,t_0)=\Omega_T^L(t,t_0)+
2\Omega_{T'}^L(t,t_0)$.
For $i\le n$, we write $L_{\leq i}=(L_1,...,L_i)$, 
$L_{\geq i}=(L_{i},...,L_n)$.
   
\begin{prop}\label{produ}
For $L=(L_1,...,L_n)$ and $K=(K_1,...,K_m)$ two families of time-dependent 
operators and $T=T_1\vee T_2$, $U=U_1\vee U_2$ two trees, $|T|=n,|T_1|=p,
|T_2|=q,|U|=m, |U_1|=l,|U_2|=k$, we have:
\begin{eqnarray*}
\Omega_T^L(t,t_0)\Omega_U^K(t,t_0) = \int\limits_{t_0}^tds\Omega_T^L(s,t_0)
\Omega_{U_1}^{K_{\leq l}}(s,t_0)K_{l+1}(s)\Omega_{U_2}^{K_{\geq l+2}}(s,t_0)+
\int\limits_{t_0}^tds\Omega_{T_1}^{L_{\leq p}}(s,t_0)L_{p+1}(s)
\Omega_{T_2}^{L_{\geq p+2}}(s,t_0)\Omega_U^K(s,t_0).
\end{eqnarray*}
In the formula, one or several of the trees $T_1,T_2,U_1,U_2$ may be the 
trivial tree $|$.
\end{prop}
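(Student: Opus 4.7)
The plan is to reduce the product formula to the fundamental theorem of calculus: differentiate $\Omega_T^L(t,t_0)\Omega_U^K(t,t_0)$, apply the product rule, and recognize the two summands as the integrands of the right-hand side. For this to work I first need a clean integral recursion for $\Omega_T^L$ analogous to Theorem~\ref{thmOmegaT} but in the purely combinatorial setting where $P$, $Q$ and the $i$-factors are absent.

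The first step is to establish, for $T=T_1\vee T_2$ with $|T_1|=p$, the identity
\begin{equation*}
\Omega_T^L(t,t_0)\;=\;\int_{t_0}^{t}\!ds\;\Omega_{T_1}^{L_{\le p}}(s,t_0)\,L_{p+1}(s)\,\Omega_{T_2}^{L_{\ge p+2}}(s,t_0),
\end{equation*}
with the convention $\Omega_{|}^{\emptyset}\defeq 1$, so that the degenerate cases $T_1=|$ or $T_2=|$ are absorbed uniformly. This follows directly from Lemma~\ref{fundam} with the choices $J=L_{p+1}$ and integrands $L_{\le p}$, $L_{\ge p+2}$: the right-hand side becomes $\sum_\gamma\Omega_\gamma^{L}(t,t_0)$ summed over permutations $\gamma\in\calS_{|T|}$ with $\gamma(p+1)=1$, $st(\gamma(1),\dots,\gamma(p))\in S_{T_1}$ and $st(\gamma(p+2),\dots,\gamma(|T|))\in S_{T_2}$, and the characterization of $S_T$ recalled at the end of Section~\ref{permutation-tree} shows this set is exactly $S_T$.

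The second step is routine. Differentiating the recursion,
\begin{equation*}
\frac{d}{dt}\Omega_T^L(t,t_0)\;=\;\Omega_{T_1}^{L_{\le p}}(t,t_0)\,L_{p+1}(t)\,\Omega_{T_2}^{L_{\ge p+2}}(t,t_0),
\end{equation*}
and similarly for $U=U_1\vee U_2$. Applying the Leibniz rule to $\Omega_T^L(s,t_0)\Omega_U^K(s,t_0)$ and integrating from $t_0$ to $t$ yields the two summands on the right-hand side of the proposition, while the left-hand side comes from the boundary term, using $\Omega_T^L(t_0,t_0)=\Omega_U^K(t_0,t_0)=0$ (both $T$ and $U$ have at least one vertex, so the lowest-order iterated integral vanishes at $s=t_0$).

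There is essentially no hard step: the only thing to be careful about is the convention for the trivial tree. With $\Omega_{|}^{\emptyset}=1$ the integral recursion reduces to $\Omega_{\Y}^{(L_1)}(t,t_0)=\int_{t_0}^{t}L_1(s)\,ds$ when $T=\Y$, and more generally the formula in the proposition degenerates to the correct expression whenever one of the four subtrees is trivial (e.g.\ if $U_2=|$ then $\Omega_{U_2}^{K_{\ge l+2}}=1$ and $K_{l+1}=K_m$, so the first term is just $\int_{t_0}^{t}\Omega_T^L(s,t_0)\Omega_{U_1}^{K_{\le m-1}}(s,t_0)K_m(s)\,ds$, which matches the differentiation of $\Omega_U^K$ in that case). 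Thus the proof is uniform across all configurations of $T_1,T_2,U_1,U_2$.
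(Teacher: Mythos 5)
Your proof is correct and takes essentially the same route as the paper: the paper also argues by integration by parts (your differentiate--Leibniz--integrate step), identifying the derivative of $\Omega_T^L(t,t_0)$ as $\Omega_{T_1}^{L_{\le p}}(s,t_0)L_{p+1}(s)\Omega_{T_2}^{L_{\ge p+2}}(s,t_0)$, which it cites as following from the proof of theorem~\ref{thmOmegaT} rather than re-deriving it from lemma~\ref{fundam} as you do explicitly. Your spelled-out verification of that recursion and of the conventions for the trivial tree is a harmless elaboration of the same argument.
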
 
\begin{proof}
Recall the integration by parts formula.
For any integrable functions $f$ and $g$ we define
$F(t)\defeq \int_{t_0}^t \dd s f(s)$,
$G(t)\defeq \int_{t_0}^t \dd s g(s)$ and
$H(t)\defeq F(t) G(t)$.
Then
\begin{eqnarray*}
H(t) &=& \int_{t_0}^t \dd s \frac{\dd H(s)}{\dd s} 
= \int_{t_0}^t \dd s f(s) G(s) + \int_{t_0}^t \dd s F(s) g(s).
\end{eqnarray*}
Now, we use this identity with
$F(t)= \Omega_T^L(t,t_0)$. It follows from the proof of theorem~\ref{OmegaT} 
that $f(s)$ is given by
\begin{eqnarray*}
f(s) =\Omega_{T_1}^{L_{\leq p}}(s,t_0)
L_{p+1}(s)\Omega_{T_2}^{L_{\geq p+2}}(s,t_0),
\end{eqnarray*}
with a similar identity for 
$G(t)=\Omega_U^K(t,t_0)$. The proposition follows.
\end{proof}

In particular, it is a consequence of the proposition and a straightforward 
recursion argument that the linear span of the integrals 
$\Omega_T^L(t,t_0)$ is closed under products.

\subsection{Hopf algebras and Lie theory}

This result may be formalized algebraically. Let us write $\cal T$ for 
the set of formal power series with complex coefficients over the 
set of trees. Proposition~\ref{produ} enables us to define
a product on trees, denoted by $\ast$, such that
$\Omega_T^L(t,t_0)\Omega_U^K(t,t_0)=\Omega_{T\ast U}^{L\cdot K}(t,t_0)$.
This product is defined recursively by the equation~\cite{LodayRonco}
$$T\ast U \defeq (T\ast U_1)\vee U_2+T_1\vee (T_2\ast U).$$

\begin{proof}
The empty tree $\|$ is the unit for the product $\ast$. Assume that
$T\ast U$ is defined and satisfies the recursive relation
for all trees such that $|T|+ |U|< n$, and consider 
two trees $T$ and $U$ with
$|T|+ |U|= n$. The first term on the right hand side of
proposition~\ref{produ} is
$\int_{t_0}^tds\Omega_T^L(s,t_0)
\Omega_{U_1}^{K_{\leq l}}(s,t_0)K_{l+1}(s)\Omega_{U_2}^{K_{\geq
l+2}}(s,t_0)$.
By the recursive relation, we have
\begin{eqnarray*}
\Omega_T^L(s,t_0)
\Omega_{U_1}^{K_{\leq l}}(s,t_0) &=&
\Omega_{T\ast U_1}^{L\cdot K_{\leq l}}.
\end{eqnarray*}
Thus, the whole term can be written
\begin{eqnarray*}
\int_{t_0}^tds\Omega_T^L(s,t_0)
\Omega_{U_1}^{K_{\leq l}}(s,t_0)K_{l+1}(s)\Omega_{U_2}^{K_{\geq
l+2}}(s,t_0) &=& 
\Omega_{(T\ast U_1)\vee U_2}^{L\cdot K}.
\end{eqnarray*}
The second term of proposition~\ref{produ} is
treated similarly and we obtain
\begin{eqnarray*}
\Omega_T^L(t,t_0)\Omega_U^K(t,t_0) &=&
\Omega_{(T\ast U_1)\vee U_2}^{L\cdot K}
+
\Omega_{(T_1\vee(T_2\ast U)}^{L\cdot K}
\end{eqnarray*}
Therefore, the relation
$\Omega_T^L(t,t_0)\Omega_U^K(t,t_0)=\Omega_{T\ast U}^{L\cdot K}(t,t_0)$
gives us
$$T\ast U=(T\ast U_1)\vee U_2+T_1\vee (T_2\ast U).$$
\end{proof}

\begin{cor}
The product provides $\cal T$ with the structure of an 
associative algebra.
\end{cor}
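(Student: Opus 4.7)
\emph{Proof proposal.} The claim breaks into two pieces: the two-sided unit property of the empty tree $\|$ and associativity of $\ast$. The former, $\|\ast T = T\ast\| = T$, is immediate from Proposition~\ref{produ} in the degenerate case when one of the factors is trivial (matching the convention built into the recursive formula), so the substance of the proof lies in associativity.

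My main approach is to lift associativity from operator composition, using Proposition~\ref{produ} (recast as the assertion that the linear map $T\mapsto \Omega_T^L(t,t_0)$ intertwines $\ast$ with ordinary operator multiplication, i.e. $\Omega_T^L(t,t_0)\,\Omega_U^K(t,t_0) = \Omega_{T\ast U}^{L\cdot K}(t,t_0)$). Since composition of operators on $\calH$ is associative, for any three families $L,K,M$ of time-dependent operators and any three trees $T,U,V$ of orders $n,m,p$ one has
\begin{eqnarray*}
\Omega_{(T\ast U)\ast V}^{L\cdot K\cdot M}(t,t_0)
 &=& \big(\Omega_T^L(t,t_0)\,\Omega_U^K(t,t_0)\big)\,\Omega_V^M(t,t_0) \\
 &=& \Omega_T^L(t,t_0)\,\big(\Omega_U^K(t,t_0)\,\Omega_V^M(t,t_0)\big) \;=\; \Omega_{T\ast(U\ast V)}^{L\cdot K\cdot M}(t,t_0).
\end{eqnarray*}
Associativity at the level of $\mathcal{T}$ will then follow provided the families $L,K,M$ can be chosen so that the map $Z\mapsto \Omega_Z^{L\cdot K\cdot M}(t,t_0)$ is injective on the finite-dimensional span of trees of order $n+m+p$.

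The main obstacle is precisely this injectivity step. The plan is to take the entries of $L,K,M$ to be time-dependent free noncommuting indeterminates: by Lemma~\ref{NCChen} and the standard linear independence of iterated integrals of generic time-dependent free indeterminates, the family $\{\Omega_\sigma^{L\cdot K\cdot M}\}_{\sigma\in\calS_{n+m+p}}$ is linearly independent, and because the sets $S_T$ partition $\calS_{n+m+p}$ across trees of fixed order, so is $\{\Omega_T^{L\cdot K\cdot M}\}_{|T|=n+m+p}$, whence injectivity. A self-contained alternative that sidesteps the genericity argument is to prove associativity by induction on $|T|+|U|+|V|$ directly from the recursive formula $T\ast U = (T\ast U_1)\vee U_2 + T_1\vee(T_2\ast U)$, distinguishing subcases according to which, if any, of the three trees is trivial; this is essentially the route originally followed by Loday and Ronco in their treatment of the $\ast$-product on planar binary trees.
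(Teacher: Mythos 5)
Your proposal is correct and takes essentially the same route as the paper, whose proof is exactly the one-line observation that the corollary is a by-product of the associativity of operator composition together with the intertwining identity $\Omega_T^L(t,t_0)\,\Omega_U^K(t,t_0)=\Omega_{T\ast U}^{L\cdot K}(t,t_0)$ of Proposition~\ref{produ}. Your explicit injectivity step (generic operator families, plus the fact that the sets $S_T$ partition $\calS_{n+m+p}$) merely fills in a faithfulness point the paper leaves tacit, and your fallback combinatorial induction matches the paper's own subsequent remark that associativity can also be checked directly or deduced from the convolution product of permutations as in Loday--Ronco.
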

The corollary is a by-product of the associativity of the product 
of operators and of proposition~\ref{produ}. 

Of course, although the analysis of tree-shaped iterated integrals leads 
to a straightforward proof, the associativity property is a purely 
combinatorial phenomenon that originates ultimately from
the associativity of the shuffle product.
See ref.~\onlinecite{BP-09} for the connections between the noncommutative Chen 
formula and shuffle products, see also section 4 of ref.~\onlinecite{sch58} 
for Sch\"utzenberger's classical (but rarely quoted) analysis of the formal 
properties of the shuffle product -in fact, the splitting of the convolution 
product of trees reflects the classical splitting of shuffle products into 
left and right half-shuffle products that had appeared in the study of Lie 
polynomials and was first encoded combinatorially in ref.~\onlinecite{sch58}.
The associativity can also be checked directly or deduced 
from the associativity of the convolution product $\ast$ of permutations, 
since one may verify easily, using e.g. our description of the permutations 
in $S_T$ that the convolution product as introduced above is nothing but 
the restriction to $\cal T$ of the convolution product on the algebra 
${\cal S}=\prod\limits_{n\in\mathbf{N}}{\bf C} [\calS_n]$. 

This fact that the linear span of trees (often written $\bf PBT$) defines a
subalgebra of $\cal S$ for the convolution product (and even a Hopf
subalgebra, referred to as the Hopf algebra of planar binary trees, whereas
$\cal S$ is referred to as the Malvenuto-Reutenauer or Hopf algebra of free
quasi-symmetric functions in the litterature) is well-known. It was first
observed in ref.~\onlinecite{LodayRonco} and further investigated in a 
series of papers~\cite{Chapoton-00-1,Chapoton-00-2,LodayRonco2,Thibon-05}. 
There is however 
a slight subtelty here, since the embedding of $\bf PBT$ in $\cal S$ 
considered e.g. in ref.~\onlinecite{LodayRonco} is not the one we consider 
(another projection map from permutations to trees is used) so that 
$\bf PBT$ and $\cal T$, although isomorphic as algebras (the product rule 
in $\bf PBT$ is the same as the one in $\cal T$, see 
e.g. proposition~3.2 of ref.~\onlinecite{LodayRonco}, where $\bf PBT$ 
is written ${\bf C}[Y_\infty]$) do not agree as subalgebras of $\cal S$. 

However, a corollary of this isomorphism is that the structures existing 
on $\bf PBT$ carry over to the analysis of the algebraic properties of 
tree-shaped iterated integrals. The existence of a Hopf algebra structure 
seems particularly meaningful since the classical applications of the 
theory of free Lie algebras to the analysis of differential equations 
can be rewritten using the framework of Hopf algebras (see e.g. the 
accounts in refs.~\onlinecite{Reutenauer,Patras}).

The same observation holds for the direct sum of the hyperoctahedral group
algebras, that also carries naturally a Hopf algebra structure
\cite{Aguiar-04,Aguiar-08}, and which applications 
to time-dependent perturbative Hamiltonians were studied 
in ref.~\onlinecite{BP-09}.
We leave for further research the investigation of the possible 
role of these Hopf algebra structures in the context of many-body
theories.

\subsection{Permutations, trees and descents}

Another meaningful observation, along the same lines, is that the three
expansions we derived, based respectively on sequences of $P$ and $Q$ (as in
the first expression for $\Omega(t,t_0)$), on trees (third expression) and on
permutations (second expression), reflect at the analytical level the existence
of projection maps and embeddings between hypercubes (or the descent algebra),
planar binary trees,  and permutations. Following Viennot~\cite{Viennot}, these
maps have been at the origin of  modern enumerative combinatorics. We refer to
refs.~\onlinecite{LodayRonco,LodayRonco2,Chapoton-00-1,Chapoton-00-2} 
for a detailed study of these maps that emphasizes the existence of 
underlying geometrical structures that go beyond the Hopf algebraic ones.

% BibTeX users please use
% \bibliographystyle{unsrt}
% \bibliography{qed}

\begin{thebibliography}{10}

\bibitem{KuoOsnes}
T.~T.~S. Kuo and E.~Osnes.
\newblock {\em Folded-Diagram Theory of the Effective Interaction in Nuclei,
  Atoms and Molecules}, volume 364 of {\em Lecture Notes in Physics}.
\newblock Springer Verlag, Berlin, 1990.

\bibitem{Morita}
T.~Morita.
\newblock Perturbation theory for degenerate problems of many-fermion systems.
\newblock {\em Prog. Theor. Phys.}, 29:351--69, 1963.

\bibitem{Jolicard-89}
G.~Jolicard.
\newblock A recursive solution to the time dependent {S}chr{\"o}dinger equation
  using a generalized quasidegenerate {B}loch formalism.
\newblock {\em J. Chem. Phys.}, 90:2320--7, 1989.

\bibitem{Jolicard-03}
G.~Jolicard and J.~P. Killingbeck.
\newblock The {B}loch wave operator: generalizations and applications: {II.}
  {T}he time-dependent case.
\newblock {\em J. Phys. A: Math. Gen.}, 36:R411--R473, 2003.

\bibitem{Pauli}
W.~Pauli.
\newblock {\em Selected Topics in Field Quantization}.
\newblock Dover, New York, 1973.

\bibitem{Lindgren74}
I.~Lindgren.
\newblock The {Ra}yleigh-{S}chr{\"{o}}dinger perturbation and the
  linked-diagram theorem for a multi-configurational model space.
\newblock {\em J. Phys. B: At. Mol. Phys.}, 7:2441--70, 1974.

\bibitem{Kvasnicka-74}
V.~Kvasni{\v{c}}ka.
\newblock Construction of model {H}amiltonians in framework of
  {R}ayleigh-{S}chr{\"o}dinger perturbation theory.
\newblock {\em Czechoslovak J. Phys.}, 24:605--15, 1974.

\bibitem{Gurau}
R.~Gurau, J.~Magnen, and V.~Rivasseau.
\newblock Tree quantum field theory.
\newblock {\em Ann. Henri Poincar{\'e}}, 10:867--91, 2009.

\bibitem{Maassen}
H.~Maassen and D.~Botvich.
\newblock A {G}alton-{W}atson estimate for {D}yson series.
\newblock {\em Ann. Henri Poincar{\'{e}}}, 10:1141--58, 2009.

\bibitem{quadratic}
When $H_0$ is quadratic, the time-dependence of the annihilation operators is
  simply $a_n \ee^{-\imath\epsilon_nt}$, where $\epsilon_n$ is the energy of
  the $n$th eigenvalue of $H_0$ (see, e.g. ref.~\onlinecite{Gross}). When $H_0$
  is not quadratic, the time-dependent annihilation operators and the free
  Feynman propagator cannot be calculated explicitly.

\bibitem{Bulaevskii}
L.~N. Bulaevskii.
\newblock Quasihomopolar electron levels in crystals and molecules.
\newblock {\em J. Exp. Theor. Phys.}, 24:154--60, 1967.

\bibitem{DelSole-84}
R.~DelSole and E.~Fiorino.
\newblock Macroscopic dielectric tensor at crystal surfaces.
\newblock {\em Phys. Rev. B}, 29:4631--45, 1984.

\bibitem{Foata}
D.~Foata and M.-P. Sch{\"u}tzenberger.
\newblock Nombres d'{E}uler et permutations alternantes, 1971.
\newblock Unpublished manuscript available at
  {\tt{www.emis.de/journals/SLC/books/foaschuetz1.pdf}}.

\bibitem{Viennot-81}
G.~Viennot.
\newblock Interpr{\'e}tations combinatoires des nombres d'{E}uler et de
  {G}enocchi.
\newblock {\em Semin. Theor. Nombres (Univ. Bordeaux)}, 11:1--94, 1981.

\bibitem{Leroux}
P.~Leroux, F.~Bergeron, and G.~Labelle.
\newblock {\em Combinatorial Species and Tree-like Structures}, volume~67 of
  {\em Encyclopedia of Mathematics and its Applications}.
\newblock Cambridge University Press, Cambridge, 1998.

\bibitem{relativelybounded}
This assumption can be relaxed by using the interaction picture (see
  ref.~\onlinecite{ReedSimonII}, p.~283 and \onlinecite{Kato}, p.~557). In the
  adiabatic limit, we recover the Rayleigh-Schr\"odinger series, which is valid
  when the perturbation $V$ is relatively-bounded with respect to $H_0$ (see
  Chapter XII of ref.~\onlinecite{ReedSimonIV}).

\bibitem{ReedSimonII}
M.~Reed and B.~Simon.
\newblock {\em Methods of Modern Mathematical Physics. II {F}ourier Analysis,
  Self-adjointness}.
\newblock Academic Press, New York, 1975.

\bibitem{Bloch}
C.~Bloch.
\newblock Sur le d{\'e}termination de l'{\'e}tat fondamental d'un syst{\`e}me
  de particules.
\newblock {\em Nucl. Phys.}, 7:451--8, 1958.

\bibitem{Michels2}
M.~A.~J. Michels and L.~G. Suttorp.
\newblock Diagrammatic analysis of adiabatic and time-independent perturbation
  theory for degenerate energy levels.
\newblock {\em Physica A}, 58:385--8, 1978.

\bibitem{Dmitriev3}
Yu. Dmitriev and O.~V. Solnyshkina.
\newblock Adiabatic perturbation theory for the degenerated case. {III}.
  {A}diabatic formalism by {G}ell-{M}ann and {L}ow with an arbitrary switching
  function.
\newblock {\em Int. J. Quant. Chem.}, 33:543--553, 1988.

\bibitem{Lindgren3}
I.~Lindgren, S.~Salomonson, and B.~{\AA}s{\'e}n.
\newblock The covariant-evolution-operator method for bound-state {QED}.
\newblock {\em Phys. Repts.}, 389:161--261, 2004.

\bibitem{BP-09}
Ch. Brouder and F.~Patras.
\newblock Hyperoctahedral {C}hen calculus for effective {H}amiltonians.
\newblock {\em J. Algebra}, 322:4105--20, 2009.

\bibitem{Viennot}
G.~Viennot.
\newblock Permutations ayant une forme donn{\'e}e.
\newblock {\em Discrete Math.}, 26:279--84, 1979.

\bibitem{Frantz}
L.~M. Frantz and R.~L. Mills.
\newblock Many-body basis for the optical model.
\newblock {\em Nuclear Phys.}, 15:16--32, 1960.

\bibitem{Bethe-63}
H.~A. Bethe, B.~H. Brandow, and A.~G. Petscheck.
\newblock Reference spectrum method for nuclear matter.
\newblock {\em Phys. Rev.}, 129:225--64, 1963.

\bibitem{Brandow}
B.~H. Brandow.
\newblock Linked-cluster expansions for the nuclear many-body problem.
\newblock {\em Rev. Mod. Phys.}, 39:771--828, 1967.

\bibitem{Olszewski-04}
S.~Olszewski.
\newblock Combinatorial analysis of the {R}ayleigh-{S}chr{\"o}dinger
  perturbation theory based on a circular scale of time.
\newblock {\em Inter. J. Quant. Chem.}, 97:784--801, 2004.

\bibitem{Simion-00}
R.~Simion.
\newblock Noncrossing partitions.
\newblock {\em Discrete Math.}, 217:367--409, 2000.

\bibitem{Panayotopoulos}
A.~Panayotopoulos and S.~Sapounakis.
\newblock On binary trees and permutations.
\newblock {\em Math. Sci. Hum.}, 117:61--70, 1992.

\bibitem{LodayRonco}
J.-L. Loday and M.~O. Ronco.
\newblock Hopf algebra of the planar binary trees.
\newblock {\em Adv. Math.}, 139:293--309, 1998.

\bibitem{StanleyII}
R.~P. Stanley.
\newblock {\em Enumerative Combinatorics. {V}olume {II}}.
\newblock Cambridge University Press, Cambridge, 1986.

\bibitem{Rawlings}
D.~Rawlings.
\newblock The {ABC's} of classical enumeration.
\newblock {\em Ann. Sci. Math. Quebec}, 10:207--35, 1986.

\bibitem{StanleyI}
R.~P. Stanley.
\newblock {\em Enumerative Combinatorics. {V}olume {I}}.
\newblock Cambridge University Press, Cambridge, 1986.

\bibitem{LodayRonco2}
J.-L. Loday and M.O. Ronco.
\newblock Order structure on the algebra of permutations and planar binary
  trees.
\newblock {\em J. Algebraic Combin.}, 15:253--70, 2002.

\bibitem{Stanley}
R.P. Stanley.
\newblock Theory and applications of plane partitions: {P}art 1.
\newblock {\em Stud. Appl. Math.}, 50:167--88, 1971.

\bibitem{Chapoton-00-1}
F.~Chapoton.
\newblock Alg{\`e}bres de {H}opf des permuto{\`e}dres, associa{\`e}dres et
  hypercubes.
\newblock {\em Adv. Math.}, 150:264--75, 2000.

\bibitem{Chapoton-00-2}
F.~Chapoton.
\newblock Big{\`e}bres diff{\'e}rentielles gradu{\'e}es associ{\'e}es aux
  permuto{\`e}dres, associa{\`e}dres et hypercubes.
\newblock {\em Ann. Inst. Fourier}, 50:1127--53, 2000.

\bibitem{Kato}
T.~Kato.
\newblock {\em Perturbation Theory for Linear Operators}.
\newblock Springer Verlag, Berlin, second edition, 1995.

\bibitem{BPS}
Ch. Brouder, G.~Panati, and G.~Stoltz.
\newblock Gell-{M}ann and {L}ow formula for degenerate unperturbed states.
\newblock {\em Ann. Henri Poincar{\'e}}, 11:1285--309, 2010.

\bibitem{BPS-PRL}
Ch. Brouder, G.~Panati, and G.~Stoltz.
\newblock Many-body {G}reen function of degenerate systems.
\newblock {\em Phys. Rev. Lett.}, 103:230401, 2009.

\bibitem{Goldstone}
J.~Goldstone.
\newblock Derivation of the {B}rueckner many-body theory.
\newblock {\em Proc. Roy. Soc. London}, A239:267--79, 1957.

\bibitem{Gross}
E.~K.~U. Gross, E.~Runge, and O.~Heinonen.
\newblock {\em Many-Particle Theory}.
\newblock Adam Hilger, Bristol, 1991.

\bibitem{LindgrenMorrison}
I.~Lindgren and J.~Morrison.
\newblock {\em Atomic Many-Body Theory}.
\newblock Springer, Berlin, second edition, 1986.

\bibitem{Bloch-58}
C.~Bloch.
\newblock Sur la th{\'e}orie des perturbations des {\'e}tats li{\'e}s.
\newblock {\em Nucl. Phys.}, 6:329--47, 1958.

\bibitem{Postnikov-09}
A.~Postnikov and R.~P. Stanley.
\newblock Chains in the {B}ruhat order.
\newblock {\em J. Algebraic Combin.}, 29:133--74, 2009.

\bibitem{Feil-05}
T.~Feil, K.~Hutson, and R.~M. Kretchmar.
\newblock Tree traversal and permutations.
\newblock {\em Congressus Numerantium}, 172:201--21, 2005.

\bibitem{BrouderQG}
Ch. Brouder, B.~Fauser, A.~Frabetti, and R.~Oeckl.
\newblock Quantum field theory and {H}opf algebra cohomology.
\newblock {\em J. Phys. A: Math. Gen.}, 37:5895--927, 2004.

\bibitem{BrouderMN}
Ch. Brouder.
\newblock Quantum field theory meets {H}opf algebra.
\newblock {\em Math. Nachr.}, 282:1664--90, 2009.

\bibitem{ReedSimonIV}
M.~Reed and B.~Simon.
\newblock {\em Methods of Modern Mathematical Physics. IV Analysis of
  Operators}.
\newblock Academic Press, New York, 1978.

\bibitem{Agrachev}
A.~A. Agrachev and R.~V. Gamkrelidze.
\newblock The shuffle product and symmetric groups.
\newblock In K.~D. Elworthy, W.~N. Everitt, and E.~B. Lee, editors, {\em
  Differential Equations, Dynamical Systems and Control Science}, volume 152 of
  {\em Lecture Notes in Pure and Applied Mathematics}, pages 365--82. Marcel
  Dekker, Inc., New York, 1994.

\bibitem{sch58}
M.~P. Sch\"utzenberger.
\newblock Sur une propri\'et\'e combinatoire des alg\`ebres de lie libres
  pouvant \^etre utilis\'ee dans un probl\`eme de math\'ematiques appliqu\'ees.
\newblock {\em S\'eminaire Dubreil--Jacotin Pisot (Alg\`ebre et th\'eorie des
  nombres), Paris}, 1958/59.

\bibitem{Thibon-05}
J.-Y. Thibon, F.~Hivert, and J.-C. Novelli.
\newblock The algebra of binary search trees.
\newblock {\em Theor. Comput. Sci.}, 339:129--65, 2005.

\bibitem{Reutenauer}
C.~Reutenauer.
\newblock {\em Free Lie Algebras}.
\newblock Oxford University Press, Oxford, 1997.

\bibitem{Patras}
F.~Patras.
\newblock L'alg\`ebre des descentes d'une big\`ebre gradu\'ee.
\newblock {\em J. Alg.}, 170:547--66, 1994.

\bibitem{Aguiar-04}
M.~Aguiar, N.~Bergeron, and K.~Nyman.
\newblock The peak algebra and the descent algebras of type {B} and {D}.
\newblock {\em Trans. Amer. Math. Soc.}, 356:2781--824, 2004.

\bibitem{Aguiar-08}
M.~Aguiar, J.-C. Novelli, and J.-Y. Thibon.
\newblock Unital versions of the higher order peak algebras, 2008.
\newblock arXiv:0810.4634v1.

\end{thebibliography}
%
 
% end of file template.tex

\end{document}